\newcommand{\U}{\mathcal{U}}
\newcommand{\M}{\mathcal{M}}
\newcommand{\X}{\mathcal{S}}
\renewcommand{\d}{\mathfrak{d}_{\textsc{J}_1}}
\newcommand{\N}{\mathbb{N}}
\newcommand{\E}{\mathcal{E}}
\renewcommand{\SS}{\mathbb{S}}
\newcommand{\D}{\mathcal{D}}
\newcommand{\B}{\mathcal{B}}
\newcommand{\R}{\mathbb{R}}
\newcommand{\I}{1{\hskip -2.5 pt}\hbox{I}}
\newcommand{\II}{\mathbf{I}}
\newcommand{\ACKNO}[1]{\noindent\textbf{Acknowledgments.} #1}
\newcommand{\DA}[1]{\noindent\textbf{Data availability.} #1}
\newcommand{\COI}[1]{\noindent\textbf{Conflicts of Interest.} #1}
\theoremstyle{plain}
\newtheorem{Thm}{Theorem}[section]
\newtheorem{Lem}[Thm]{Lemma}
\newtheorem{Prop}[Thm]{Proposition}
\newtheorem*{Prop*}{Proposition}
\newtheorem{Cor}[Thm]{Corollary}
\theoremstyle{definition}
\newtheorem{Def}[Thm]{Definition}
\newtheorem{Eg}[Thm]{Example}
\theoremstyle{remark}
\newtheorem{Rem}[Thm]{Remark}
\def \follmer {F\"{o}llmer }
\def \ito {It\^{o} }
\def \cadlag {c\`adl\`ag }
\def \closed {generic} 
\def\mint{\mathop{\,\rlap{-}\!\!\int}\nolimits}
\title{Model-free portfolio allocation in continuous-time}
\author{Henry CHIU \footnote{School of Mathematics, University of Birmingham.
    h.chiu@bham.ac.uk}
  }
\date{}
\begin{document}

\maketitle

\begin{abstract}
We present a non-probabilistic, path-by-path framework for studying path-dependent (i.e., where weight is a functional of time and historical time-series), long-only portfolio allocation in continuous-time based on \cite{CC3}, where the fundamental concept of self-financing was introduced, independent of any integration theory. In this article, we extend this concept to a portfolio allocation strategy and characterize it by a path-dependent partial differential equation. We derive the general explicit solution that describes the evolution of wealth in generic markets, including price paths that may not evolve continuously or exhibit variation of any order. Explicit solution examples are provided.

As an application of our continuous-time, path-dependent framework, we extend an aggregating algorithm of Vovk \cite{VV} and the universal algorithm of Cover \cite{TC} to continuous-time algorithms that combine multiple strategies into a single strategy. These continuous-time (meta) algorithms take multiple strategies as input (which may themselves be generated by other algorithms) and track the wealth generated by the best individual strategy and the best convex combination of strategies, with tracking error bounds in log wealth of order $O(1)$ and $O(\ln t)$, respectively. This work provides the first extension of Cover's theorem \cite[Thm 6.1]{TC} to a continuous-time, model-free setting, recovering his celebrated error bound.
\end{abstract}

\noindent\textbf{Keywords:}
Model-free; path-dependent portfolio allocation; continuous-time; universal portfolio.

\tableofcontents
\newpage
\section{Introduction}\noindent

Portfolio allocation in continuous-time was developed using probability concepts \cite[Merton (1969)]{RM}. We also reference here several seminal works: the introduction of continuous-time Kelly's criterion by Thorp \cite{ET}, Cover's universal portfolio by Jamshidian \cite{FJ} (see also \cite[Ishijima (2002)]{HI} for a connection between Kelly's criterion and Cover's portfolio) and stochastic portfolio theory of Fernholz \cite{ERF}. 

In these works and much of the subsequent research, it is assumed that the evolution of stock prices follows a probabilistic model (or a collection of models). These models produce statements or results that are only valid \emph {almost surely}. For example, a relative arbitrage strategy that almost surely outperforms a benchmark. In other words, it is \emph{analytically possible} for such a strategy to fall short of a benchmark, should a scenario of probability zero come to realisation. For example, in a diffusion model, such scenarios may include, but not limited to, price paths that exhibit discontinuities. Consequently, these model-based approaches may be viewed as less reliable by stakeholders.

Over the last decade, there has been a growing body of research focused on developing model-free or model-agnostic, path-by-path approaches to continuous-time portfolio problems. These approaches do not rely on probabilistic models but instead only use quantities and time-series data that are observable and computable path-by-path. See, for example \cite{AS1}, \cite{AS2}, \cite{CSW}, \cite{KK}, \cite{All} and \cite{HS}, which focus on continuous price paths of finite quadratic variations. These approaches depend on the "choice" of a path integration concept to define the value of a "self-financing" portfolio. While this method may be intuitive, it may also be imprecise, a concern that Lyons had already raised in an earlier work \cite[\S 2.2]{TL}. As a demonstration, we present in Example \ref{Eg:counter} an integrand whose left-Riemann sums exist (i.e. its path integral may be defined) but does not constitute a self-financing trading strategy \cite{CC3}.

In this article, we present a model-free, path-by-path approach for studying path-dependent, long-only portfolio allocation in continuous-time, based on the recent framework introduced in \cite{CC3}. Unlike other approaches, our definition of a self-financing trading strategy does not rely on a specific integration theory or the existence of finite quadratic variation. We first extend the concept of self-financing to a portfolio allocation strategy (a functional taking values in a simplex) and associate this strategy with a path-dependent partial differential equation (PPDE). We demonstrate that a portfolio allocation strategy is self-financing if and only if its associated PPDE admits a solution, which characterizes the evolution of wealth associated with the strategy. We derive the general explicit solution for any generic domain, including price paths that do not evolve continuously or exhibit variation of any order, and provide examples of concrete solutions. In particular, we show that relative arbitrage does not exist on any generic domain; that is, if two portfolio allocation strategies are self-financing, neither strategy's wealth may dominate the other's in finite time.

As an application of our continuous-time, path-dependent framework, we extend two discrete-time machine learning algorithms to continuous-time meta-learning algorithms. These algorithms take multiple strategies as input (which may themselves be generated by other algorithms) and track the wealth generated by the best individual strategy and the best convex combination of strategies, respectively. Specifically, we show that their tracking errors in log wealth are bounded by $O(1)$ and $O(\ln t)$, respectively.  

The first algorithm operates on any generic domain and is based on an  aggregating algorithm of Vovk \cite{VV} (also known as Exponential weights or the Laissez-faire algorithm in the investment context \cite[\S 1.2]{YK}), which belongs to the class of online learning with expert advice algorithms. Using this algorithm, we construct in Example \ref{eg:insured} an explicit \emph{path-dependent} allocation strategy that does not necessarily evolve continuously or exhibit variation of any order. The second algorithm operates on paths with finite quadratic variations with bounded relative jumps and is based on Cover's original algorithm \cite{TC}, which only applies to constant rebalanced portfolios. We extend this algorithm to the convex hull generated by multiple strategies and prove Cover's main theorem \cite[Thm. 6.1]{TC} in this context.

Since the seminal publication of Cover's \emph{universal portfolios} as the lead article in the inaugural issue of \emph{Mathematical Finance}, Cover's portfolio has been studied and generalised in various directions in continuous-time (see, for instance, \cite{FJ}, \cite{HI}, \cite{CSW}, and \cite{All}). Despite these efforts, Cover's main theorem regarding his celebrated error bound \cite[Thm. 6.1]{TC}, which was formulated in a non-probabilistic framework, remains unproven in the continuous-time setting without probability. We hope this article addresses this gap.

\subsection{Literature positioning}\label{sec:RL}

\paragraph{Model-free approaches}
To clarify our positioning within model-free approaches, we highlight three fundamental distinctions. First, whereas \follmer\!'s calculus and rough path theory are primarily pathwise integration frameworks, causal functional calculus \cite{CC2} provides a differential calculus on path space. Crucially, its differentiability class encompasses pathwise integrals themselves---notably \follmer\!'s pathwise integral, which lies outside the corresponding differentiability class of functional \ito calculus \cite{CF}. Second, this foundation operates on generic domains, accommodating price paths regardless of variation index and avoiding unverifiable constraints, such as the algebraic postulates of Chen's relations or the uniform convergence of iterated cross-sums.  Third, we do not define the concept of self-financing via an integration theory, which Example \ref{Eg:counter} shows can lead to fundamental inconsistencies. Instead, we use purely operational, local functional conditions, ensuring the value functional is exactly the PnL. Finally, our framework includes no-arbitrage results.

\paragraph{Cover's universal portfolios}
Cover's (1991) original theorem \cite[Thm. 6.1]{TC}, which established an error bound of $O(\ln t)$ in log-wealths, applies to constant rebalanced portfolios in discrete time without probability. In continuous time, Cover's theorem was first proven for continuous  paths by Jamshidian (1992) \cite[Thm. 2.4]{FJ} within a probabilistic framework. 

The extension of Cover's universal portfolio to continuous paths without probability was first studied by Cuchiero, Schachermayer \& Wong (2019) \cite{CSW}, who considered tracking a non-parametric family of Markovian allocations using \follmer\!'s calculus. Allan, Cuchiero, Liu, \& Pr\"{o}mel (2023) \cite{All} extended this approach to more general allocations using rough path theory. Subsequent to our initial preprint, Han \& Schied (2025) \cite{HS} used \follmer\!'s calculus to average over constant rebalanced portfolios.

However, in these existing continuous-time, model-free frameworks, the convergence results remain restricted or unaddressed. For the Markovian case, convergence was only established in the first-order sense i.e., \(t^{-1}\ln(W_t^*/\widehat W_t)\to0\), leaving the exact rate of convergence as an open question \cite[\S 1.1.3]{CSW}. Furthermore, this first-order convergence does not extend to more general allocations using the rough path approach; the asymptotic error bound in log-wealths is $O(t)$ for the Markovian case \cite[\S 4.12--4.14]{All}. This stands in sharp contrast to Cover's classical $O(\ln t)$ bound, which guarantees the rate of convergence \(t^{-1}\ln(W_t^*/\widehat W_t)=O(\ln t)/t\to0\). Furthermore, while \cite{HS} provides an elegant formulation for constant rebalanced portfolios, it leaves the tracking error unaddressed. 

Our results provide the first extension of Cover's theorem to continuous-time without probability. By successfully recovering the $O(\ln t)$ bound, we address an open question in the literature \cite[\S 1.1.3]{CSW} and resolve the current $O(t)$ limitations. Our results apply further to path-dependent allocations and \cadlag price paths. For a detailed comparison, we refer to the following Table 1.

\begin{table}[htbp]
\centering
\small
\setlength{\tabcolsep}{5pt}
\begin{tabular}{lcccccc}
\toprule
\textbf{Article} & \textbf{CT} & \textbf{Model-free} & \textbf{Allocation} & \textbf{\cadlag{}} & \textbf{Convergence} & \textbf{Error} \\
\midrule
Cover (1991) & $\times$ & $\checkmark$ & Constants (CRP) & $\times$ & $\checkmark$ & $O(\ln t)$ \\
Jamshidian (1992) & $\checkmark$ & $\times$ & Constants (CRP) & $\times$ & $\checkmark$ & $O(\ln t)$ \\
Cuchiero et al. (2019) & $\checkmark$ & $\checkmark$ & Markovian & $\times$ & $\checkmark$ & $O(t)$ \\
Allan et al. (2023) & $\checkmark$ & $\checkmark$ & Controlled path & $\times$ & Markovian & $O(t)$ \\
Han \& Schied (2025) & $\checkmark$ & $\checkmark$ & Constants (CRP) & $\times$ & $\times$ & $\times$ \\
\midrule
This Article & $\checkmark$ & $\checkmark$ & Path-dependent & $\checkmark$ & $\checkmark$ & $O(\ln t)$ \\
\bottomrule
\end{tabular}
\caption{Comparison of continuous-time universal portfolio results.}
\label{tab:universal_portfolio_comparison}
\end{table}

\section{Notations}\label{sec:not}\noindent

	Denote $D$ to be the Skorokhod space of $\R^{m}$-valued positive \cadlag functions \begin{eqnarray*}t\longmapsto x(t):=(x_1(t),\ldots,x_{m}(t))'\end{eqnarray*} on $\R_{+}:=[0,\infty)$ and for $p\in 2\mathbb{N}$, we denote $D(\R_{+},\R^{m}\otimes^p)$ the Skorokhod space of $\R^{m}\otimes^p$-valued \cadlag functions on $\R_{+}:=[0,\infty)$. Denote $C$, $\SS$, $BV$ respectively, the subsets of continuous functions, step functions, locally bounded variation functions in $D$. $x(0-):=x_0>0$ and $\Delta x(t):=x(t)-x(t-)$. The path $x\in D$ stopped at $(t,x(t))$ (resp. $(t,x(t-))$)\begin{eqnarray*}s\longmapsto x(s\wedge t)
\end{eqnarray*}shall be denoted by $x_t\in D$ (resp. $x_{t-}:=x_t-\Delta x(t)\I_{[t,\infty)}\in D$). We write $(D,\d)$ when $D$ is equipped with a complete metric $\d$ which induces the Skorokhod (a.k.a. \textsc{J}$_1$) topology. 

		Let $\pi:=(\pi_n)_{n\geq 1}$ be a fixed sequence of partitions $\pi_n=(t^{n}_0,...,t^{n}_{k_n})$ of $[0,\infty)$ into intervals $0=t^{n}_0<...<t^{n}_{k_n}<\infty$; $t^{n}_{k_n}\uparrow\infty$ with vanishing mesh $|\pi_n| \downarrow 0$ on compacts. By convention, $\max(\emptyset\cap\pi_n):=0$, $\min(\emptyset\cap\pi_n):=t^{n}_{k_n}$. Since $\pi$ is fixed, we will avoid superscripting $\pi$. 
		
   For any $p\in 2\mathbb{N}$, we say that $x\in D$ has finite $p$-th order variation $[x]_{p}$ if \begin{eqnarray*}\label{eq:qv}
   \sum_{\pi_n\ni t_i\leq t}\left(x({t_{i+1}})-x({t_{i}})\right)^{\otimes p}
    \end{eqnarray*}converges to $[x]_{p}$ in the Skorokhod J$_1$ topology in $D(\R_{+},\R^{m}\otimes^p)$, where $\otimes^p$ denotes the $p$-fold tensor product.  In light of \cite{CC}, we remark that in the special case $p=2$, this definition is equivalent to that of \follmer \cite{HF}. We denote $V_{p}$ the set of \cadlag paths of finite $p$-th order variations and write $QV:=V_2$, \begin{eqnarray*}t'_n:=\max\{t_i<t|t_i\in\pi_n\},\label{eq:t_n}\end{eqnarray*}and the following piecewise constant approximations of $x$ by

\begin{eqnarray}
x^{n}:=\sum_{t_i\in\pi_n}x(t_{i+1})\I_{[t_i,t_{i+1})}.\label{eq:x_n}
\end{eqnarray}We let $\Omega\subset D$ be \emph{\closed}{} (\cite[Def. 3.1]{CC3}) and define our \emph{domain} as\begin{eqnarray*}\Lambda:=\{(t,x_t)|t\in\R_{+}, x\in \Omega\}.\end{eqnarray*}We denote $\Lambda_{+}:=\{(t,x_t)|t\in(0,\infty), x\in \Omega\}.$ For a functional $F$ on $\Lambda$, we write $F_0:=F(0,x_0)$. For real-valued matrices of equal dimension, we write $\langle\cdot,\cdot\rangle$ to denote the Frobenius inner product and $|\cdot|$ to denote the Frobenius norm. If $f$ (resp. $g$) are  $\R^{m\times m}$-valued functions on $[0,\infty)$, we write\begin{eqnarray*}
		\int_{0}^{t} fdg:=\sum_{i,j}\int_{0}^{t}f_{i,j}(s-)dg_{i,j}(s)
		\end{eqnarray*}whenever the RHS makes sense. For two vectors $v$, $w>0$ of equal dimension $m$, we write \begin{eqnarray*}\frac{v}{w}:=(v_1/w_1,\ldots,v_m/w_m)'\end{eqnarray*} for the component-wise division.

\section{Foundation}

To ensure this article is self-contained, we recite key definitions and results from causal functional calculus and model-free finance, as presented in \cite[\S 3, \S 4]{CC3}, since they form the foundation for the subsequent sections.

The new additions to this article are: We present in Example \ref{Eg:counter} of an integrand whose left-Riemann sums exist (i.e., its path integral may be defined) but does not constitute a self-financing trading strategy. We introduce  relative arbitrage in Def. \ref{def:RA} and show that it does not exist on any generic domain.

\subsection{Causal functional calculus}\label{sec:calculus}\noindent 
    
\begin{Def}[Generic scenarios]\label{def:closed}
A non-empty subset $\Omega\subset D$ is called \emph{\closed} if $\Omega$ satisfies the following  closure properties under operations: (we recall (\ref{eq:x_n}) for the definition of $x^n$) 
	
	\begin{itemize}
	
	\item[(i)] For every $x\in \Omega$, $T> 0$, $\exists N(T)\in\N$;  $x_{T}^n \in \Omega ,\quad\forall n\geq N(T)$.
	
	\item[(ii)] For every $x\in\Omega, t\geq 0$, $\exists$ convex neighbourhood $\Delta x(t)\in \U$ of $0$; \begin{eqnarray}\label{eq:nbr}x_{t-}+e\I_{[t,\infty)}\in \Omega,\quad \forall e\in \U.\end{eqnarray}

    \end{itemize}
	\end{Def}

\begin{Rem}[Generic scenarios]\label{rem:closed}
    The motivation underlying generic scenarios is rooted in model uncertainty and robustness, ensuring that a strategy's PnL remains well-defined across an envelope of market deviations. 
    
    If a scenario $x \in \Omega$ is considered, its operational manifestations---namely, fine piecewise constant approximations (tick observations) and infinitesimal vertical perturbations (microstructure noise)---must be accommodated. This ensures the PnL remains calculable under these path variations rather than just a single speculated scenario.
\end{Rem}
	
	\begin{Eg}\label{eg:closed}Examples of \closed\  subsets include
	$\SS$, $BV$, $D$ and $V_{p}$  for $p\in 2\mathbb{N}$. Generic subsets are closed under finite intersections.  All subsets of $C$ are not \closed. For $-1<a<0<b$, \[\Omega:=\left\{x\in D \left\vert  a<\frac{\Delta x(t)}{x(t-)}< b\right., \forall t \right\}
\]is generic.
	\end{Eg}

    \begin{proof}
    To prove the last statement, suppose (i) does not hold given $x\in \Omega$, $T> 0$, then one may extract an increasing sub-sequence $(n_k)_{k}$ and a sequence of times $t_k\in \pi_{n_k}$; $t_k\rightarrow t\in[0,T]$ and wlog:  \[
\frac{\Delta x^{n_k}(t_{k})}{x^{n_k}(t_{k}-)}=\frac{x(t_{k+1})-x(t_{k})}{x(t_{k})}\geq b.
\]Since $x$ is \cadlag and $|\pi_{n_{k}}|\downarrow 0$ on $[0,T]$, it follows $\lim_k\frac{\Delta x^{n_k}(t_{k})}{x^{n_k}(t_{k}-)}=\frac{\Delta x(t)}{x(t-)}<b$, i.e. we have arrived at a contradiction. For (ii), we put $\U:=\{e|x_i(t-)a<e_i<x_i(t-)b\}$.
    \end{proof}

\begin{Def}[Strictly causal functionals]\label{def:causal} Let $F: \Lambda \to \mathbb{R}$   and denote $F_{-}(t,x_t)= F(t,x_{t-})$. $F$ is called \emph{strictly causal} if $F=F_{-}$.
\end{Def}

	We associate with the sequence of partitions $\pi$ a topology on the space $\Lambda$ of \cadlag paths called the $\pi$-topology, introduced in \cite{CC2}:
\begin{Def}[Continuous functionals]\label{prop:pi}
We denote by $C(\Lambda)$ the set of maps $F:\Lambda\to \mathbb{R}$ which satisfy 
\begin{alignat*}{3}
1.&(a) \lim_{s\uparrow t; s\leq t}F(s,x_{s-})=F(t,x_{t-}),\\ 
  &(b) \lim_{s\uparrow t; s<t}F(s,x_{s})=F(t,x_{t-}),\\
  &(c)\hspace{1mm} t_n\longrightarrow t; t_n\leq t'_n \Longrightarrow F(t_n,x^{n}_{t_n-})\longrightarrow F(t,x_{t-}),\\
  &(d)\hspace{1mm} t_n\longrightarrow t; t_n<t'_n \Longrightarrow F(t_n,x^{n}_{t_n})\longrightarrow F(t,x_{t-}),\\
\newline\\
2.&(a) \lim_{s\downarrow t; s\geq t}F(s,x_{s})=F(t,x_{t}),\\
  &(b) \lim_{s\downarrow t; s>t}F(s,x_{s-})=F(t,x_{t}),\\
  &(c)\hspace{1mm} t_n\longrightarrow t; t_n\geq t'_n\Longrightarrow F(t_n,x^{n}_{t_n})\longrightarrow F(t,x_t),\\
  &(d)\hspace{1mm} t_n\longrightarrow t; t_n>t'_n\Longrightarrow F(t_n,x^{n}_{t_n-})\longrightarrow F(t,x_{t}),\\
\end{alignat*}for all $(t,x_t)\in\Lambda$. 
A functional is called \emph{left (resp. right) continuous} if it satisfies 1.(a)-(d) (resp. 2.(a)-(d)). \end{Def} 

\begin{Def}[Regulated functionals]\label{def:version} A functional $F: \Lambda \to \mathbb{R}$ is \emph{regulated} if there exists $\widetilde{F}\in C(\Lambda)$ such that $\widetilde{F}_{-}=F_{-}$. The \emph{continuous version} $\widetilde{F}$ is unique by Def.~\ref{prop:pi}.2(b).
\end{Def}

\begin{Eg}
Let $f$ be a continuous function, then $F(t,x_t):=f(x(t-))$ is regulated and the continuous version is $\widetilde{F}=f(x(t))$.
\end{Eg}

\begin{Rem}\label{rem:regulated}
Since $C(\Lambda)$ is an algebra, we remark the set of regulated functionals forms an algebra.
\end{Rem}

\begin{Def}[Horizontal differentiability]\label{def:dt}
$F:\Lambda\longmapsto \R$ is called \emph{differentiable in time} if \begin{eqnarray*}
\D F(t,x_t):=\lim_{h\downarrow 0}\frac{F(t+h,x_t)-F(t,x_t)}{h} 
\end{eqnarray*} exists $\forall (t,x_t)\in\Lambda$.\end{Def}

\begin{Def}[Vertical differentiability]\label{def:dx}
$F:\Lambda\longmapsto \R$ is called \emph{vertically differentiable} if for every $(t,x_t)\in\Lambda$, the map 
\begin{eqnarray*}e\longmapsto F\left(t,x_t+e\I_{[t,\infty)}\right)\end{eqnarray*} is differentiable at $0$. We define $\nabla_{x}F(t,x_t):=(\nabla_{x_1}F(t,x_t),\ldots,\nabla_{x_m}F(t,x_t))'$;
\begin{eqnarray*}
\nabla_{x_i}F(t,x_t):=\lim_{\epsilon\rightarrow 0}\frac{F\left(t,x_t+\epsilon\mathbf{e}_i\I_{[t,\infty)}\right)-F(t,x_t)}{\epsilon}.
\end{eqnarray*}
\end{Def}

\begin{Def}[Differentiable]
A functional is called \emph{differentiable} if it is horizontally and  vertically differentiable.
\end{Def}

\begin{Rem}
All definitions above are extended to multidimensional functions on $\Lambda$ whose components satisfy the respective conditions.
\end{Rem}

\begin{Lem}\label{Lem:causal}
A function on $\Lambda$ is strictly causal if and only if it is differentiable in space with vanishing derivative.
\end{Lem}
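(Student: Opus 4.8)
The plan is to prove both implications by examining, at a fixed time $t$, the one-parameter family of vertical perturbations and by exploiting the elementary but crucial identity that such a perturbation does not alter the left-limit stopped path. Fix $(t,x_t)\in\Lambda$ and let $\U$ be the convex neighbourhood of $0$ furnished by Definition~\ref{def:closed}(ii), so that $x_{t-}+e\I_{[t,\infty)}\in\Omega$ for every $e\in\U$ and $\Delta x(t)\in\U$. Writing $z^{(e)}:=x_{t-}+e\I_{[t,\infty)}$, the path $z^{(e)}$ is constant on $[t,\infty)$, so $z^{(e)}_t=z^{(e)}$; moreover its left limit at $t$ equals $x(t-)$ for \emph{every} $e$, whence $z^{(e)}_{t-}=x_{t-}$ independently of $e$. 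Finally $z^{(0)}=x_{t-}$ and $z^{(\Delta x(t))}=x_t$.

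For the forward implication, suppose $F$ is strictly causal. Applying $F=F_{-}$ to $z^{(e)}$ at time $t$ gives $F(t,z^{(e)}_t)=F(t,z^{(e)}_{t-})=F(t,x_{t-})$, which is constant in $e$. Since $z^{(e)}_t=x_{t-}+e\I_{[t,\infty)}$, the map $g(e):=F(t,x_{t-}+e\I_{[t,\infty)})$ is constant on $\U$; it is therefore (trivially) differentiable there with vanishing gradient, and differentiating at the interior point $e=\Delta x(t)$ yields exactly $\nabla_{x}F(t,x_t)=0$. As $(t,x_t)$ was arbitrary, $F$ is vertically differentiable on all of $\Lambda$ with $\nabla_x F\equiv 0$.

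For the converse, suppose $\nabla_{x}F\equiv 0$ on $\Lambda$, and again set $g(e):=F(t,x_{t-}+e\I_{[t,\infty)})$ on $\U$. Using $z^{(e)}_t=z^{(e)}$, one checks that the gradient of $g$ at any $e\in\U$ is precisely the vertical derivative of $F$ at the admissible point $(t,z^{(e)}_t)\in\Lambda$, namely $\nabla g(e)=\nabla_{x}F(t,z^{(e)}_t)=0$. Because $\U$ is convex, the fundamental theorem of calculus along the segment joining $0$ to $\Delta x(t)$ gives $g(\Delta x(t))=g(0)$, that is $F(t,x_t)=F(t,x_{t-})$. Since $(t,x_t)$ was arbitrary, this is exactly strict causality.

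The crux of the argument is the observation that vertical perturbation at time $t$ leaves $x_{t-}$ invariant, so that strict causality collapses all the perturbed values onto the single number $F(t,x_{t-})$. I expect the delicate step to be in the converse: the hypothesis gives the vanishing of $\nabla_x F$ only at each \emph{base} point (i.e.\ the derivative of $g$ at $e=0$), and upgrading this to global constancy of $g$ on $\U$ is precisely where one must use that the derivative vanishes at \emph{every} point of $\Lambda$ together with the convexity of $\U$ from Definition~\ref{def:closed}(ii), which guarantees the whole segment from $0$ to $\Delta x(t)$ remains inside the domain.
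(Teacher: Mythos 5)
Your proof is correct, but note that the paper itself offers no argument here: it disposes of Lemma~\ref{Lem:causal} by citing \cite[Prop.\ 4.3]{CC2}, so what you have written is a self-contained reconstruction of the deferred proof rather than a rederivation of an in-paper argument. Your reconstruction is the natural one and it works. The load-bearing observation is exactly the one you isolate: for $z^{(e)}:=x_{t-}+e\I_{[t,\infty)}$ with $e\in\U$ one has $z^{(e)}_t=z^{(e)}$ and $z^{(e)}_{t-}=x_{t-}$ independently of $e$, so strict causality forces $g(e):=F(t,z^{(e)})$ to be constant on $\U$, while conversely the partials of $g$ at $e$ are precisely the vertical derivatives of $F$ at the admissible point $(t,z^{(e)}_t)\in\Lambda$ (here you correctly use that the hypothesis $\nabla_x F\equiv 0$ holds at \emph{every} point of $\Lambda$, not just at $(t,x_t)$), and convexity of $\U$ from Definition~\ref{def:closed}(ii) lets you integrate along the segment from $0$ to $\Delta x(t)$ to get $F(t,x_t)=F(t,x_{t-})$.

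Two small points of rigour you should make explicit. First, differentiating the constant $g$ ``at the interior point $e=\Delta x(t)$'' uses that $\Delta x(t)$ is interior to $\U$; Definition~\ref{def:closed}(ii) literally only asserts that $\U$ is a neighbourhood of $0$ containing $\Delta x(t)$. This is harmless here because Definition~\ref{def:dx} presupposes that $e\mapsto F(t,x_t+e\I_{[t,\infty)})$ is defined and differentiable at $0$, i.e.\ well-definedness of $\nabla_x F(t,x_t)$ is built into the framework (and in all the generic examples $\U$ may be taken open), but you should say so rather than rely on interiority silently. Second, your appeal to the fundamental theorem of calculus along the straight segment requires more than existence of the partials of $g$: it needs the directional derivative along $\Delta x(t)$, which is available if ``differentiable'' in Definition~\ref{def:dx} is read as total differentiability of the map at each point; under the weaker reading (existence of partials only) you would instead connect $0$ to $\Delta x(t)$ by axis-parallel segments inside the open connected set $\U$ and apply the one-dimensional mean value theorem coordinatewise. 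Either repair is routine, and with it your argument is a complete and correct substitute for the citation.
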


\begin{proof}
We refer to \cite[Prop. 4.3]{CC2}.
\end{proof}

\begin{Def}[Classes $\X$ and $\M$]\label{def:smooth}
A continuous and differentiable functional $F$ is of \emph{class $\X$} if $\D F$ is right continuous and locally bounded, $\nabla_{x}F$ is left continuous and strictly causal. If in addition, $\D F$ vanishes, then $F$ is of \emph{class $\M$}. 
\end{Def}

We denote $\M(\Lambda)$ the set class $\M$ functionals, $\M_{0}(\Lambda):=\{M_0=0|M\in\M\}$ and $\M_{+}(\Lambda):=\{M>0|M\in\M\}$. 

\begin{Def}[Pathwise integral]\label{def:pathwise}Let $\phi:\Lambda\longmapsto \R^{m}$; $\phi_{-}$ be left continuous. For every $x\in\Omega$, define \begin{eqnarray}\label{eq:discrete}\II(t,x^n_{t}):=\sum_{\pi_n\ni t_i\leq t}\phi(t_i,x^n_{t_i-})\cdot(x(t_{i+1})-x(t_i)).\label{eq:lrs}\end{eqnarray} If $\II(t,x_t):=\lim_{n} \II(t,x^n_{t})$ exists and $\II$ is continuous, then $\phi$ is called \emph{integrable} and $\II:=\int\phi dx$ is called the \emph{pathwise integral}.
\end{Def}

\begin{Thm}\label{thm:ftc} 
A functional $F:\Lambda \to \mathbb{R}$ is a pathwise integral if and only if  $F\in \M_{0}(\Lambda)$\end{Thm}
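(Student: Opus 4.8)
The statement is a fundamental theorem of calculus for the pathwise integral of Def.~\ref{def:pathwise}, so the plan is, for a given $F \in \M_{0}(\Lambda)$, to exhibit the integrand that reproduces it, and conversely to verify that every pathwise integral has the defining regularity of $\M_{0}(\Lambda)$. The natural candidate in both directions is $\phi := \nabla_x F$: I will show that the left-Riemann sums $\II(t, x^n_t)$ formed from $\phi = \nabla_x F$ telescope \emph{exactly} to $F$, and conversely that the vertical derivative of a pathwise integral recovers the (strictly causal) integrand.

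For the substantive direction ($F \in \M_{0} \Rightarrow F$ a pathwise integral) I would fix $x \in \Omega$ and follow the real map $s \mapsto F(s, x^n_s)$ along the piecewise-constant approximation $x^n$ of (\ref{eq:x_n}), which is constant on each $[t_i, t_{i+1})$ and jumps at the partition points with $\Delta x^n(t_i) = x(t_{i+1}) - x(t_i)$. Three ingredients then collapse its increments into the required sum. First, since $\D F$ vanishes (class $\M$, Def.~\ref{def:smooth}), $F(\cdot, y)$ is constant in time along any frozen path, so $F(s, x^n_s)$ changes only at partition points. Second, at $t_i$ the jump of $F$ is the purely vertical increment $F(t_i, x^n_{t_i-} + \Delta x^n(t_i)\I_{[t_i,\infty)}) - F(t_i, x^n_{t_i-})$; writing it as $\int_0^1 \nabla_x F(t_i, x^n_{t_i-} + \theta\,\Delta x^n(t_i)\I_{[t_i,\infty)}) \cdot \Delta x^n(t_i)\, d\theta$ and invoking the strict causality of $\nabla_x F$ (by Lemma~\ref{Lem:causal}, $\nabla_x F$ is insensitive to the jump at $t_i$), the integrand is constant in $\theta$, so the jump equals \emph{exactly} $\nabla_x F(t_i, x^n_{t_i-}) \cdot (x(t_{i+1}) - x(t_i))$, with no remainder term. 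Telescoping then gives $F(t, x^n_t) - F(0, x^n_0) = \II(t, x^n_t)$ for $\phi = \nabla_x F$. Finally, letting $n \to \infty$ and using the $\pi$-continuity of $F \in C(\Lambda)$ (the convergence $F(t, x^n_t) \to F(t, x_t)$ is condition 2(c) of Def.~\ref{prop:pi}) together with $F_0 = 0$ shows that $\II(t, x_t) = \lim_n \II(t, x^n_t) = F(t, x_t)$ exists and is continuous; hence $\phi$ is integrable and $F = \int \phi\, dx$.

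The converse ($F = \int \phi\, dx \Rightarrow F \in \M_{0}$) I expect to be shorter. Continuity of $F$ is built into Def.~\ref{def:pathwise}, and $F_0 = 0$ because the Riemann sum over the empty index set at $t = 0$ vanishes. Horizontally, freezing the path beyond $t$ contributes only null increments, so $F(t+h, x_t) = F(t, x_t)$ and $\D F \equiv 0$, which is trivially right continuous and locally bounded. Vertically, the perturbation $\epsilon e\I_{[t,\infty)}$ inserts a single increment at $t$ whose contribution to the sum is $\phi(t, x_{t-}) \cdot \epsilon e$, so $\nabla_x F = \phi_-$; being of the form $(\cdot)_-$ it is strictly causal, and it is left continuous by the standing hypothesis on $\phi_-$ in Def.~\ref{def:pathwise}. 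Thus $F$ is of class $\M$ with $F_0 = 0$, i.e. $F \in \M_{0}(\Lambda)$.

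The main obstacle is the exact telescoping in the hard direction, and in particular the bookkeeping that keeps every perturbed discretized path inside the generic domain, so that vertical differentiation may be applied along the whole segment $\theta \in [0,1]$; here the closure properties (i)--(ii) of Def.~\ref{def:closed} are precisely what guarantee $x^n_{t_i-} + \theta\,\Delta x^n(t_i)\I_{[t_i,\infty)} \in \Omega$. The conceptual crux, however, is that the strict causality of $\nabla_x F$ upgrades an a priori approximate Riemann sum into an \emph{exact} identity at each finite $n$; this is exactly what removes any need for finite quadratic (or higher) variation of $x$ and lets the conclusion hold on an arbitrary generic domain. The rest is routine: justifying the limit interchange through the continuity conditions of Def.~\ref{prop:pi} and treating the endpoint $t$ when it is not a partition point.
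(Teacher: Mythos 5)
Your proposal cannot be checked against an in-paper argument, because the paper proves Theorem~\ref{thm:ftc} by citation alone (\cite[Prop.~5.7, Thm.~5.8 \& Cor.~5.10]{CC2}); what you have written is, in substance, a correct self-contained reconstruction of the argument that citation stands for, and it matches the mechanism the paper itself deploys elsewhere: your telescoping identity $F(t,x^n_t)-F_0=\II(t,x^n_t)$ is exactly the $F$-analogue of (\ref{eq:diff_eq}) in the proof of Prop.~\ref{prop:general}, and your converse computation of $\nabla_x F=\phi_-$ via the single straddling increment is the same limit computation performed explicitly in the proof of Thm.~\ref{thm:exist}. You also correctly isolate the two load-bearing points: strict causality of $\nabla_x F$ makes the vertical derivative constant along the segment $\theta\mapsto x^n_{t_i-}+\theta\,\Delta x^n(t_i)\I_{[t_i,\infty)}$ (each point of the segment has the same strictly-prior history), so the jump of $F$ at $t_i$ is \emph{exactly} $\nabla_x F(t_i,x^n_{t_i-})\cdot\Delta x^n(t_i)$ with no remainder — which is what dispenses with any variation hypothesis; and Def.~\ref{def:closed}(i) (to put $x^n_T$ in $\Omega$ for large $n$) together with the convexity of $\U$ in Def.~\ref{def:closed}(ii) (to keep the whole segment in $\Omega$) is what legitimizes differentiating along the segment, while Def.~\ref{prop:pi}.2(c) with $t_n\equiv t$ is the right condition for passing the exact discrete identity to the limit.

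Three small repairs, none structural. First, at $t=0$ the Riemann sum (\ref{eq:lrs}) is \emph{not} over an empty index set: the convention $\sum_{\pi_n\ni t_i\leq t}$ includes $t_0=0$, so $\II(0,x^n_0)=\phi(0,x^n_{0-})\cdot(x(t^n_1)-x(0))$; your conclusion $F_0=0$ still holds, but because $x(t^n_1)\to x(0)$ by right continuity of the path while $\phi(0,x^n_{0-})$ converges by left continuity of $\phi_{-}$, so the single term vanishes in the limit. Second, the insensitivity of $\nabla_x F$ to the current jump is just Def.~\ref{def:causal} applied to the strictly causal functional $\nabla_x F$ (guaranteed by Def.~\ref{def:smooth}); Lemma~\ref{Lem:causal} is the characterization in the other direction and is not what you need there. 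Third, your step ``$\D F=0$ forces $F(s,x^n_s)$ constant between partition points'' silently uses that $h\mapsto F(t+h,x_t)$ is continuous (this follows from Def.~\ref{prop:pi} 1(b) and 2(a) applied to the stopped path) plus the elementary fact that a continuous function with vanishing right derivative is constant; since $\D$ is only a one-sided derivative, this deserves a line.
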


\begin{proof}
We refer to \cite[Prop. 5.7, Thm. 5.8 \& Cor. 5.10]{CC2}.\end{proof}

\subsection{Trading strategy, self-financing and arbitrage}\label{sec:mf}\noindent 
    We consider a frictionless market with $d>0$ tradable assets, and one numeraire whose price is identically 1. We denote $x$ to be the price paths of tradable assets and $x\in\Omega$, where $\Omega$ is generic Def.~\ref{def:closed}. The \emph{number of shares} in assets $\phi$ and the numeraire $\psi$ held immediately before the portfolio revision at time $t$ will be denoted by $\phi_{-}$ and $\psi_{-}$. A trading strategy, aka portfolio, is a pair $(\phi,\psi)$ of regulated functionals $\phi: \Lambda \mapsto \R^{d}$ and $\psi: \Lambda \mapsto \R$. The value $V$ of the portfolio is given by \begin{eqnarray}
V(t,x_t):=\widetilde{\phi}(t,x_{t})\cdot x(t)+\widetilde{\psi}(t,x_{t}).\label{eq:self1}
\end{eqnarray}

  A key concept in mathematical finance is self-financing. This concept is usually defined by the choice of a path integration concept. This approach may be intuitive but dangerously imprecise as pointed out by Lyon \cite[\S 2.2]{TL} (see Example~\ref{Eg:counter}). The notion of self-financing thus hinges upon the choice of a particular integration concept. The following concept is based on {\it local} properties, without requiring the choice of a particular integration concept.
  
\begin{Def}[Self-financing trading strategy]\label{def:self}\noindent\\
A trading strategy aka portfolio $(\phi,\psi)$ is called \emph{self-financing} if for every $(t,x)\in \Lambda$;
\begin{itemize}
    \item [(i)] $\Delta \widetilde{\phi}(t,x_t)\cdot x(t)+\Delta\widetilde{\psi}(t,x_t)=0$,
    \item [(ii)] $\left(\widetilde{\phi}(t+h,x_t)-\widetilde{\phi}(t,x_t)\right)\cdot x(t)+\widetilde{\psi}(t+h,x_t)-\widetilde{\psi}(t,x_t)=0$ whenever $h\geq 0$. 
\end{itemize}
\end{Def}
    Both conditions correspond to the property that the proceeds from any change in the assets' position is financed by a corresponding change in the cash position. 

\begin{Rem}
If $(\phi,\psi)$ is self-financing, then the value of the portfolio may also be expressed as\begin{eqnarray}
V(t,x_t)=\phi(t,x_{t-})\cdot x(t)+\psi(t,x_{t-}).\label{eq:self2}
\end{eqnarray}
\end{Rem}

\Eg[A counter-example]\label{Eg:counter}\noindent\\
Let $\bigcup_{n} \pi_n\subset Q$. Define $\phi(t,x_t):=\I_{\{Q\}}(t)$, then $\phi$ is \textbf{not} self-financing because it is not regulated (i.e. left/right limits do not exist, instantaneous change of positions is indeterminate, so cannot be computed and implemented).

Nevertheless, the left-Riemann sums (\ref{eq:lrs}) converge to $x(t)-x(0)$. Note that $\phi$ is integrable in \ito's sense because $\phi$ is deterministic and bounded. (For a Brownian motion, the value of this integral is null).

\begin{Thm}[Representation]\label{thm:self}
Let $V$ be the value of the portfolio $(\phi,\psi)$. Then  $(\phi,\psi)$ is self-financing if and only if 
   $V\in\M(\Lambda)$; $\nabla_{x}V=\phi_{-}$ i.e.  \begin{eqnarray}
     V(t,x_t)=V(0,x_0)+\int_{0}^{t}\phi(s,x_{s-})dx.\label{eq:gain}
    \end{eqnarray}
\end{Thm}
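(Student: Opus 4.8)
The plan is to prove the two implications separately, translating the two local self-financing conditions of Definition~\ref{def:self} into the two structural properties of $V$ (membership in $\M(\Lambda)$ and $\nabla_x V = \phi_-$), and then invoke Theorem~\ref{thm:ftc} to pass between the class-$\M_0$ property and the pathwise-integral representation~(\ref{eq:gain}). The key observation that links the two sides is the identity~(\ref{eq:self2}): once we know $(\phi,\psi)$ is self-financing, the value admits the \emph{strictly causal} representation $V(t,x_t)=\phi(t,x_{t-})\cdot x(t)+\psi(t,x_{t-})$ in addition to the continuous-version representation~(\ref{eq:self1}).

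For the forward direction, I would first verify that condition~(i), $\Delta\widetilde\phi\cdot x(t)+\Delta\widetilde\psi=0$, is exactly what reconciles~(\ref{eq:self1}) with~(\ref{eq:self2}): it says the jump in the assets' position is cancelled by the jump in cash, so evaluating the holdings at $x_{t-}$ rather than $x_t$ leaves the value unchanged. Next I would use condition~(ii), the horizontal no-cost condition, to show $\D V=0$: differentiating~(\ref{eq:self1}) in time at fixed $x_t$, the increment $(\widetilde\phi(t+h,x_t)-\widetilde\phi(t,x_t))\cdot x(t)+\widetilde\psi(t+h,x_t)-\widetilde\psi(t,x_t)$ vanishes by~(ii), so $\D V=0$ and $V$ has no horizontal drift. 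The vertical derivative is the crux: perturbing the path by $e\I_{[t,\infty)}$ and using strict causality of $\phi$ and $\psi$ (their values depend only on $x_{t-}$, which is unaffected by the perturbation at time $t$), the representation~(\ref{eq:self2}) gives $F(t,x_t+e\I_{[t,\infty)})=\phi(t,x_{t-})\cdot(x(t)+e)+\psi(t,x_{t-})$, whose derivative in $e$ at $0$ is precisely $\phi(t,x_{t-})=\phi_-(t,x_t)$. Hence $\nabla_x V=\phi_-$. Since $\D V=0$ and $V$ is continuous and differentiable with the required regularity inherited from $\phi,\psi$ being regulated, $V\in\M(\Lambda)$; applying Theorem~\ref{thm:ftc} to $V-V_0\in\M_0(\Lambda)$ with integrand $\nabla_x V=\phi_-$ yields~(\ref{eq:gain}).

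For the converse, suppose $V\in\M(\Lambda)$ with $\nabla_x V=\phi_-$. I would reconstruct the two self-financing conditions from the class-$\M$ structure. Condition~(ii) follows immediately from $\D V=0$ (which holds for class-$\M$ functionals by Definition~\ref{def:smooth}), since a vanishing time-derivative combined with horizontal constancy forces the bracketed horizontal increment in~(ii) to be zero. Condition~(i) should follow from expressing $V$ in both forms and comparing jumps: because $\nabla_x V=\phi_-$ is strictly causal (class $\M$ requires $\nabla_x V$ to be left-continuous and strictly causal), the jump $\Delta V(t,x_t)$ can be computed as $\phi(t,x_{t-})\cdot\Delta x(t)$, and matching this against $\Delta\big(\widetilde\phi(t,x_t)\cdot x(t)+\widetilde\psi(t,x_t)\big)$ isolates exactly the relation $\Delta\widetilde\phi\cdot x(t)+\Delta\widetilde\psi=0$.

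The main obstacle I anticipate is the bookkeeping around the two representations~(\ref{eq:self1}) and~(\ref{eq:self2}) and the careful use of strict causality to control the vertical perturbation. The subtlety is that $\widetilde\phi,\widetilde\psi$ are the \emph{continuous versions} of $\phi,\psi$, so one must be precise about which representative is evaluated at $x_t$ versus $x_{t-}$ when forming jumps and vertical derivatives; the perturbation $e\I_{[t,\infty)}$ alters $x(t)$ but not $x_{t-}$, and it is exactly the strict causality of $\phi$ (equivalently, Lemma~\ref{Lem:causal}, vanishing spatial derivative of $\phi$) that guarantees the vertical derivative of $V$ picks up only the explicit factor $\phi(t,x_{t-})$ and not a contribution from the dependence of $\phi$ on the perturbed value. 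Getting this interchange rigorous, together with verifying that the regularity of $V$ (right-continuous locally bounded $\D V$, left-continuous strictly causal $\nabla_x V$) is inherited from the assumption that $\phi,\psi$ are regulated, is where the real work lies.
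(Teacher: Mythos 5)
The paper itself does not prove this theorem: its ``proof'' is a one-line citation to \cite[Thm.\ 4.3]{CC3}, so there is no in-paper argument to compare yours against line by line. Judged within the framework the paper does set up, your outline is the natural proof and I believe it is correct: condition (i) is exactly the reconciliation of \eqref{eq:self1} with \eqref{eq:self2} (using $\widetilde\phi_-=\phi_-$, $\widetilde\psi_-=\psi_-$ from Def.~\ref{def:version}); condition (ii) is precisely horizontal constancy of $V$ along stopped paths, since $x_t(t+h)=x(t)$; and the vertical derivative computation via \eqref{eq:self2} at the perturbed path is the right mechanism, legitimate because Def.~\ref{def:closed}(ii) keeps $x_{t-}+e\I_{[t,\infty)}$ in $\Omega$ and the self-financing hypothesis holds at every point of $\Lambda$, including perturbed ones.

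Two steps deserve more care than your sketch gives them. First, in the converse, the jump identity $\Delta V(t,x_t)=\phi(t,x_{t-})\cdot\Delta x(t)$ is not automatic from vertical differentiability; you should derive it by noting that strict causality of $\nabla_x V$ makes the derivative of $e\mapsto V(t,x_{t-}+e\I_{[t,\infty)})$ \emph{constant} equal to $\phi(t,x_{t-})$ along the whole segment from $0$ to $\Delta x(t)$ --- and it is exactly the \emph{convexity} of the neighbourhood $\U$ in Def.~\ref{def:closed}(ii) that guarantees this segment stays in the domain, so the identity follows by integrating the derivative. Second, ``regularity inherited from $\phi,\psi$ being regulated'' compresses real work: continuity of $V$ in the $\pi$-topology requires that the coordinate functional $(t,x_t)\mapsto x(t)$ lies in $C(\Lambda)$ and that $C(\Lambda)$ is an algebra (Rem.~\ref{rem:regulated}), and deducing $V(t+h,x_t)=V(t,x_t)$ from $\D V=0$ needs continuity of $h\mapsto V(t+h,x_t)$ plus the standard vanishing-right-derivative lemma, not merely the pointwise derivative condition (your phrasing there is slightly circular). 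Also note that Thm.~\ref{thm:ftc} as stated only asserts $V-V_0$ \emph{is} a pathwise integral; identifying the integrand with $\nabla_x V=\phi_-$ is part of the cited machinery of \cite{CC2} rather than of the statement quoted in this paper, so you should invoke it explicitly. With these points tightened, your argument stands on its own.
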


\begin{proof}
We refer to \cite[Thm. 4.3]{CC3}.
\end{proof}

\begin{Prop}[Equivalence]\label{prop:self}
Let $V$ be a functional, the following are equivalent:
\begin{itemize}
    \item [(i)] $V$ is the value of a self-financing portfolio $(\phi,\psi)$.
    \item [(ii)] $V\in\M(\Lambda)$ and $\nabla_{x}V$ is regulated.
\end{itemize}
\end{Prop}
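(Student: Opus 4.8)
The plan is to read this Proposition as the existence companion to the Representation Theorem~\ref{thm:self}: that theorem fixes a portfolio $(\phi,\psi)$ and characterizes when it is self-financing, whereas here I must characterize those $V$ that arise as the value of \emph{some} self-financing portfolio. Accordingly I would prove the two implications by, respectively, reading off the necessary conditions from Theorem~\ref{thm:self} and explicitly constructing a portfolio for the converse. The only genuinely new content beyond Theorem~\ref{thm:self} is the passage between ``$\nabla_{x}V=\phi_{-}$ for a given regulated $\phi$'' and ``$\nabla_{x}V$ is regulated'', together with the recovery of the cash position.

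For (i)$\Rightarrow$(ii), assume $(\phi,\psi)$ is self-financing with value $V$. Theorem~\ref{thm:self} immediately gives $V\in\M(\Lambda)$ and $\nabla_{x}V=\phi_{-}$, so it only remains to see that $\nabla_{x}V$ is regulated. Since $\phi$ is, by the very definition of a trading strategy, a regulated functional, there is a continuous version $\widetilde{\phi}\in C(\Lambda)$ with $\widetilde{\phi}_{-}=\phi_{-}$. Because $V\in\M(\Lambda)$, Definition~\ref{def:smooth} makes $\nabla_{x}V$ strictly causal, whence $(\nabla_{x}V)_{-}=\nabla_{x}V=\phi_{-}=\widetilde{\phi}_{-}$; thus $\widetilde{\phi}$ witnesses the regulatedness of $\nabla_{x}V$ in the sense of Definition~\ref{def:version}.

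For (ii)$\Rightarrow$(i), assume $V\in\M(\Lambda)$ and $\nabla_{x}V$ regulated. I would build the candidate portfolio by setting $\phi:=\widetilde{\nabla_{x}V}$, the continuous version supplied by regulatedness, and $\psi:=V-\phi\cdot x$. Since $\phi\in C(\Lambda)$, it is trivially regulated with $\widetilde{\phi}=\phi$, and as $\nabla_{x}V$ is strictly causal we get $\phi_{-}=(\widetilde{\nabla_{x}V})_{-}=(\nabla_{x}V)_{-}=\nabla_{x}V$. Granting that $\psi$ is a legitimate regulated functional, the value of $(\phi,\psi)$ in the sense of \eqref{eq:self1} is $\widetilde{\phi}\cdot x+\widetilde{\psi}=\phi\cdot x+\psi=V$, so $V$ is the value of the portfolio $(\phi,\psi)$, which satisfies $V\in\M(\Lambda)$ and $\nabla_{x}V=\phi_{-}$; Theorem~\ref{thm:self} then certifies that $(\phi,\psi)$ is self-financing, giving (i).

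The main obstacle is the single step I flagged: verifying that $\psi=V-\phi\cdot x$ is regulated and hence that $\widetilde{\psi}$ exists and fits \eqref{eq:self1}. I would handle this by showing the evaluation functional $(t,x_{t})\mapsto x(t)$ lies in $C(\Lambda)$, checking the limits of Definition~\ref{prop:pi} one by one; here the forward piecewise-constant approximation \eqref{eq:x_n} is exactly what makes conditions 1.(c)--(d) and 2.(c)--(d) go through, since $x^{n}(t_{n}-)$ equals the value of $x$ at the next partition point, which tends to $x(t-)$ (resp.\ $x(t)$) along the relevant sequences. Once evaluation is continuous, the algebra property of $C(\Lambda)$ (Remark~\ref{rem:regulated}) yields $\phi\cdot x\in C(\Lambda)$ and therefore $\psi=V-\phi\cdot x\in C(\Lambda)$, which is its own continuous version. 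This is the place where the interaction between the jumps of $x$ and the chosen approximation scheme must be examined with care; every other step is a direct invocation of Theorem~\ref{thm:self} and the definitions.
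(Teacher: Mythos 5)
Your proof is correct and takes essentially the route the paper intends: the paper's own proof is just a citation to \cite[Prop.~4.4]{CC3}, but the construction it records in Remark~\ref{rem:self} via (\ref{eq:self3}) --- take $\phi$ with $\phi_{-}=\nabla_{x}V$, set $\psi:=V-\widetilde{\phi}\cdot x$, and certify self-financing through Theorem~\ref{thm:self} --- is exactly what you carry out. You also correctly supply the two steps the citation leaves implicit, namely using strict causality of $\nabla_{x}V$ (Definition~\ref{def:smooth}) to pass between $\phi_{-}$ and $\nabla_{x}V$ in both directions, and using continuity of the evaluation functional $(t,x_t)\mapsto x(t)$ together with the algebra property of $C(\Lambda)$ (Remark~\ref{rem:regulated}) to verify that $\psi$ is regulated.
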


\begin{proof}
We refer to \cite[Prop. 4.4]{CC3}.
\end{proof}

\begin{Rem}[Self-financing $V$]\label{rem:self}
In view of Thm.~\ref{thm:self}, we may call a functional $V$ \emph{self-financing} if $V\in\M$ with regulated $\nabla_{x}V$. In particular, the self-financing trading strategy associated with $V$ is given by $\phi:=\nabla_{x}V$ and   \begin{eqnarray}\psi(t,x_t):=V(t,x_t)-\widetilde{\phi}(t,x_t)\cdot x(t).\label{eq:self3}\end{eqnarray}
\end{Rem}

\begin{Def}[Arbitrage]
Let $V$ be self-financing. We say that $V$ is an arbitrage if there exists $T>0$; $V(T,x_T)-V(0,x_0)\geq 0$ and there exists $x\in\Omega$; $V(T,x_T)-V(0,x_0)>0$.
\end{Def}

\begin{Thm}\label{thm:NA}
Arbitrage does not exist in a generic market.
\end{Thm}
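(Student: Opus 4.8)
The plan is to argue by contradiction using the representation of self-financing value functionals. Suppose $V$ were an arbitrage with horizon $T$, so that (reading the defining inequality as holding for every $x\in\Omega$) the terminal gain $G(x):=V(T,x_T)-V_0$ satisfies $G\geq 0$ on all of $\Omega$ with $G(x^*)>0$ for some $x^*\in\Omega$. By Theorem~\ref{thm:self}, $V\in\M(\Lambda)$ with $\nabla_{x}V=\phi_{-}$ left continuous and $\D V\equiv 0$, and $V-V_0=\int_0^{\cdot}\phi(s,x_{s-})dx$. The vanishing of $\D V$ means that $V$ is constant in time along any path that is frozen (constant) on a time interval; this is the only structural fact about the horizontal direction I will need.

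First I would establish a local \emph{perturbation identity}. Fix $x\in\Omega$ and $t\in(0,T]$. By genericity (Def.~\ref{def:closed}(ii)) the stopped path $x_{t-}$ lies in $\Omega$ and there is a convex neighbourhood $\U$ of $0$ with $x_{t-}+e\I_{[t,\infty)}\in\Omega$ for all $e\in\U$. Since such a path is frozen after its jump at $t$ and $\D V\equiv 0$, the integral representation collapses to
\begin{equation*}
G\bigl(x_{t-}+e\I_{[t,\infty)}\bigr)=\bigl(V(t,x_{t-})-V_0\bigr)+\phi(t,x_{t-})\cdot e,\qquad e\in\U .
\end{equation*}
Writing $g_x(t):=V(t,x_{t-})-V_0$ for the running gain, the arbitrage hypothesis forces this affine function of $e$ to be non-negative on the neighbourhood $\U$ of $0$; in particular $g_x(t)\geq 0$ (take $e=0$), and whenever $g_x(t)=0$ the two-sided freedom in $e$ forces the slope to vanish, i.e.\ $\phi(t,x_{t-})=0$.

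Next I would show by induction on the number of jumps that $g_y\equiv 0$ for every \emph{step} path $y\in\Omega$. For a path constant on $[0,t)$ this is immediate from $\D V\equiv 0$ and $g_y(0)=0$. For the inductive step, let $t_n$ be the last jump of $y$ before the evaluation time; the truncation $y_{t_n-}$ is a step path with one fewer jump, so by hypothesis $g_y(t_n)=0$, whence the perturbation identity gives $\phi(t_n,y_{t_n-})=0$ and the jump at $t_n$ contributes nothing to the gain, while the constant stretch after $t_n$ contributes nothing because $\D V\equiv 0$. Hence $g_y$ remains $0$. The crucial point here is that the \emph{actual} jump size $\Delta y(t_n)$ lies in the admissible neighbourhood $\U$ by Def.~\ref{def:closed}(ii), so no smallness of jumps is needed: the jump term is zero regardless of its magnitude.

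Finally I would remove the step-path restriction by continuity. For arbitrary $x\in\Omega$ the piecewise-constant approximations $x^{n}$ of (\ref{eq:x_n}) are step paths with $x^{n}_{T}\in\Omega$ for $n\geq N(T)$ by Def.~\ref{def:closed}(i), so $G(x^{n})=g_{x^n}(T)=0$; since $V\in C(\Lambda)$ and $x^{n}_{T}\to x_T$ in the $\pi$-topology, continuity (Def.~\ref{prop:pi}) yields $G(x)=\lim_n G(x^{n})=0$. Thus $G\equiv 0$ on $\Omega$, contradicting $G(x^*)>0$. I expect the main obstacle to be the inductive claim, namely propagating the vanishing of the running gain across a jump, since it requires simultaneously that the realised jump be an admissible perturbation (supplied by Def.~\ref{def:closed}(ii)) and that zero running gain force $\phi$ to vanish there (supplied by the two-sided perturbation identity); the concluding approximation step is then a routine appeal to continuity.
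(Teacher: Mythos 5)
The paper itself contains no argument for this theorem: its proof is a one-line pointer to \cite[Thm.~4.8]{CC3}, so there is nothing in this text to match your reconstruction against line by line. Taken on its own terms, your proof is correct in outline and uses exactly the ingredients the framework was built to supply: vertical perturbations on a generic domain (Def.~\ref{def:closed}(ii)), horizontal flatness $\D V\equiv 0$ for $V\in\M$ (via Thm.~\ref{thm:self}), closure under piecewise-constant approximation (Def.~\ref{def:closed}(i)) combined with continuity property 2(c) of Def.~\ref{prop:pi} for the final limit. One small point worth making explicit in that last step: $x^n(0)=x(t_1)\neq x(0)$, so the initial values $V(0,x^n_0)$ and $V(0,x_0)$ differ along the approximation; convergence of \emph{both} terms of $G(x^n)$ is what property 2(c) delivers, using the convention $\max(\emptyset\cap\pi_n)=0$ at $t=0$.

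The one step you assert rather than prove is the perturbation identity, i.e.\ that $e\mapsto V(t,x_{t-}+e\I_{[t,\infty)})$ is exactly \emph{affine} on $\U$, not merely differentiable at $0$. This does not follow from the integral representation ``collapsing'' by itself; it follows from the class-$\M$ requirement (Def.~\ref{def:smooth}) that $\nabla_x V$ be \emph{strictly causal}: for $y:=x_{t-}+e\I_{[t,\infty)}$ with $e\in\U$ one has $y_{t-}=x_{t-}$, hence $\nabla_x V(t,y_t)=\nabla_x V(t,y_{t-})=\nabla_x V(t,x_{t-})$, so the vertical gradient is constant on the convex set $\U$, and integrating along the segment from $0$ to $e$ gives the affine formula. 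This is also precisely why Def.~\ref{def:closed}(ii) insists that $\U$ be convex and contain the realised jump $\Delta x(t)$ --- the feature your induction correctly leans on to kill the jump term ``regardless of its magnitude''. With that lemma inserted, the rest is sound: nonnegativity of an affine function on a neighbourhood of $0$ forces the slope to vanish when the running gain is zero; the induction propagates $g_y\equiv 0$ across the finitely many jumps of any step path in $\Omega$; and $\pi$-continuity extends $G\equiv 0$ to all of $\Omega$, contradicting $G(x^*)>0$. Note that what you actually prove is slightly stronger than no-arbitrage --- any self-financing $V$ whose terminal gain is everywhere nonnegative has identically vanishing gain --- which is the same mechanism the paper reuses in Cor.~\ref{Cor:NA} via $V-W\in\M_0$.
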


\begin{proof}
We refer to \cite[Thm. 4.8]{CC3}.  
\end{proof}

\begin{Def}[Relative arbitrage]\label{def:RA}
Let $V, W$ be self-financing and $V_0=W_0$. We say that $V$ is an arbitrage relative to $W$ if there exists $T>0$; $V(T,x_T)-W(T,x_T)\geq 0$ and there exists $x\in\Omega$; $V(T,x_T)-W(T,x_T)>0$.
\end{Def}

\begin{Cor}\label{Cor:NA}
Relative arbitrage does not exist in a generic market.
\end{Cor}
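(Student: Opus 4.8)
The plan is to reduce the statement to the already-established no-arbitrage result, Theorem~\ref{thm:NA}, by passing to the difference functional $U := V - W$. Indeed, suppose for contradiction that $V$ is an arbitrage relative to $W$ in the sense of Def.~\ref{def:RA}, with witnessing horizon $T>0$. Since $V_0 = W_0$, we have $U_0 = U(0,x_0) = V(0,x_0) - W(0,x_0) = 0$, so that for every $x\in\Omega$
\begin{eqnarray*}
U(T,x_T) - U(0,x_0) = U(T,x_T) = V(T,x_T) - W(T,x_T).
\end{eqnarray*}
Thus the relative-arbitrage conditions $V(T,x_T)-W(T,x_T)\geq 0$ for all $x\in\Omega$, with strict inequality for some $x\in\Omega$, are literally the defining conditions for $U$ to be an arbitrage. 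It therefore suffices to show that $U$ is self-financing, for then Theorem~\ref{thm:NA} is contradicted.

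The substantive step, then, is to verify that $U = V - W$ is self-financing, which by Remark~\ref{rem:self} amounts to checking $U\in\M(\Lambda)$ with $\nabla_x U$ regulated. First I would observe that $\M(\Lambda)$ is a vector space: if $V,W\in\M$, then $U\in C(\Lambda)$ because $C(\Lambda)$ is an algebra (Remark~\ref{rem:regulated}); horizontal and vertical differentiability pass to the difference by linearity of the defining limits, with $\D U = \D V - \D W = 0$ vanishing and $\nabla_x U = \nabla_x V - \nabla_x W$; and $\nabla_x U$ inherits left-continuity and strict causality (Def.~\ref{def:causal}, which is plainly linear) from its two summands. Hence $U\in\M(\Lambda)$. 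Next, since $V$ and $W$ are self-financing, $\nabla_x V$ and $\nabla_x W$ are regulated by Prop.~\ref{prop:self}(ii); as the regulated functionals form an algebra (Remark~\ref{rem:regulated}), their difference $\nabla_x U$ is again regulated. By Prop.~\ref{prop:self} this establishes that $U$ is the value of a self-financing portfolio.

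Combining the two steps closes the argument: $U$ is self-financing with $U_0 = 0$, and the relative-arbitrage hypothesis exhibits $U$ as an arbitrage, contradicting Theorem~\ref{thm:NA}. The only real obstacle is the closure of the class of self-financing functionals under subtraction, and I expect this to be routine given the algebra structure of $C(\Lambda)$ and of the regulated functionals together with the linearity of $\D$ and $\nabla_x$; the remainder is a direct translation of definitions. One point to state carefully is that the ``$\geq 0$ for all $x$, $>0$ for some $x$'' quantifier structure of Def.~\ref{def:RA} matches that of the arbitrage definition exactly once the constant $U_0$ is subtracted, which is precisely where the hypothesis $V_0 = W_0$ is used.
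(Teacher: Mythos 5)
Your proposal is correct and follows exactly the paper's route: the paper's one-line proof invokes Thm.~\ref{thm:NA} together with the observation that $V-W\in\M_0$, which is precisely your reduction via $U:=V-W$. The additional details you supply --- the vector-space structure of $\M(\Lambda)$, the algebra of regulated functionals, and the quantifier matching after subtracting $U_0$ --- are just the routine verifications the paper leaves implicit.
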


\begin{proof}
It is an immediate consequence of Thm.\ref{thm:NA} and that $V-W\in \M_0$.
\end{proof}

\begin{Rem}[Relative arbitrage]
Generic domains include, but are not limited to, the set of simple step functions $\SS$. In contrast, a subset of continuous paths---the domain typically used in stochastic portfolio theory---is not generic. Hence, there is no contradiction to the existence of relative arbitrage results established on a subset of continuous paths, such as in \cite[Karatzas \& Kim (2020)]{KK}. See also Rem.~\ref{rem:closed}.
\end{Rem}

\section{Portfolio allocation}\label{sec:mf2}\noindent

In this section, we extend the concept of self-financing to a portfolio allocation strategy (a functional taking values in a simplex) and associate this strategy with a path-dependent partial differential equation (PPDE). We prove that a portfolio allocation strategy is self-financing if and only if its associated PPDE admits a solution, which characterizes the evolution of wealth associated with the strategy. We derive the general explicit solution for any generic domain and provide examples of concrete solutions. 

\subsection{Self-financing allocation strategy}
    For the study of investment problems, we shall be focusing on the \emph{allocation} strategy $\theta:=(\theta_{1},\ldots,\theta_{d})'$ whose individual components are fraction of the portfolio value in the respective asset. 

    \begin{Def}[Allocation strategy]\label{def:alloc}An \emph{allocation strategy} or \emph{allocation} $\theta$ is a $\R_{+}^{d}$-valued regulated functional on $\Lambda$ satisfying $\sum_{i=1}^{d}\theta_i\leq 1$. 
\end{Def}
    
    In particular, if $V$ is the value of a portfolio $(\phi,\psi)$ that implements an allocation $\theta$, then the identities \begin{alignat*}{2}
\tilde{\phi}_{i}x_i=\widetilde{\theta}_{i}V  \quad\forall i, \quad
    \tilde{\psi}=\left(1-\sum_i \widetilde{\theta}_{i}\right)V 
    \end{alignat*}hold at all times.

\begin{Def}[Self-financing allocation]\label{def:alloc_self}Let $\theta$ be an allocation strategy and $V>0$ be continuous. We associate the pair $(\theta$, $V)$ with the following trading strategy: \begin{alignat}{2}
     \phi(t,x_{t})&:=\left(\frac{\widetilde{\theta}}{x}V\right)(t,x_t):=\left(\frac{{\widetilde{\theta}}_1(t,x_{t})}{x_1(t)},\ldots,\frac{\widetilde{\theta}_d(t,x_{t})}{x_d(t)}\right)'V(t,x_t),\nonumber\\
     \psi(t,x_{t})&:=\left(1-\sum_{i=1}^{d}{\widetilde{\theta}_{i}(t,x_{t})}\right)V(t,x_t).\label{eq:implement}
    \end{alignat}The portfolio $(\phi,\psi)$ is called an \emph{implementation} of $\theta$. An allocation strategy is called \emph{self-financing} if there exists a \emph{self-financing implementation}. We denote $\Theta(\Lambda)$ the set of all self-financing allocation strategies on $\Lambda$.
\end{Def}   

\begin{Rem}
    One verifies the identity $\tilde{\phi}_{i}x_i=\widetilde{\theta}_{i}V$ for each $i$, and hence \begin{eqnarray}\widetilde{\phi}\cdot x+\widetilde{\psi}=\left(\sum_{i}\widetilde{\theta}_{i}   \right)V+\left(1-\sum_{i}\widetilde{\theta}_{i}\right)V=V,\label{eq:implement2}\end{eqnarray} i.e. $V$ is the value of the portfolio $(\phi,\psi)$ that implements $\theta$. Every allocation strategy has an implementation. An implementation is not necessarily self-financing! 
\end{Rem}

\begin{Thm}[Wealth equation]\label{thm:alloc_self}
An allocation strategy $\theta$ is self-financing if and only if there exists $V\in\M_{+}$; $V$ solves the following (path-dependent) partial differential equation on $\Lambda_{+}$:
\begin{eqnarray}
\nabla_{x}V-\left(\frac{\theta}{x}V\right)_{-}=0.\label{eq:pde}
\end{eqnarray}
\end{Thm}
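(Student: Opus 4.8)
The plan is to reduce the claim to the Representation Theorem (Thm.~\ref{thm:self}) applied to the canonical implementation $(\phi,\psi)$ of the pair $(\theta,V)$ supplied by (\ref{eq:implement}). Both directions pivot on a single identity, which I would verify at the outset:
\[
\phi_{-}=\left(\frac{\theta}{x}\,V\right)_{-}.
\]
To see it, note from (\ref{eq:implement}) that $\phi=\tfrac{\widetilde\theta}{x}V$, whence $\phi_{-}(t,x_t)=\tfrac{\widetilde\theta(t,x_{t-})}{x(t-)}V(t,x_{t-})$; because $\theta$ is regulated, Def.~\ref{def:version} gives $\widetilde\theta_{-}=\theta_{-}$, i.e. $\widetilde\theta(t,x_{t-})=\theta(t,x_{t-})$, and the identity follows. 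With this in hand the PPDE (\ref{eq:pde}) is literally the relation $\nabla_{x}V=\phi_{-}$.

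For the forward implication, suppose $\theta\in\Theta(\Lambda)$, so that some implementation $(\phi,\psi)$ of a pair $(\theta,V)$ with $V>0$ is self-financing. By (\ref{eq:implement2}) the value of this portfolio is exactly $V$, so Thm.~\ref{thm:self} yields $V\in\M$ together with $\nabla_{x}V=\phi_{-}$; positivity of $V$ promotes this to $V\in\M_{+}$. Substituting the identity above converts $\nabla_{x}V=\phi_{-}$ into (\ref{eq:pde}) on $\Lambda_{+}$, which produces the desired $V$.

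For the converse, suppose $V\in\M_{+}$ solves (\ref{eq:pde}). I would form the implementation $(\phi,\psi)$ of $(\theta,V)$ via (\ref{eq:implement}); this is a genuine (regulated) trading strategy by Def.~\ref{def:alloc_self}, and by (\ref{eq:implement2}) its value is $V$. The identity then gives $\phi_{-}=(\tfrac{\theta}{x}V)_{-}=\nabla_{x}V$, the last equality being precisely (\ref{eq:pde}). Since $V\in\M$ and $\nabla_{x}V=\phi_{-}$, Thm.~\ref{thm:self} certifies that $(\phi,\psi)$ is self-financing, so $\theta$ admits a self-financing implementation and hence $\theta\in\Theta(\Lambda)$.

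The substantive content is carried entirely by Thm.~\ref{thm:self}, which I am free to invoke; the only genuine care required is the bookkeeping of the continuous version $\widetilde{(\cdot)}$ and the strict-causalisation $(\cdot)_{-}$, i.e. checking that passing to left limits collapses $\widetilde\theta$ back onto $\theta$ and that $\tfrac{\theta}{x}V$ is a legitimate regulated functional (division by $x>0$ being harmless). A secondary, purely technical point is the restriction to $\Lambda_{+}$: the value at $t=0$ is the free normalisation $V_{0}$, so the derivative identity is imposed only for $t>0$, and reconciling the $t=0$ endpoint with the convention $x(0-)=x_0$ is where I expect any slip to occur rather than in a deep estimate.
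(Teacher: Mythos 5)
Your proposal is correct and takes essentially the same route as the paper: both directions reduce to the Representation Theorem (Thm.~\ref{thm:self}) applied to the implementation (\ref{eq:implement}), with (\ref{eq:implement2}) identifying $V$ as the portfolio value, exactly as in the paper's own proof. The only difference is that you spell out the bookkeeping identity $\phi_{-}=\left(\frac{\theta}{x}V\right)_{-}$ via $\widetilde{\theta}_{-}=\theta_{-}$ (Def.~\ref{def:version}), a step the paper leaves implicit when it writes $\nabla_{x}V=\phi_{-}$.
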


\begin{proof}
If $V\in \M_{+}$ solves (\ref{eq:pde}), we define $(\phi,\psi)$ by (\ref{eq:implement}) and obtain $\nabla_{x}V=\phi_{-}$. It follows from (\ref{eq:implement2}), Thm.~\ref{thm:self} and Def.~\ref{def:alloc_self} that $\theta$ is self-financing. On the other hand, if $\theta$ is self-financing, then by Def.~\ref{def:alloc_self}, there exists a continuous $V>0$ such that the trading strategy (\ref{eq:implement}) is self-financing. By (\ref{eq:implement}), (\ref{eq:implement2}) and Thm.~\ref{thm:self}, we obtain $V\in \M_{+}$; $V$ solves (\ref{eq:pde}).\end{proof}

\begin{Rem}[Economic interpretation]
The PPDE \eqref{eq:pde} characterises the market-driven PnL. The wealth sensitivity to a price tick $\nabla_{x}V$ must exactly equal the shares held prior to the perturbation $\left(\frac{\theta}{x}V\right)_{-}$. Practically, this ensures the allocation strategy operates with zero exogenous cash flows; wealth fluctuations are strictly driven by existing positions reacting to market movements.
\end{Rem}

\begin{Prop}[Explicit solution]\label{prop:general}
If $V\in\M_{+}$ solves (\ref{eq:pde}), then the explicit solution is given by\begin{eqnarray}
V(t,x_t)=V_0 \lim_{n\rightarrow\infty}\prod_{\pi_n\ni t_i\leq t}\left(1+\theta(t_i,x^n_{t_i-})\cdot \frac{\Delta x^n(t_i)}{x(t_i)}\right).\label{eq:general}
\end{eqnarray}
\end{Prop}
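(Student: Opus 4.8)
The plan is to deduce the product formula from the additive representation of $V$ as a pathwise integral, turning the sum into a product by exploiting the defining proportionality $\phi=(\theta/x)V$ of the implementation (\ref{eq:implement}).

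First I would record the starting point. Since $V\in\M_{+}$ solves (\ref{eq:pde}), we have $\nabla_{x}V=\left(\frac{\theta}{x}V\right)_{-}=\phi_{-}$, so Thm.~\ref{thm:self} yields $V(t,x_t)=V_0+\int_0^t\phi(s,x_{s-})dx$ with $\nabla_{x}V=\phi_{-}$. By Def.~\ref{def:pathwise} this integral is the limit of the discrete left-Riemann sums $\II(t,x^n_t)=\sum_{\pi_n\ni t_i\leq t}\phi(t_i,x^n_{t_i-})\cdot\Delta x^n(t_i)$. Because the stopped path $x^n_{t_i-}$ carries current price $x(t_i)$, the implementation gives $\phi(t_i,x^n_{t_i-})=\frac{\theta(t_i,x^n_{t_i-})}{x(t_i)}\,V(t_i,x^n_{t_i-})$, which is exactly the quantity appearing in (\ref{eq:general}).

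Next I would convert the sum into a product at fixed $n$. Writing $W_k:=V(t_k,x^n_{t_k-})$, the key identity to establish is the partial-sum relation $W_k=V_0+\sum_{\pi_n\ni t_i<t_k}\phi(t_i,x^n_{t_i-})\cdot\Delta x^n(t_i)$, i.e. the discrete pre-limit wealth $V_0+\II(\cdot,x^n)$ coincides with $V$ evaluated along the step path $x^n$. Granting this, substituting the displayed expression for $\phi$ and inducting on $k$ gives the one-step recursion $W_{k+1}=W_k\bigl(1+\theta(t_k,x^n_{t_k-})\cdot\frac{\Delta x^n(t_k)}{x(t_k)}\bigr)$; iterating from $W_0=V_0$ produces $W_k=V_0\prod_{\pi_n\ni t_i<t_k}\bigl(1+\theta(t_i,x^n_{t_i-})\cdot\frac{\Delta x^n(t_i)}{x(t_i)}\bigr)$. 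Positivity of $V\in\M_{+}$ keeps each factor positive, so the product is well defined. Finally, letting $n\to\infty$ and using continuity of $V$ in the $\pi$-topology (class $\M\subset C(\Lambda)$, properties 1(c)/2(c) of Def.~\ref{prop:pi}) to pass $W_{k(n)}\to V(t,x_t)$ along $t_{k(n)}\uparrow t$ yields (\ref{eq:general}).

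The hard part will be the partial-sum identity $W_{k+1}-W_k=\phi(t_k,x^n_{t_k-})\cdot\Delta x^n(t_k)$, i.e. that the \emph{finite} vertical increment of $V$ across each grid jump is captured \emph{exactly} by the first-order term $\phi_{-}\cdot\Delta x^n$ with no higher-order remainder. This is not a consequence of vertical differentiability, which only governs infinitesimal perturbations; it requires the full strength of the fundamental theorem (Thm.~\ref{thm:ftc}) that a driftless class-$\M_0$ functional genuinely \emph{is} a pathwise integral, forcing its increments along step paths to accumulate precisely as $\sum\phi_{-}\cdot\Delta x^n$. Making this collapse rigorous, reconciling $\theta$ with its continuous version at the grid evaluations, and justifying the interchange of the $n\to\infty$ limit with the finite product (together with the endpoint bookkeeping at $t$) constitute the technical core.
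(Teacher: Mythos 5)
Your proposal is correct and follows essentially the same route as the paper's proof: both reduce (\ref{eq:general}) to the fact that $V-V_0\in\M_0$ is a pathwise integral (Thm.~\ref{thm:ftc}), so that along the step paths $x^n$ its value is \emph{by Def.~\ref{def:pathwise}} the discrete left-Riemann sum, whence the PPDE (\ref{eq:pde}) together with the horizontal invariance $V(t_{i-1},x^n_{t_{i-1}})=V(t_i,x^n_{t_i-})$ yields the one-step multiplicative recursion, and $\pi$-continuity of $V$ (Def.~\ref{prop:pi}, 2(c)) passes the finite product to the limit. The ``hard part'' you flag -- that the finite vertical increment is exactly first order -- is precisely what the paper dispatches via Def.~\ref{def:pathwise}, since the integral functional is defined on the approximation points $(t,x^n_t)$ as the partial sums themselves.
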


\begin{proof}
By Thm.~\ref{thm:ftc}, we see that $V-V_0$ is a pathwise integral. By Def.~\ref{def:pathwise}, (\ref{eq:pde}) and the observation that $$
V(t_{i-1},x^{n}_{t_{i-1}})=V(t_{i},x^{n}_{t_i-})
,$$ we obtain \begin{alignat}{2}
V(t_{i},x^{n}_{t_{i}})&=V(t_{i},x^{n}_{t_{i-}})+V(t_i,x^n_{t_i-})\frac{\theta(t_i,x^n_{t_i-})}{x^n(t_i-)}\cdot\left(x(t_{i+1})-x(t_i)\right)\nonumber\\
&=V(t_{i},x^{n}_{t_{i-}})\left(1+\theta(t_i,x^n_{t_i-})\cdot \frac{\Delta x^n(t_i)}{x(t_i)}\right)\label{eq:diff_eq}
\end{alignat}for every $0<t_i\in\pi_n$. By the continuity of V, the proof is complete.
\end{proof}

\begin{Rem}[Path-dependency]
Even if the allocation $\theta$ were to be a constant (without a unit component), the solution (\ref{eq:general}) is in general \emph{path-dependent}, see also \cite[4.16]{CC2}.\end{Rem}

\begin{Cor}[Uniqueness]\label{cor:unique}
Let $\theta$ be an allocation, $\xi\in\Xi$ where \begin{eqnarray}
\Xi:=\{V_0|V\in \M_{+}\}\label{eq:initial}
\end{eqnarray}and $U, W \in \M_{+}$ be two solutions to the following Cauchy (initial value) problem on $\Lambda_{+}$:

\begin{eqnarray}
\begin{cases}
\nabla_{x}V-\left(\frac{\theta}{x}V\right)_{-}=0\\
V_0=\xi
\end{cases}\label{eq:cauchy},
\end{eqnarray}then $U\equiv W$ on $\Lambda$.
\end{Cor}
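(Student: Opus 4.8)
The plan is to obtain uniqueness as a formal consequence of the explicit representation in Proposition~\ref{prop:general}, whose key feature is that it expresses any solution as its initial value multiplied by a quantity that depends only on the prescribed data $(\theta,x)$ and not on the solution itself. Since $U$ and $W$ are constrained to share the same initial value $\xi$, they must coincide.

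Concretely, I would first apply Proposition~\ref{prop:general} to each of $U$ and $W$ separately. Both lie in $\M_{+}$ and both solve the wealth equation \eqref{eq:pde}, so the proposition applies verbatim and yields, for every $(t,x_t)\in\Lambda_{+}$,
\[
U(t,x_t) = U_0\,\Pi(t,x_t), \qquad W(t,x_t) = W_0\,\Pi(t,x_t),
\]
where
\[
\Pi(t,x_t) := \lim_{n\to\infty}\prod_{\pi_n\ni t_i\leq t}\left(1+\theta(t_i,x^n_{t_i-})\cdot \frac{\Delta x^n(t_i)}{x(t_i)}\right).
\]
The crucial observation is that the functional $\Pi$ is \emph{identical} in the two identities: every factor in the product is built solely from the fixed allocation $\theta$ and the path $x$ (through $x^n$ and $\Delta x^n$, evaluated along the fixed partition sequence $\pi$), and is entirely independent of which solution is being represented. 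In particular, the limit defining $\Pi$ exists, being the common value delivered by Proposition~\ref{prop:general} applied to either solution. Invoking the initial condition $U_0 = W_0 = \xi$ then gives $U(t,x_t) = \xi\,\Pi(t,x_t) = W(t,x_t)$ for all $(t,x_t)\in\Lambda_{+}$.

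It remains only to extend the equality to the time-zero slice $t=0$, which is excluded from $\Lambda_{+}$ and therefore not covered by the representation formula; there one simply has $U(0,x_0)=U_0=\xi=W_0=W(0,x_0)$ by the prescribed data. Combining the two cases yields $U\equiv W$ on all of $\Lambda$. I do not expect a substantive obstacle in this argument: Proposition~\ref{prop:general} carries the full analytic content, and uniqueness is immediate once one notes that the representation is linear in the initial value with a coefficient determined solely by $(\theta,x)$. The only point meriting a moment's care is the bookkeeping distinction between $\Lambda_{+}$ and $\Lambda$, so that the boundary slice $t=0$ is handled through the initial condition rather than through the representation. (Equivalently, one could argue that $U-W\in\M_{0}$ satisfies the same linear equation with vanishing initial value and re-run the recursion of Proposition~\ref{prop:general}'s proof to conclude $U-W\equiv 0$, but applying the representation directly to $U$ and $W$ is cleaner and avoids leaving the class $\M_{+}$.)
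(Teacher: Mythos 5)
Your proposal is correct and is precisely the paper's argument: the paper proves Cor.~\ref{cor:unique} simply by declaring it an immediate consequence of Prop.~\ref{prop:general}, and your write-up fills in exactly that reasoning---the representation factor depends only on $(\theta,x)$, so equal initial values force $U\equiv W$, with the $t=0$ slice handled by the initial condition. No gaps; you have merely made explicit what the paper leaves implicit.
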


\begin{proof}
It is an immediate consequence of Prop.~\ref{prop:general}.
\end{proof}

\begin{Def}[Pathwise exponential]\label{def:exponent}
Let $\theta$ be an allocation and $\xi\in\Xi$ as defined in (\ref{eq:initial}). For every $x\in\Omega$, define  \begin{eqnarray}
\E(t,x^{n}_t):=\xi\prod_{\pi_n\ni t_i\leq t}\left(1+\theta(t_i,x^n_{t_i-})\cdot \frac{\Delta x^n(t_i)}{x(t_i)}\right),\label{eq:dd}
\end{eqnarray}for all $t\geq 0$. If $\E(t,x_t):=\lim_n \E(t,x^{n}_t)>0$ exists and continuous on $\Lambda$, then we write $\E:=\E(\xi,\theta)$. We shall call $\E(\xi,\theta)$ the \emph{pathwise 
exponential} of $\theta$ with initial value $\xi$.
\end{Def}

\begin{Prop}[A sufficent condition]
Let $\theta$ be an allocation, $\xi\in\Xi$ and $\E(t,x^{n}_t)$ as defined in (\ref{eq:dd}). If for every $x\in\Omega$, $T>0$, the collection of step functions $$(t\mapsto \E(t,x^{n}_t))_{n}$$ is a Cauchy sequence in $D([0,T],\R)$ with regard to a complete J$_1$ metric and its limit is positive, then the pathwise exponential $\E(\xi, \theta)$ exists.   
\end{Prop}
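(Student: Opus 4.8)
The plan is to use the completeness of $\d$ to manufacture a candidate limit, read it as a functional on $\Lambda$, and then verify the two requirements built into Def.~\ref{def:exponent}: that the \emph{pointwise} limit $\lim_n\E(t,x^n_t)$ exists and is positive, and that the resulting functional is continuous on $\Lambda$.

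First I would fix $x\in\Omega$ and $T>0$ and invoke completeness directly: since $(t\mapsto\E(t,x^n_t))_n$ is Cauchy in $(D([0,T],\R),\d)$ and $\d$ is complete, it converges in the J$_1$ topology to a \cadlag limit $\E^x\in D([0,T],\R)$, which is positive by hypothesis. Letting $T\uparrow\infty$ and using that the product (\ref{eq:dd}) truncated at $t$ depends only on the path up to $t$, these limits are consistent across horizons and assemble into a single positive functional, which I define to be $\E(t,x_t):=\E^x(t)$.

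The key reduction is that the pointwise-convergence requirement of Def.~\ref{def:exponent} is \emph{free} once I know $\E\in C(\Lambda)$. Indeed, applying the continuity condition 2(c) of Def.~\ref{prop:pi} along the constant sequence $t_n\equiv t$ (which satisfies $t_n\ge t'_n$ since $t'_n<t$) yields $\E(t,x^n_t)\to\E(t,x_t)$; this is exactly the pointwise limit demanded by Def.~\ref{def:exponent}, and it is positive and continuous by construction. Hence it suffices to prove $\E\in C(\Lambda)$.

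The main obstacle is therefore verifying the eight continuity conditions of Def.~\ref{prop:pi} for $\E$. Conditions 1(a)--(b) and 2(a)--(b) are just the one-sided limits of $\E^x$ and follow immediately from $\E^x$ being \cadlag. The substance lies in 1(c)--(d) and 2(c)--(d), which compare $\E$ evaluated along the approximations $x^n$ at moving times $t_n\to t$ with the one-sided values $\E^x(t-)$ and $\E^x(t)$. Here I would use the time-change characterization of J$_1$ convergence, i.e.\ increasing homeomorphisms $\lambda_n$ of $[0,T]$ with $\|\lambda_n-\mathrm{id}\|_\infty\to 0$ and $\sup_s|\E(\lambda_n(s),x^n_{\lambda_n(s)})-\E^x(s)|\to 0$, and translate the partition-dependent constraints relating $t_n$ to $t'_n$ into the selection of the pre-jump value $\E^x(t-)$ versus the post-jump value $\E^x(t)$. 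The delicate part is precisely at the discontinuities of $\E^x$, which sit at the jump times of $x$: there one must check that the constraint $t_n\ge t'_n$ (resp.\ $t_n<t'_n$) forces the J$_1$ alignment to pick up the correct side of the jump, so that the limits in 2(c),(d) land on $\E^x(t)$ and those in 1(c),(d) land on $\E^x(t-)$. Once this matching is confirmed, $\E\in C(\Lambda)$, the reduction above supplies the pointwise limit, and both clauses of Def.~\ref{def:exponent} hold, so $\E(\xi,\theta)$ exists.
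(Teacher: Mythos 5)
Your plan is not the paper's: the paper disposes of this proposition in two lines, by observing (via the recursion (\ref{eq:diff_eq})) that the partial products (\ref{eq:dd}) are exactly left Riemann sums of the form (\ref{eq:lrs}) shifted by the initial value $\xi$ --- i.e.\ a discrete pathwise integral modulo an initial value --- and then invoking \cite[Thm.~5.6]{CC2}, which is precisely the statement ``J$_1$-Cauchy discrete sums converge pointwise and the limit is a continuous functional in the $\pi$-topology.'' What you propose is to re-prove that cited theorem from scratch, and your sketch stops being a proof exactly at its hard steps. First, well-definedness: setting $\E(t,x_t):=\E^x(t)$ path by path does not obviously define a functional on $\Lambda$. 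The prelimit $\E(s,x^n_s)$ anticipates $x$ beyond $s$ (through $x(t_{i+1})$ on the straddling interval), so the same point of $\Lambda$ reached via different paths --- $x$, the stopped path $x_s$, or $x_{s-}$, all of which lie in $\Omega$ --- may a priori receive different values; your ``consistent across horizons'' remark only addresses varying $T$. This also undercuts two later steps: conditions 1(a)--(b), 2(a)--(b) are \emph{not} immediate from $\E^x$ being \cadlag, because they evaluate the functional at the distinct paths $x_{s-}$, $x_s$, not at left/right limits of the single function $s\mapsto\E^x(s)$; and your reduction of the pointwise limit to condition 2(c) with $t_n\equiv t$ (a nice observation in itself) silently requires that the constructed functional, evaluated at $(t,x^n_t)$, agree with the finite product (\ref{eq:dd}) --- i.e.\ that $\lim_m\E(t,(x^n)^m_t)$ equals $\E(t,x^n_t)$, which needs an argument since the jump times of the step path $x^n$ need not lie in $\pi_m$ for $m\neq n$.

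Second, and more seriously, the core of your plan --- verifying 1(c)--(d) and 2(c)--(d) from the time-change characterization of the assumed J$_1$ convergence --- cannot work from the Cauchy hypothesis alone. J$_1$ convergence of $t\mapsto\E(t,x^n_t)$ to $\E^x$ controls neither pointwise convergence at the jump times of $\E^x$ nor on which side of $t$, relative to the partition-dependent thresholds $t'_n$, the prelimit jumps fall; so the ``alignment picks up the correct side of the jump'' is not a consequence of the hypothesis but the very thing to be proved. The needed input is the specific combinatorial structure of the approximation: in (\ref{eq:dd}) the increment of $x$ over $(t_i,t_{i+1}]$ is charged at the left endpoint $t_i$, so the prelimit step functions jump at partition points just \emph{before} the corresponding jump of $x$, and it is exactly this bookkeeping, carried out against Def.~\ref{prop:pi}, that constitutes the proof of \cite[Thm.~5.6]{CC2}. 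You correctly flag this as ``the delicate part'' but leave it as ``one must check,'' so the proposal amounts to a restatement of the proposition (modulo the product-to-sum rewriting the paper uses) rather than a proof of it.
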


\begin{proof}
We can write (\ref{eq:dd}) as a discrete pathwise integral modulo an initial value. The claim now follows from \cite[Thm. 5.6]{CC2}.
\end{proof}

\begin{Thm}[Existence]\label{thm:exist}
Let $\theta$ be an allocation and $\xi\in\Xi$. The following are equivalent:
\begin{itemize}
    \item [(i)] $\lim_n \E(t,x^{n}_t)$ in (\ref{eq:dd}) exists, is positive and continuous on $\Lambda$.

    \item [(ii)] The pathwise exponential $\E(\xi,\theta)$ exists and is the unique $\M_{+}$ solution to the Cauchy problem (\ref{eq:cauchy}).

    \item [(iii)] $\theta$ is self-financing. 
\end{itemize}
\end{Thm}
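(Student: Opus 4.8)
The plan is to establish the three-way equivalence through the cycle (i)~$\Rightarrow$~(ii)~$\Rightarrow$~(iii)~$\Rightarrow$~(i). The first thing to observe is that, by Def.~\ref{def:exponent}, statement (i) is \emph{verbatim} the assertion that the pathwise exponential $\E(\xi,\theta)$ exists; hence the only content of (ii) beyond (i) is that this limit is the \emph{unique} $\M_{+}$ solution of the Cauchy problem (\ref{eq:cauchy}). Consequently the two implications issuing from (ii) and (iii) are short, and essentially all of the work sits in (i)~$\Rightarrow$~(ii).

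For the short directions I would argue as follows. Assuming (ii), the functional $\E(\xi,\theta)\in\M_{+}$ solves (\ref{eq:pde}), so Thm.~\ref{thm:alloc_self} (Wealth equation) immediately yields that $\theta$ is self-financing, giving (iii). For (iii)~$\Rightarrow$~(i), I would first invoke Thm.~\ref{thm:alloc_self} to obtain \emph{some} $W\in\M_{+}$ solving (\ref{eq:pde}); because (\ref{eq:pde}) is linear and homogeneous in $V$ and $\M_{+}$ is stable under scaling by positive constants, the rescaled functional $V:=(\xi/W_0)\,W\in\M_{+}$ solves (\ref{eq:pde}) with $V_0=\xi$ (note $\xi\in\Xi$ forces $\xi>0$, and $W_0>0$). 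By Prop.~\ref{prop:general} (Explicit solution) this $V$ coincides with (\ref{eq:general}), which for initial value $\xi$ is precisely $\lim_n\E(t,x^n_t)$; hence the limit in (i) exists, is positive and continuous.

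The substantive direction is (i)~$\Rightarrow$~(ii). Starting from the multiplicative recursion (\ref{eq:diff_eq}) derived in the proof of Prop.~\ref{prop:general}, I would telescope to obtain
\begin{eqnarray*}
\E(t,x^n_t)-\xi=\sum_{\pi_n\ni t_i\leq t}\E(t_i,x^n_{t_i-})\,\frac{\theta(t_i,x^n_{t_i-})}{x(t_i)}\cdot\Delta x^n(t_i),
\end{eqnarray*}
exhibiting $\E(\cdot,x^n)-\xi$ as a left-Riemann sum of the form (\ref{eq:lrs}). Since by (i) the limit $V:=\lim_n\E(\cdot,x^n)$ exists, is positive and continuous, I would pass to the limit using the convergence machinery behind the preceding Proposition, namely \cite[Thm.~5.6]{CC2}, to conclude that $V-\xi$ is a pathwise integral; hence $V-\xi\in\M_{0}$ by Thm.~\ref{thm:ftc} (FTC), and therefore $V\in\M_{+}$. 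Identifying the limiting integrand as $\left(\frac{\theta}{x}V\right)_{-}$ (using that $\theta$ is regulated, $V$ continuous and $x$ positive), Thm.~\ref{thm:self} (Representation) gives $\nabla_{x}V=\left(\frac{\theta}{x}V\right)_{-}$, i.e.\ $V$ solves (\ref{eq:pde}), while $V_0=\E(0,x_0)=\xi$. Thus $V$ solves the Cauchy problem (\ref{eq:cauchy}), and Cor.~\ref{cor:unique} (Uniqueness) upgrades this to the statement that $V=\E(\xi,\theta)$ is its unique $\M_{+}$ solution, which is (ii).

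The main obstacle is precisely this passage to the limit. The discrete sums above carry an $n$-dependent, \emph{self-referential} integrand, since the factor $\E(t_i,x^n_{t_i-})$ is the approximation itself rather than a fixed functional evaluated along $x^n$; consequently Def.~\ref{def:pathwise} cannot be applied directly with a single integrand $\phi$. One must control the discrepancy between $\E(t_i,x^n_{t_i-})$ and $V(t_i,x^n_{t_i-})$ inside the sums, and this is genuinely delicate because on a generic domain the increments $\Delta x^n$ need not be summable (the paths may exhibit no finite variation of any order), so the error cannot be dominated by $\sum_i|\Delta x^n(t_i)|$. This is exactly the difficulty that \cite[Thm.~5.6]{CC2} is designed to resolve, and I expect the crux of a fully rigorous argument to be verifying that the continuity of the limit in (i) supplies the J$_1$-convergence hypothesis needed to apply it.
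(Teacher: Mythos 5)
Your cycle (i)$\Rightarrow$(ii)$\Rightarrow$(iii)$\Rightarrow$(i) is the same as the paper's, and your two short legs essentially coincide with the paper's: (ii)$\Rightarrow$(iii) is an application of Thm.~\ref{thm:alloc_self}, and (iii)$\Rightarrow$(i) is the rescaling $U:=\frac{V}{V_0}\xi$ followed by Prop.~\ref{prop:general}. The problem is the leg you yourself flag as unfinished, (i)$\Rightarrow$(ii), which is the only substantive one. Your telescoped sum carries the $n$-dependent, self-referential factor $\E(t_i,x^n_{t_i-})$ (the stage-$n$ partial product), so it is not a left-Riemann sum of any fixed functional in the sense of Def.~\ref{def:pathwise}; before invoking Thm.~\ref{thm:ftc} or \cite[Thm.~5.6]{CC2} you would have to show that replacing the partial products by the limit functional $V$ evaluated along $x^n$ produces a vanishing error, and, as you correctly note, on a generic domain the increments $\Delta x^n$ enjoy no summability, so no domination argument is available. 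Even your identification of the limiting integrand as $\left(\frac{\theta}{x}V\right)_{-}$ presupposes exactly this consistency between $V$ on step paths and the partial products. You state that you \emph{expect} the continuity in (i) to supply the missing J$_1$-convergence hypothesis, but you never verify it; as written, the proposal proves the two easy implications and reduces the crux of the theorem to an unproved limit interchange.

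The paper avoids this difficulty entirely: it never integrates. Given (i), it computes $\nabla_x\E$ directly from the product formula via a vertical bump $z:=x+e\I_{[t,\infty)}$. At each stage $n$ the factors at $t_i<t'_n$ are unchanged by the perturbation, so the difference $\E(t,z^n_t)-\E(t,x^n_t)$ concentrates in the final factor and, passing to the limit using only the continuity of $\E$ from (i) and the left-continuity of $\theta_{-}$ and $x$, one gets the exact linear identity $\E(t,z_t)-\E(t,x_t)=\left(\frac{\theta(t,x_{t-})}{x(t-)}\E(t,x_{t-})\right)\cdot e$, with no remainder and no Riemann-sum interchange. Hence $\nabla_x\E=(\frac{\theta}{x}\E)_{-}$ on $\Lambda_{+}$; moreover $\D\E=0$ (a stopped path has vanishing increments beyond $t$, so the discrete products do not move under horizontal extension) and $\nabla_x\E$ is strictly causal and left continuous, so $\E\in\M_{+}$ by Def.~\ref{def:smooth}, and Prop.~\ref{prop:general} together with Cor.~\ref{cor:unique} delivers (ii). If you want to salvage your route, this local vertical computation is precisely what you would need to close your gap, at which point the telescoping becomes superfluous.
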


\begin{proof}
If (i) holds, we put $z:=x+e\I_{[t,\infty)}\in \Omega$ (for sufficiently small $e$). For $t>0$, we observe \begin{alignat*}{2}
\E(t,z_t)-\E(t,x_t)&=\lim_{n}\left(\E(t,z^{n}_t)-\E(t,x^{n}_t)\right)\\
&=\lim_n \left(\frac{\theta(t'_n,x^n_{t'_n-})}{x(t'_n)}\E(t'_n, x^n_{t'_n-} )\right)\cdot e\\
&=\left(\frac{\theta(t,x_{t-})}{x(t-)}\E(t, x_{t-} )\right)\cdot e,
\end{alignat*}by the continuity of $\E$ and the left-continuity of $\theta_{-}$ and $x$. From the above equality, we see that $\nabla_x \E=(\frac{\theta}{x}\E)_{-}$ on $\Lambda_{+}$ i.e. $\E$ solves (\ref{eq:pde}). Therefore, we also see that $\nabla_x \E$ is strictly causal. Since $\mathcal E(t+h,x_t)=\mathcal E(t,x_t)$ for all $h>0$, it follows $\D \E=0$. By Def.~\ref{def:smooth}, Prop.~\ref{prop:general} and Cor.~\ref{cor:unique}, we have established that $\E\in \M_{+}$ and arrived at (ii). By Thm.~\ref{thm:alloc_self}, we proceed to (iii) and finally if $V\in M_{+}$ solves (\ref{eq:pde}), we put $U:=\frac{V}{V_0}\xi$, then $U\in \M_{+}$ also solves (\ref{eq:pde}) with $U_0=\xi$. By Thm.~\ref{thm:alloc_self},  Prop.~\ref{prop:general}, we deduce (i).
\end{proof}

\begin{Cor}\label{cor:ddmap}
The $\M_+$-solution map to the Cauchy problem (\ref{eq:cauchy})\begin{alignat}{2}
\E:\Xi\times\Theta(\Lambda)&\longmapsto \M_{+}(\Lambda)\nonumber\\
(\xi, \theta)&\longmapsto \E(\xi,\theta)\label{eq:ddmap}
\end{alignat}is well-defined.
\end{Cor}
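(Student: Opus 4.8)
The plan is to check the three things that together constitute well-definedness of the map (\ref{eq:ddmap}): that the value $\E(\xi,\theta)$ is defined for every input pair, that the assignment is single-valued, and that the output always lands in the declared codomain $\M_{+}(\Lambda)$. The entire argument should reduce to a repackaging of Thm.~\ref{thm:exist} together with Cor.~\ref{cor:unique}, since the substantive analytic work has already been carried out there.

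First I would fix an arbitrary pair $(\xi,\theta)\in\Xi\times\Theta(\Lambda)$. By Def.~\ref{def:alloc_self}, membership $\theta\in\Theta(\Lambda)$ is exactly the assertion that $\theta$ is self-financing, which is condition (iii) of Thm.~\ref{thm:exist}. Invoking the implication (iii)$\Rightarrow$(ii) of that theorem then yields at once that the pathwise exponential $\E(\xi,\theta)$ exists and is the unique $\M_{+}$ solution of the Cauchy problem (\ref{eq:cauchy}) with initial datum $\xi$. This single step simultaneously discharges existence (the defining limit in (\ref{eq:dd}) converges to a positive, continuous functional) and the codomain requirement $\E(\xi,\theta)\in\M_{+}(\Lambda)$. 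For single-valuedness I would observe that $\E(\xi,\theta)$ is defined as the limit of the fixed sequence (\ref{eq:dd}) in the Hausdorff J$_1$ topology, so whenever it exists it is determined unambiguously; Cor.~\ref{cor:unique} reinforces this by confirming that any two $\M_{+}$ solutions of (\ref{eq:cauchy}) sharing initial value $\xi$ coincide on $\Lambda$, so the map cannot return two distinct functionals on the same input.

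The proof carries no genuine obstacle, but the one point I would be careful about is that existence must hold for \emph{every} $\xi\in\Xi$, not merely for the particular initial value $V_0$ of the self-financing witness $V$ that certifies $\theta\in\Theta(\Lambda)$. This uniformity over $\xi$ is precisely what the rescaling step $U:=\frac{V}{V_0}\xi$ in the proof of Thm.~\ref{thm:exist} provides: because the wealth equation (\ref{eq:pde}) is linear in $V$, rescaling any $\M_{+}$ solution to a prescribed positive initial value again produces an $\M_{+}$ solution. I would therefore emphasize this linearity as the crux that guarantees the domain of (\ref{eq:ddmap}) is the full product $\Xi\times\Theta(\Lambda)$, after which the corollary follows formally.
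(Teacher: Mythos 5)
Your proposal is correct and takes essentially the same route as the paper, which proves the corollary as an immediate consequence of Thm.~\ref{thm:exist}: the implication (iii)$\Rightarrow$(ii) there delivers, for every $\xi\in\Xi$, existence, positivity, continuity, and uniqueness of $\E(\xi,\theta)$ as the $\M_{+}$ solution to (\ref{eq:cauchy}), which is exactly well-definedness of the map (\ref{eq:ddmap}). Your closing observation that the rescaling $U:=\frac{V}{V_0}\xi$ inside the theorem's proof (via linearity of (\ref{eq:pde})) is what secures the full product domain $\Xi\times\Theta(\Lambda)$ is an accurate gloss on why the paper can call the conclusion immediate.
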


\begin{proof}
It is an immediate consequence of Thm.~\ref{thm:exist}.
\end{proof}

\begin{Rem}\label{rem:ddprop}
$\alpha \E(1,\theta)=\E(\alpha,\theta)$, for all $\alpha>0$ due to (\ref{eq:dd}).
\end{Rem}

\subsection{Examples}

We now provide examples of explicit solution to the Cauchy problem which characterises the evolution of portfolio value (a.k.a. wealth) associated with a given self-financing allocation across generic scenarios. 

\begin{Eg}
If $\Omega\subset \SS$, then every allocation is self-financing. 
\end{Eg}

\begin{proof}
If $\Omega\subset\SS$, then (\ref{eq:dd}) becomes a finite product.
\end{proof}

\begin{Eg}[Single stock]\label{Eg:single}
Let $1\leq i \leq d$ and $\theta(t,x_t):=\mathbf{e}_i$. Then $\theta\in\Theta(\Lambda)$ and 
$$\E(1,\theta)(t,x_t)=\frac{x_i(t)}{x_i(0)}.$$
\end{Eg}

\begin{Eg}[Market index]\label{Eg:market}
Let $$\theta(t,x_t):=\frac{x(t)}{\sum_{i=1}^d x_i(t)}=\frac{1}{\sum_{i=1}^d x_i(t)}(x_1(t),\ldots,x_d(t))'.$$Then
$\theta\in\Theta(\Lambda)$ and \begin{eqnarray}
\E(1,\theta)(t,x_t)=\frac{\sum_{i=1}^d x_i(t)}{\sum_{i=1}^d x_i(0)}\leq \max_{1\leq i\leq d}\frac{x_i(t)}{x_i(0)}.\label{eq:ine}\end{eqnarray}
\end{Eg}


\begin{Eg}[Simple average]\label{Eg:avg}
Let $T>0$, $1\leq i \leq d$ and $$\theta(t,x_t):=\frac{ x_i(t)\left(1- \frac{t\wedge T}{T}\right) }{\frac{1}{T}\int_{0}^{t}{x_i(s)}ds+x_i(t)\left(1- \frac{t\wedge T}{T}\right) }\mathbf{e}_i.$$Then
$\theta\in\Theta(\Lambda)$ and $$
\E(1,\theta)(t,x_t)=\frac{1}{T}\int_{0}^{T}\frac{x_i(s\wedge t)}{x_i(0)}ds.$$
\end{Eg}

\begin{Eg}[Exponential average]\label{Eg:exp}
Let $\lambda>0$, $1\leq i \leq d$ and $$\theta(t,x_t):=\frac{ {x_i(t)} }{\lambda\int_{0}^{t}x_i(s) e^{\lambda (t-s) }ds+x_i(t)}\mathbf{e}_i.$$Then
$\theta\in\Theta(\Lambda)$ and $$
\E(1,\theta)(t,x_t)=\lambda \int_{0}^{\infty}\frac{x_i(s\wedge t)}{x_i(0)}e^{-\lambda s}ds.$$
\end{Eg}

\begin{Eg}[Portfolio of portfolio]\label{Eg:pop}
Let $T>0$, $\theta\in\Theta(\Lambda)$, $\E_{\theta}:=\E(1,\theta)$. Define a new portfolio
$$ 
\bar{\theta}(t,x_t):=\frac{\theta(t,x_t)\E_{\theta}(t,x_t)\left(1- \frac{t\wedge T}{T}\right)}{\frac{1}{T}\int_{0}^{t}\E_{\theta}(s,x_s)ds +\E_{\theta}(t,x_t)\left(1- \frac{t\wedge T}{T}\right)}.$$
Then
$\bar{\theta}\in\Theta(\Lambda)$ and $$
\E(1,\bar{\theta})(t,x_t)=\frac{1}{T}\int_{0}^{T}\E_{\theta}(s,x_{s\wedge t})ds.$$
\end{Eg}

\begin{proof}
The inequality (\ref{eq:ine}) follows from \cite[Lem. 1]{CO}. All examples follows from Thm.~\ref{thm:ftc}, Thm.~\ref{thm:alloc_self} and  Cor.~\ref{cor:ddmap} by simply differentiating $\E$ because $\D\E=0$ in all cases. For instance, we have for the market index: \[
\nabla_{x}\left(\frac{\sum_{i=1}^d x_i(t)}{\sum_{i=1}^d x_i(0)}\right)=\frac{1}{\sum_{i=1}^d x_i(0)}(1,\ldots,1)'=\left(\frac{\theta}{x}\left(\frac{\sum_{i=1}^d x_i(t)}{\sum_{i=1}^d x_i(0)}\right)\right)_{-} 
\] 
\end{proof}

\begin{Rem}
The above examples (except the first) hold on every generic domain. In the last examples, the allocation strategies are path-dependent and do not necessarily admit variation of any order. Further path-dependent examples are provided in the next sections, for instance E.g. \ref{cor:softmax}. Single asset strategies may be extended to multi-asset strategy by aggregating, which we demonstrate in the next section.
\end{Rem}


\section{Allocation algorithms}\label{sec:pa}\noindent

In this section, we extend two discrete-time machine learning algorithms to continuous-time meta-learning algorithms. These algorithms take multiple strategies as input and track the wealth generated by the best individual strategy and the best convex combination of strategies, respectively. We show that their tracking errors in log wealth are bounded by $O(1)$ and $O(\ln t)$, respectively.  

The first algorithm operates on any generic domain and is based on an aggregating algorithm of Vovk \cite{VV} (also known as Exponential weights or the Laissez-faire algorithm in the investment context \cite[\S 1.2]{YK}), which belongs to the class of online learning with expert advice algorithms. Using this algorithm, we demonstrate how to generate new strategies in closed form from known strategies.

The second algorithm operates on paths with finite quadratic variations and is based on Cover's original algorithm \cite{TC}, which only applies to constant rebalanced portfolios. We extend this algorithm to the convex hull generated by multiple strategies and prove Cover's main theorem \cite[Thm. 6.1]{TC} in this context. We recall $\E$, the solution map from the previous section (\ref{eq:ddmap}) and define \begin{alignat}{2}
W:\Theta(\Lambda)&\longmapsto \M_{+}(\Lambda)\nonumber\\
\theta&\longmapsto W(\theta):=\E(1,\theta)\label{eq:wmap}
\end{alignat}$W(\theta)$ is called the \emph{wealth} associated with the allocation strategy $\theta\in\Theta$. 

\subsection{Best individual strategy}

For $m\in \N$, we let $\theta^{(k)}\in\Theta$, $W_{k}:=W(\theta^{(k)})$ for $k=1,\ldots,m$ and denote $$B:=\left\{(b_1,\ldots,b_m)'| b_k>0; \sum b_k=1\right\}$$ the set of initial weights and $\bar{B}$ be its closure.

\begin{Thm}[Laissez-faire algorithm]\label{thm:AA}
For every $b\in \bar{B}$, we define \begin{eqnarray*}
\widehat{\theta}(b):=\frac{\sum_{k=1}^{m}\theta^{(k)}b_k W(\theta^{(k)})}{\sum_{k=1}^{m}b_k W(\theta^{(k)})}, 
\end{eqnarray*}then $\widehat{\theta}(b)\in\Theta$ and

\begin{eqnarray}
W(\widehat{\theta}(b))=\sum_{k=1}^{m}b_k W(\theta^{(k)}).\label{eq:laissez}\end{eqnarray}
\end{Thm}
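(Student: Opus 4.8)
The plan is to guess the wealth process explicitly and then verify that it solves the Cauchy problem (\ref{eq:cauchy}) attached to $\widehat{\theta}:=\widehat{\theta}(b)$, after which the identity (\ref{eq:laissez}) follows from uniqueness. Concretely, I set
$$V:=\sum_{k=1}^{m}b_k W(\theta^{(k)})=\sum_{k=1}^{m}b_k W_k,$$
the claimed right-hand side of (\ref{eq:laissez}). Since each $W_k\in\M_{+}(\Lambda)$ and $\M(\Lambda)$ is a real vector space, $V$ is of class $\M$; because $b\in\bar B$ forces $\sum_k b_k=1$ with $b_k\geq 0$ and each $W_k>0$, we get $V>0$, so $V\in\M_{+}(\Lambda)$. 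Evaluating at time $0$ gives $V_0=\sum_k b_k (W_k)_0=\sum_k b_k=1$, since each $W_k=\E(1,\theta^{(k)})$ is, by Thm.~\ref{thm:exist}, the $\M_{+}$ solution of (\ref{eq:cauchy}) with initial value $\xi=1$.

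The computational core is a single algebraic identity. Component-wise, the definition of $\widehat{\theta}$ gives $\widehat{\theta}_i V=\sum_k b_k\,\theta^{(k)}_i W_k$, so dividing by $x_i$ yields the $\R^{d}$-valued identity
$$\frac{\widehat{\theta}}{x}V=\sum_{k=1}^{m}b_k\,\frac{\theta^{(k)}}{x}W_k.$$
The strictly causal operation $(\cdot)_{-}$ is precomposition with $(t,x_t)\mapsto(t,x_{t-})$, hence linear and commuting with finite sums; combining this with the linearity of $\nabla_{x}$ and the fact that each $W_k$ solves its own wealth equation $\nabla_{x}W_k=\left(\frac{\theta^{(k)}}{x}W_k\right)_{-}$ (Thm.~\ref{thm:alloc_self}), I obtain
$$\nabla_{x}V=\sum_{k=1}^{m}b_k\,\nabla_{x}W_k=\sum_{k=1}^{m}b_k\left(\frac{\theta^{(k)}}{x}W_k\right)_{-}=\left(\frac{\widehat{\theta}}{x}V\right)_{-}.$$
Thus $V\in\M_{+}$ solves (\ref{eq:pde}) for the allocation $\widehat{\theta}$ on $\Lambda_{+}$.

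Before invoking the representation results I must check that $\widehat{\theta}$ is a genuine allocation in the sense of Def.~\ref{def:alloc}: it is $\R_{+}^{d}$-valued because numerator and denominator are positive; it is regulated because the numerator is a finite sum of products of regulated functionals while the denominator $V$ is continuous and strictly positive, so $1/V$ is continuous and the algebra structure of regulated functionals (Rem.~\ref{rem:regulated}) applies; and the budget constraint holds since $\sum_i\widehat{\theta}_i=V^{-1}\sum_k b_k W_k\left(\sum_i\theta^{(k)}_i\right)\leq V^{-1}\sum_k b_k W_k=1$, using $\sum_i\theta^{(k)}_i\leq 1$. With $\widehat{\theta}$ admissible and $V\in\M_{+}$ solving its PDE, Thm.~\ref{thm:alloc_self} gives $\widehat{\theta}\in\Theta$, and Cor.~\ref{cor:unique} identifies $V$ as the unique $\M_{+}$ solution with initial value $1$, i.e. $W(\widehat{\theta})=\E(1,\widehat{\theta})=V=\sum_k b_k W_k$, which is (\ref{eq:laissez}). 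I expect the only genuinely delicate point to be the admissibility of $\widehat{\theta}$ — in particular confirming that the quotient remains regulated and that $(\cdot)_{-}$ distributes correctly across the convex combination — since the central identity and the verification of the PDE are immediate consequences of linearity.
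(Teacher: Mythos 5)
Your proposal is correct and takes essentially the same route as the paper: both verify that $M:=\sum_{k}b_k W_k\in\M_{+}$ solves the wealth equation (\ref{eq:pde}) for $\widehat{\theta}$ via the linearity of $\nabla_x$ and $(\cdot)_{-}$, then conclude $\widehat{\theta}(b)\in\Theta$ from Thm.~\ref{thm:alloc_self} and identify $W(\widehat{\theta}(b))=M$ by uniqueness (Thm.~\ref{thm:exist}(ii), Cor.~\ref{cor:ddmap}, equivalently your Cor.~\ref{cor:unique}). Your extra verification that $\widehat{\theta}$ is a genuine allocation in the sense of Def.~\ref{def:alloc} (regulated, $\R_{+}^{d}$-valued, budget constraint) is left implicit in the paper but does not change the argument.
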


\begin{proof}
Let us write $W_k:=W(\theta^{(k)})$ and $M:=\sum_{k}b_k W_k\in\M_{+}$ (convex cone), then $\nabla_x W_k=\left(\frac{\theta^{(k)}}{x}W_k\right)_{-}$ by (\ref{eq:pde}) and \begin{alignat*}{2}
\nabla_{x}M=\left(\frac{\sum_{k}\theta^{(k)}b_k W_k}{x}\right)_{-}=\left(\frac{\sum_{k}\theta^{(k)}b_k W_k/M}{x}M\right)_{-}
=\left(\frac{\widehat{\theta}}{x}M\right)_{-}\end{alignat*}by the linearity of the operators $\nabla_x$ and $(\cdot)_{-}$. It follows from Thm.~\ref{thm:alloc_self} that $\widehat{\theta}(b)\in\Theta$, the proof is complete by Thm.~\ref{thm:exist}(ii) and Cor.~\ref{cor:ddmap}. 
\end{proof}

\begin{Cor}[Bounds and asymptotic]\label{cor:AA}
Let $b\in B$ and \begin{alignat*}{2}
W^{*}:=\max_{k} W_k, \quad \widehat{W}(b)&:=W(\widehat{\theta}(b)).
\end{alignat*}Then 
\begin{eqnarray}
1\leq \frac{W^{*}}{\widehat{W}}\leq \max_k \frac{1}{b_k}.\label{eq:aa_asym}
\end{eqnarray}In particular, 
$$\frac{1}{T}\ln\left(\frac{W^{*}_T}{\widehat{W}_T}\right)\leq\frac{1}{T}\ln\left(\frac{1}{\min_k b_k}\right)\rightarrow 0,$$ as $T\uparrow\infty$.
\end{Cor}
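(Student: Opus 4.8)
The plan is to reduce the entire corollary to an elementary inequality about weighted averages of positive functionals, by first invoking Theorem~\ref{thm:AA}. That theorem gives the explicit identity $\widehat{W}(b)=W(\widehat{\theta}(b))=\sum_{k=1}^{m}b_k W_k$, so that the aggregated wealth is literally the $b$-weighted arithmetic mean of the individual wealths $W_k=W(\theta^{(k)})$. Since each $W_k\in\M_{+}$ is strictly positive and $\sum_k b_k=1$ with $b_k>0$, both $W^{*}$ and $\widehat{W}$ are positive functionals and the ratio $W^{*}/\widehat{W}$ is well-defined at every $(t,x_t)\in\Lambda$.

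For the lower bound in (\ref{eq:aa_asym}) I would argue pointwise: at any fixed $(t,x_t)$ the value $W^{*}=\max_k W_k$ dominates every convex combination of the $W_k$, in particular $W^{*}\geq\sum_k b_k W_k=\widehat{W}$, whence $W^{*}/\widehat{W}\geq 1$. For the upper bound, fix $(t,x_t)$ and let $k^{*}=k^{*}(t,x_t)$ be an index attaining $W^{*}=W_{k^{*}}$. Discarding all but the $k^{*}$ term in the sum (legitimate since every summand is positive) gives $\widehat{W}=\sum_k b_k W_k\geq b_{k^{*}}W_{k^{*}}=b_{k^{*}}W^{*}$, so that $W^{*}/\widehat{W}\leq 1/b_{k^{*}}\leq\max_k(1/b_k)=1/\min_k b_k$. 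This establishes (\ref{eq:aa_asym}).

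For the asymptotic statement I would evaluate the chain of inequalities at the point $(T,x_T)$, take logarithms (monotone, and all arguments are positive), and divide by $T$, obtaining $\tfrac{1}{T}\ln(W^{*}_T/\widehat{W}_T)\leq\tfrac{1}{T}\ln(1/\min_k b_k)$. Because $b\in B$ is fixed, the quantity $\ln(1/\min_k b_k)$ is a finite constant independent of both $T$ and the path $x$, so the right-hand side tends to $0$ as $T\uparrow\infty$; this is exactly the claimed $O(1)$ tracking error in log wealth per unit time.

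There is no serious obstacle here—the argument is a one-line weighted-average estimate once Theorem~\ref{thm:AA} supplies the identity $\widehat{W}=\sum_k b_k W_k$. The only point requiring a word of care is that the maximizing index $k^{*}$ depends on the evaluation point $(t,x_t)$; this is harmless because the final bound $\max_k(1/b_k)$ is uniform over the finite index set $\{1,\dots,m\}$, so no pointwise dependence survives, and the positivity $W_k\in\M_{+}$ guarantees that dropping terms and dividing are both valid.
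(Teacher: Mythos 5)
Your proof is correct and fills in exactly the elementary weighted-average argument that the paper compresses into ``immediate consequence of Thm.~\ref{thm:AA}'': the identity $\widehat{W}(b)=\sum_k b_k W_k$ gives the lower bound since a convex combination never exceeds the maximum, and retaining only the maximizing term gives $\widehat{W}\geq b_{k^*}W^{*}$, hence the upper bound, with the asymptotic following by taking logarithms and dividing by $T$. Your care about the pointwise dependence of $k^{*}$ on $(t,x_t)$ is a nice touch but, as you note, immaterial since the bound $\max_k(1/b_k)$ is uniform over the finite index set.
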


\begin{proof}
It is an immediate consequence of Thm.~\ref{thm:AA}.
\end{proof}

\begin{Rem} 
The upper bound in (\ref{eq:aa_asym}) is minimised at $b_k\equiv 1/m$.
\end{Rem}

\begin{Cor}[Existence of optimum]\label{thm:minimax}
\begin{eqnarray}
1\leq \inf_{b\in \bar{B}}\sup_{\Lambda}\left(\frac{W^*}{\widehat{W}(b)}\right)\leq m,\label{eq:minmax}
\end{eqnarray}In particular, the infimum is attained.
\end{Cor}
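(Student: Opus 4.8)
The plan is to split (\ref{eq:minmax}) into its two inequalities, each of which reduces to a pointwise estimate supplied by the preceding results, and then to treat attainment by a compactness argument on the closed simplex. For the lower bound, I would note that by Theorem~\ref{thm:AA} the quantity $\widehat{W}(b)=\sum_k b_k W_k$ is, for every $b\in\bar{B}$, a convex combination of the $W_k$, so $\widehat{W}(b)\leq \max_k W_k=W^*$ at every point of $\Lambda$. Hence $W^*/\widehat{W}(b)\geq 1$ pointwise, its supremum over $\Lambda$ is at least $1$ for each fixed $b$, and the infimum over $b$ preserves this, yielding the left inequality.

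For the upper bound I would exhibit the single admissible weight $b=(1/m,\ldots,1/m)\in B$. Corollary~\ref{cor:AA} gives $W^*/\widehat{W}(b)\leq \max_k 1/b_k=m$ pointwise at this choice, so $\sup_{\Lambda} W^*/\widehat{W}(b)\leq m$; since the infimum over $\bar{B}$ cannot exceed the value at any particular $b$, the right inequality follows. Together these already sandwich the quantity between $1$ and $m$.

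For attainment I would run an extreme-value argument on the compact set $\bar{B}$. Writing $g(b):=\sup_{\Lambda} W^*/\widehat{W}(b)$, I observe that for each fixed $(t,x_t)\in\Lambda$ the map $b\mapsto W^*(t,x_t)/\sum_k b_k W_k(t,x_t)$ is a ratio of affine functions of $b$ with strictly positive denominator on all of $\bar{B}$ (each $W_k>0$ as an element of $\M_{+}$, and $\sum_k b_k=1$ forces some $b_k>0$), hence continuous in $b$. As a pointwise supremum of such continuous functions, $g$ is lower semicontinuous on $\bar{B}$ with values in $(-\infty,+\infty]$, and it is finite at $(1/m,\ldots,1/m)$ by the upper bound just established. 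A lower semicontinuous function on a compact set attains its infimum, so the infimum in (\ref{eq:minmax}) is achieved at some $b^*\in\bar{B}$.

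The main obstacle is precisely this attainment step: one cannot invoke continuity of $g$, since $g$ may equal $+\infty$ on the portion of $\partial\bar{B}$ where a coordinate vanishes (there the estimate of Corollary~\ref{cor:AA} is vacuous). The clean way around this is to rely only on lower semicontinuity, which is stable under the pointwise supremum over $\Lambda$, together with compactness of $\bar{B}$; the finite value at the barycentre then guarantees the infimum is a genuine real number and is realised inside $\bar{B}$.
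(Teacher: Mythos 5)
Your proposal is correct and takes essentially the same route as the paper: both obtain the bounds from Corollary~\ref{cor:AA} evaluated at the barycentre $b_k\equiv 1/m$, and both derive attainment from compactness of $\bar{B}$ combined with the continuity of $b\mapsto\widehat{W}(b)$ at every fixed point of $\Lambda$. Your lower-semicontinuity packaging of the attainment step is just the abstract form of the paper's explicit minimizing-sequence argument, and your observation that $\sup_{\Lambda}\bigl(W^*/\widehat{W}(b)\bigr)$ may be infinite near $\partial\bar{B}$ (so that only semicontinuity, not continuity, is available) correctly pinpoints the subtlety the paper's pointwise limit argument handles implicitly.
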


\begin{proof}
By Cor.~\ref{cor:AA}, We have $1\leq \sup_{\Lambda}\left(\frac{W^*}{\widehat{W}(b)}\right)\leq \max_k\frac{1}{b_k}$ for every $b\in B$. Put $b_k:\equiv1/m$, the lower and upper bounds are thus established. By the definition of infimum and the compactness of $\bar B$, there exists a sequence $b_{n}\rightarrow b^{*}\in\bar B$; 
$$
\left(\frac{W^*}{\widehat{W}(b_n)}\right)\leq \sup_{\Lambda}\left(\frac{W^*}{\widehat{W}(b_n)}\right)\rightarrow\inf_{b\in \bar B}\sup_{\Lambda}\left(\frac{W^*}{\widehat{W}(b)}\right).
$$Since the map $b\mapsto\widehat{W}(b)$ is continuous on $\bar B$ at every point of $\Lambda$, we obtain $$
\left(\frac{W^*}{\widehat{W}(b^{*})}\right)\leq\inf_{b\in \bar B}\sup_{\Lambda}\left(\frac{W^*}{\widehat{W}(b)}\right)
$$at every point of $\Lambda$, hence the infimum is attained at $b^{*}\in\bar{B}$.
\end{proof}

\begin{Rem} 
The minmax problem (the least upper bound and optimal $b^{*}$) in \ref{eq:minmax} may be solved for by generating scenarios for $\Omega$. The least upper bound, which may be strictly less than $m$, will be valid if the realised scenario is among the generated ones.     
\end{Rem}

\begin{Eg}[Best stock]\noindent\\
Let $\theta^{(i)}(t,x_t):=\mathbf{e}_i$ for $i=1,\ldots,d$. Then
$\theta^{(i)}\in\Theta(\Lambda)$ for $i=1,\ldots,d$, $$W^{*}(t,x_t)=\max_{i} \frac{x_i(t)}{x_i(0)},$$ 
$$
\widehat{\theta}(t,x_t)=\frac{1}{\sum_{i=1}^{d}b_i\frac{x_i(t)}{x_i(0)}}\left(b_1\frac{x_1(t)}{x_1(0)},\ldots,b_d\frac{x_d(t)}{x_d(0)}   \right)'\in\Theta(\Lambda)
$$ and
$$
\widehat{W}(t,x_t)=\sum_{i=1}^{d}b_i\frac{x_i(t)}{x_i(0)}.
$$
\end{Eg}

\begin{proof}
It is an immediate consequence of Eg.~\ref{Eg:single} and Thm.~\ref{thm:AA}.
\end{proof}

\begin{Eg}[Best of Final Wealth and Time Average]\label{eg:insured}\noindent\\
Let $T>0$, $m:=2$, $\theta\in\Theta$ and $W:=W(\theta)$. Define\begin{alignat*}{2}\theta^{(1)}(t,x_t)&:=\theta\I_{[0,T]},\\
\theta^{(2)}(t,x_t)&:=\frac{ \theta W(t)\left(1- \frac{t\wedge T}{T}\right) }{\frac{1}{T}\int_{0}^{t}{W(s)}ds+W(t)\left(1- \frac{t\wedge T}{T}\right) }.
\end{alignat*}Then
$\theta^{(1)}, \theta^{(2)} \in\Theta(\Lambda)$, $$W^{*}(t,x_t)=W(t\wedge T)\bigvee \frac{1}{T}\int_{0}^{T}W(s\wedge t)ds,$$ 
\begin{eqnarray*}
\widehat{\theta}(t,x_t)=\frac{\theta W(t)\left(1-b_2\frac{t\wedge T}{T}\right)}{b_2\frac{1}{T}\int_{0}^{t\wedge T}W(s)ds +W(t)\left(1-b_2\frac{t\wedge T}{T}\right)}
\end{eqnarray*} and
\begin{eqnarray}
\widehat{W}(t,x_t)=(1-b_2)W(t\wedge T)+b_2\frac{1}{T}\int_{0}^{T}W(s\wedge t)ds.
\label{eq:insured}\end{eqnarray}
\end{Eg}

\begin{proof}
It is an immediate consequence of Eg.~\ref{Eg:pop} and Thm.~\ref{thm:AA}.
\end{proof}

\begin{Rem}
The above examples hold on every generic $\Omega$. In example \ref{eg:insured}, both the allocation strategy and its associated wealth (\ref{eq:insured}) are path-dependent and do not necessarily admit variation of any order. 
\end{Rem}

\subsection{Best convex combination of strategies}\label{sub:up}

In this section, we shall extend Cover's universal algorithm \cite{TC} to the convex hull \begin{eqnarray*}
\B:=\left\{\sum_{k=0}^{m}b_{k}\theta^{(k)} \Bigg{|} b_{k}\geq 0; \sum_{k=0}^{m}b_k=1\right\},
\end{eqnarray*}generated by finitely many strategies $\theta^{(0)}:=\mathbf{0}$ (i.e. pure cash), $\theta^{(k)}\in\Theta(\Lambda), k=1,\ldots, m$ under the domain $\Lambda$, where \begin{eqnarray}
\Omega:=\left\{x\in QV\Bigg| -\delta_{-}<\frac{\Delta x(t)}{x(t-)}<\delta_{+},\quad\forall t>0\right\}\label{eq:omega},
\end{eqnarray}$\delta_{-}\in(0,1)$ and $\delta_{+}>0$. We observe that $\Omega$ is generic (see Eg.~\ref{eg:closed}). Since $\theta^{(0)}$, $\theta^{(k)}$ for $k=1,\ldots m$ are fixed, we shall denote \begin{eqnarray*}\Delta_{m}:=\left\{ b:=(b_1,\ldots ,b_m)'\in\R_{+}^{m} \Bigg{|} \sum_{k=1}^{m} b_k\leq 1 \right\}\end{eqnarray*} to be an $m$-simplex and $\mathring{\Delta}_{m}$ its interior. For $b\in\Delta_{m}$, we shall write\begin{alignat}{2}W(b):=W(\theta(b)),\quad\theta(b):=\sum_{k=1}^{m}b_{k}\theta^{(k)},\label{eq:tb}\end{alignat}if $\theta(b)\in\Theta(\Lambda)$. We remark that $|\Delta_m|=\frac{1}{m!}$ and that $b\mapsto\theta(b)$ is a surjection from $\Delta_m$ onto $\B$. For $x\in QV$, we write \begin{eqnarray}
\Sigma(T,x_T):=\left(\int_0^{T}\frac{\theta^{(k)}}{x}\left(\frac{\theta^{(l)}}{x}\right)'d[x]\right)_{1\leq k,l\leq m}\label{eq:definite}
\end{eqnarray}and denote $\lambda_{\min}(t,x_t)$ 
the minimal (resp. $\lambda_{\max}(t,x_t)$ the maximal) eigenvalues of $\Sigma(t,x_t)$. 

\begin{Lem}\label{lem:definite}
$\Sigma(T,x_T)$ is positive semi-definite.
\end{Lem}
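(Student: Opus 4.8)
The plan is to verify the defining inequality of positive semi-definiteness directly. Note first that $\Sigma(T,x_T)$ is symmetric, since $[x]$ is a symmetric-matrix-valued path and the $(k,l)$ and $(l,k)$ integrands $\frac{\theta^{(k)}}{x}\big(\frac{\theta^{(l)}}{x}\big)'$ and $\frac{\theta^{(l)}}{x}\big(\frac{\theta^{(k)}}{x}\big)'$ are transposes, so the Frobenius pairing is unchanged. It then suffices to fix an arbitrary $a=(a_1,\ldots,a_m)'\in\R^m$ and show $a'\Sigma(T,x_T)a\geq 0$. Writing out the Frobenius--Stieltjes integral per the convention $\int f\,dg=\sum_{i,j}\int f_{ij}(s-)\,dg_{ij}(s)$ and using the bilinearity of both the sum over $k,l$ and the integral, I would collapse the double sum into a single integrand. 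Setting $\eta(s-):=\sum_{k=1}^m a_k\,(\theta^{(k)}/x)(s-)\in\R^d$, bilinearity gives
\[
a'\Sigma(T,x_T)a=\sum_{k,l}a_k a_l\int_0^T\frac{\theta^{(k)}}{x}\Big(\frac{\theta^{(l)}}{x}\Big)'d[x]=\int_0^T \eta\,\eta'\,d[x]=\sum_{i,j}\int_0^T \eta_i(s-)\eta_j(s-)\,d[x]_{ij}(s),
\]
so the claim reduces to showing $\int_0^T\langle\eta\eta',d[x]\rangle\geq 0$ for the left-continuous, rank-one (hence positive semi-definite) integrand $\eta(s-)\eta(s-)'$. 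I would first record that these componentwise integrals are well defined: the diagonal entries $[x]_{ii}$ are nondecreasing and, by polarization, each off-diagonal $[x]_{ij}$ is a difference of nondecreasing functions, hence of bounded variation, while $\eta(\cdot-)$ is locally bounded on $\Omega$.

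The structural input is that $[x]$ has positive-semi-definite increments. From the definition of finite quadratic variation, for $s\le t$ the matrix $[x](t)-[x](s)$ is the J$_1$-limit of the finite sums $\sum_{s<t_i\le t}(x(t_{i+1})-x(t_i))^{\otimes 2}$; each summand is an outer product $vv'\succeq 0$, the partial sums are therefore positive semi-definite, and since the cone of positive semi-definite matrices is closed, the limit satisfies $[x](t)-[x](s)\succeq 0$ as well. Thus $d[x]$ is a positive-semi-definite-matrix-valued measure on $[0,T]$.

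To conclude I would pass through left-endpoint Riemann--Stieltjes sums along the fixed partition sequence $\pi$: since $[x]$ is of bounded variation componentwise and $\eta(\cdot-)$ is bounded and left-continuous,
\[
\int_0^T\langle \eta\eta',d[x]\rangle=\lim_{n}\sum_{\pi_n\ni t_i\le T}\eta(t_i-)'\big([x](t_{i+1})-[x](t_i)\big)\eta(t_i-).
\]
Every summand has the form $v'Mv$ with $v=\eta(t_i-)$ and $M=[x](t_{i+1})-[x](t_i)\succeq 0$, hence is nonnegative; equivalently, $\langle vv',M\rangle=\tr(vv'M)=v'Mv\geq 0$ is the Frobenius pairing of two positive semi-definite matrices. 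A limit of nonnegative quantities is nonnegative, so $a'\Sigma(T,x_T)a\geq 0$, and as $a$ was arbitrary, $\Sigma(T,x_T)\succeq 0$.

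The main obstacle I anticipate is not the algebra but the two analytic justifications beneath it: (i) extracting $[x](t)-[x](s)\succeq 0$ cleanly from J$_1$-convergence, where one argues at continuity points and extends by right-continuity, since J$_1$-convergence controls values only up to a time change (closedness of the positive-semi-definite cone makes this routine but it must be stated); and (ii) the identification of the Frobenius--Stieltjes integral $\int f\,dg=\sum_{i,j}\int f_{ij}(s-)\,dg_{ij}(s)$ with the limit of left-point Riemann--Stieltjes sums, which is standard for bounded-variation integrators paired with bounded left-continuous integrands but should be invoked carefully so that the nonnegativity of each increment is preserved through the limit. The choice of evaluation point within $[t_i,t_{i+1})$ is immaterial here, as every admissible choice yields a summand of the nonnegative form $v'Mv$.
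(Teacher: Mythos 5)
Your proof is correct, but it takes a genuinely different route from the paper's. The paper never scalarizes: it works with the discrete matrix $\Sigma(T,x^n_T)$ along the piecewise-constant approximations, observes that the $(k,l)$ entry of each summand, $\bigl\langle \theta^{(k)}_{-}\theta^{(l)}_{-}{}',\tfrac{\Delta x^n(t_i)}{x(t_i)}\tfrac{\Delta x^n(t_i)'}{x(t_i)}\bigr\rangle$, assembles into a positive semi-definite matrix --- derived there from the concavity (negative semi-definite Hessian) of $b\mapsto\ln\bigl(1+\theta(b)(t_i,x^n_{t_i-})\cdot\tfrac{\Delta x^n(t_i)}{x(t_i)}\bigr)$, the same Hessian computation that reappears in (\ref{eq:u3}) in the proof of Thm.~\ref{Thm:universal} --- so each $\Sigma(T,x^n_T)$ is a finite sum of PSD matrices, and then delegates the entire convergence $\Sigma(T,x^n_T)\to\Sigma(T,x_T)$ to \cite[Lem.~4.15]{CC2}, closedness of the PSD cone finishing the argument. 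You instead fix $a\in\R^m$, aggregate the integrands into the single \caglad vector $\eta=\sum_k a_k\theta^{(k)}/x$, prove from the J$_1$ definition of $[x]$ that its increments are PSD (hence $d[x]$ is a PSD-matrix-valued measure), and conclude via classical Lebesgue--Stieltjes theory for BV integrators paired with bounded left-continuous integrands. Each approach buys something: the paper's proof is a two-line corollary of its functional-calculus machinery, and the cited lemma simultaneously certifies that the defining integrals in (\ref{eq:definite}) exist as limits of the discrete sums; yours is self-contained and more elementary, at the price of rehearsing standard Stieltjes facts, and it isolates the stronger structural statement $d[x]\succeq 0$, which the paper never states explicitly even though the positivity of (\ref{eq:semi}) implicitly rests on the same mechanism. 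The two caveats you flag --- that J$_1$ convergence gives pointwise convergence only at continuity points, extended by right-continuity, and that the identification of $\int f_{ij}(s-)\,dg_{ij}$ with left-point Riemann sums needs left-continuity of the integrand where the integrator charges mass --- are exactly the right ones, and your existence argument (diagonals of $[x]$ nondecreasing, off-diagonals BV by polarization, $\eta$ locally bounded since allocations are $[0,1]$-valued and positive \cadlag paths are locally bounded away from zero) is sound; only your closing remark that the evaluation point in $[t_i,t_{i+1})$ is ``immaterial'' should be read as a statement about nonnegativity of the summands, not about convergence of the sums, which does rely on the left-point choice when $\eta$ and $[x]$ jump together.
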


\begin{proof}
Since each $$b\longmapsto\ln\left(1+\theta(b)(t_i,x^{n}_{t_{i-}})\cdot\frac{\Delta x^n(t_{i})}{x(t_{i})}\right)$$ is a twice continuously differentiable concave function on $\Delta_m$, it follows
\begin{eqnarray*}\left(\left\langle \theta_{-}^{(k)}\theta_{-}^{(l)}{'},\frac{\Delta {x}^{n}(t_i)}{x(t_i)}\frac{\Delta {x}^{n}(t_i){'}}{x(t_i)}\right\rangle\right)_{1\leq k,l\leq m}\end{eqnarray*} is positive semi-definite and so is the finite sum $\Sigma(T,x^n_T)$. Since $\Sigma(T,x^n_T)\xrightarrow{n} \Sigma(T,x_T)$ \cite[Lem. 4.15]{CC2}. The proof is complete.
\end{proof}

\begin{Rem}\label{rem:definite}
$\Sigma(t,x_t)-\Sigma(s,x_s)$ is positive semi-definite for all $t>s$.  (i.e. $t\mapsto \Sigma(t,x_t)$ is monotonic increasing in the sense of Loewner order). $t\mapsto\lambda_{\min}(t,x_t)$ (resp. $\lambda_{\max}(t,x_t)$) are monotonic increasing. 
\end{Rem}

\begin{Eg}
Let $x\in\Omega$ and $\theta^{(k)}=e_k$ for $k=1,\ldots m=d$ (standard basis) and $$t\mapsto \left(\int_{0}^{t}\frac{1}{x_1(s)}dx_1(s),\ldots,\int_{0}^{t}\frac{1}{x_d(s)}dx_d(s)\right)'$$ be a sample path of a Brownian Motion with covariance matrix $\Sigma$. Denote $\lambda_{\max}$ for the maximal eigenvalue of $\Sigma$,  then $$
\Sigma(T,x_T)=\Sigma T
$$ and $\lambda_{\max}(T,x_T)=\lambda_{\max} T$.
\end{Eg}

\begin{proof}
It is a straightforward application of \follmer\!'s calculus \cite[Thm.\&(14)]{HF}.
\end{proof}

The outline for the rest of this section is as follow: we shall introduce the following lemmas to first establish in Thm.~\ref{thm:dct} that universal portfolio over the convex hull is self-financing and its wealth is the weighed average of wealths over the convex hull.  We then prove the main theorem Thm.~\ref{Thm:universal}, which gives an exact formula (instead of inequality) to express the wealth of the universal portfolio relative to the maximal wealth. Finally in Cor.~\ref{cor:asym}, we obtain the extension of Cover's theorem and conclude with a discussion of related literature in \ref{sec:RL}.

For the ease of exposition, we may use the following notations unless otherwise specified. If $(t,x_t)\in\Lambda$ is fixed but not the focal, we may suppress $(t,x_t)$ and write, for example:\begin{alignat*}{2}
W(b)&:=(W(b))(t,x_t),\\
W^n(b)&:=(W(b))(t,x^n_t).
\end{alignat*}Similarly, if $x\in \Omega$ is fixed but not the focal, we write:
$
W_t(b):=(W(b))(t,x_t)
$. We shall only use these notations for a functional on $\Lambda$.

\begin{Lem}[\ito's lemma]\label{lem:ito}
Let $\theta\in\Theta$, then
\begin{alignat}{2}
\ln W_{t}(\theta)&=\int_{0}^{t}\frac{\theta}{x}dx-\frac{1}{2}\int_0^{t}\frac{\theta}{x}\left(\frac{\theta}{x}\right)'d[x]\nonumber\\
&+\sum_{0<s\leq t}\ln\left(1+\theta_{-}\cdot \frac{\Delta x(s)}{x(s-)}\right)-\theta_{-}\cdot\frac{\Delta x(s)} {x(s-)}+\frac{1}{2}\left\langle\theta_{-}\theta_{-}',\frac{\Delta{x(s)}}{x(s-)}\frac{\Delta{x(s)}}{x(s-)}'\right\rangle,\label{eq:ito}
\end{alignat}where the absolute value of the series is bounded by \begin{eqnarray}
\frac{1}{2(1-\delta_{-})^2}\int_0^{t}\frac{\theta}{x}\left(\frac{\theta}{x}\right)'d[x].\label{eq:ito_bound}
\end{eqnarray}
\end{Lem}

\begin{proof}
By Thm.~\ref{thm:exist} and (\ref{eq:ddmap}), we can apply the change of variable formula of \cite[Thm. 5.14]{CC2} to the functional $\ln(W)$ to obtain (\ref{eq:ito}).
Let $O(t,x_t)$ denotes the series of (\ref{eq:ito}) and $O_n:=O(t,x^n_t)$. Since $\ln W$ and all individual terms other than $O$ are continuouos, it follows $O$ is continuous and
$O(t,x_t)=\lim_n O_n$. Apply a second order Taylor expansion to each individual log term of $O_n$ i.e.
$$e\longmapsto\ln\left(1+\theta(t_i,x^{n}_{t_{i-}})\cdot\frac{e}{x(t_{i})}\right)$$
and observe that $\frac{\Delta x^n(t_i)}{x(t_i)}\frac{\Delta x^n(t_i)'}{x(t_i)}$ is positive semi-definite, we see that $|O_n|$ is bounded by (\ref{eq:ito_bound}) along $x^n$. Send $n\uparrow\infty$.  \end{proof}

\begin{Lem}\label{lem:alloc_integrable}
Let $\theta$ be an allocation. If $\frac{\theta}{x}$ is integrable, then $\theta\in \Theta$, i.e. $\theta$ is self-financing.
\end{Lem}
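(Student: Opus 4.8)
The plan is to verify condition (i) of Theorem~\ref{thm:exist}: that $\lim_n \E(t,x^n_t)$ exists, is positive, and is continuous on $\Lambda$; self-financing of $\theta$ then follows from the implication (i)$\Rightarrow$(iii). Throughout I work on the domain $\Omega$ of (\ref{eq:omega}), so that $x\in QV$ and the relative jumps are controlled by $\delta_-\in(0,1)$; both features will be used.

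First I would record what integrability buys us. By Definition~\ref{def:pathwise}, the hypothesis that $\frac{\theta}{x}$ is integrable means that $I:=\int \frac{\theta}{x}\,dx$ exists and is continuous. Writing $r_i:=\theta(t_i,x^n_{t_i-})\cdot\frac{\Delta x^n(t_i)}{x(t_i)}$ for the one-period return, the evaluation of $\frac{\theta}{x}$ at the stopped path gives $\frac{\theta}{x}(t_i,x^n_{t_i-})=\frac{\theta(t_i,x^n_{t_i-})}{x(t_i)}$, so this says precisely that the left-Riemann sums $\II(t,x^n_t)=\sum_{t_i\le t} r_i$ converge to the continuous functional $I$, and by Theorem~\ref{thm:ftc} that $I\in\M_0(\Lambda)$. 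Since each $x_j>0$ we have $1+r_i>0$, so I may take logarithms in (\ref{eq:dd}) and decompose
\begin{equation*}
\ln\frac{\E(t,x^n_t)}{\xi}=\sum_{\pi_n\ni t_i\le t}\ln(1+r_i)=\II(t,x^n_t)-\sum_{\pi_n\ni t_i\le t} g(r_i),\qquad g(r):=r-\ln(1+r)\ge 0.
\end{equation*}

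The first sum converges to $I$ by hypothesis, so the crux is to show the nonnegative remainder $\sum_{t_i\le t} g(r_i)$ converges to a continuous limit $R$. Here the finite quadratic variation is essential: since $g''(s)=(1+s)^{-2}$ and the relative increments exceed $-\delta_-$ in the limit, one has $g(r_i)\le \frac{r_i^2}{2(1-\delta_-)^2}$, while $r_i^2=\langle\theta_-\theta_-',\frac{\Delta x^n}{x}(\frac{\Delta x^n}{x})'\rangle$ sums (using $x\in QV$ and the left-continuity of $\theta_-$, exactly as in Lemma~\ref{lem:definite}) to the finite continuous functional $\int_0^t\frac{\theta}{x}(\frac{\theta}{x})'d[x]$. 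Running the second-order Taylor argument from the proof of Lemma~\ref{lem:ito} — the cubic terms from the ``continuous'' intervals vanish because they are dominated by $(\max_i|r_i|)\sum_i r_i^2\to 0$, whereas the finitely many large jumps contribute a convergent series bounded by (\ref{eq:ito_bound}) — yields convergence of $\sum_{t_i\le t} g(r_i)$ to a continuous $R$. Combining, $\lim_n\ln\E(t,x^n_t)=\ln\xi+I-R$ exists and is continuous, hence $\lim_n\E(t,x^n_t)=\xi\,e^{I-R}>0$ exists, is positive, and is continuous on $\Lambda$; by Theorem~\ref{thm:exist} this gives $\theta\in\Theta$.

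I expect the main obstacle to be the third paragraph: rigorously passing the remainder sum to the limit. This requires controlling the returns $r_i$ from below by (essentially) $-\delta_-$, which is legitimate only asymptotically — via the \cadlag structure that isolates each large jump in its own partition interval as $|\pi_n|\downarrow 0$, so that intervals carrying a jump inherit the bound $\frac{\Delta x(s)}{x(s-)}>-\delta_-$ while the remaining relative increments shrink to $0$ — and it requires separating the vanishing continuous contribution from the convergent jump series. This is precisely the delicate step where both the $QV$ hypothesis and the bound $\delta_-<1$ of (\ref{eq:omega}) enter; note also that integrability of $\frac{\theta}{x}$ alone (giving only conditional convergence of $\sum_i r_i$) cannot by itself force $\sum_i g(r_i)<\infty$, which is why the quadratic-variation control is indispensable.
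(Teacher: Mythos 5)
Your proposal is correct and follows essentially the same route as the paper: take logarithms of the pathwise exponential, split off the left-Riemann sums converging to $\int\frac{\theta}{x}\,dx$, control the Taylor remainder via the quadratic-variation functional and the jump bound $(1-\delta_-)^{-2}$ of (\ref{eq:ito_bound}), establish continuity of the remainder from continuity of that dominating functional, and conclude by applying Thm.~\ref{thm:exist} to the exponential of the limit. The only cosmetic difference is your grouping of everything beyond first order into $g(r)=r-\ln(1+r)$, whereas the paper expands explicitly to second order and cites the decomposition technique of \cite[Thm 5.14 (A.3), (A.6)]{CC2} for the remainder, which is the same argument you sketch by hand.
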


\begin{proof}
If $\frac{\theta}{x}$ is integrable (Def.~\ref{def:pathwise}), then the first term of (\ref{eq:ito}) exists and is continuous. 
The second term also exists and continuous due to \cite[Lem. 4.15]{CC2}. We follow the approach in \cite[Thm 5.14 (A.3)]{CC2} and apply a second order Taylor expansion to each individual term of the following finite sum \begin{eqnarray*}
F(t,x^n_t):=\sum_{\pi_n\ni t_i\leq t}\ln\left(1+\theta(t_i,x^{n}_{t_{i-}})\cdot \frac{\Delta x^{n}(t_i)}{x(t_i)}\right).
\end{eqnarray*}As $n\uparrow\infty$, we observe that the first and second order terms converge to the corresponding first two terms in (\ref{eq:ito}). Since the remainder term $R(t,x^n_t)$ is bounded by (\ref{eq:ito_bound}), we may apply the decomposition technique in \cite[Thm 5.14 (A.6)]{CC2} and conclude that $R(t,x_t):=\lim_{n} R(t,x^n_t)$ exists. Let $G(t,x_t)$ denote (\ref{eq:ito_bound}), we observe that \begin{eqnarray*}
|R(t,x_t)-R(s,x_s)|\leq |G(t,x_t))-G(s,x_s)|,
\end{eqnarray*}for all $0<s<t$ and $x\in\Omega$. Since $G$ is continuous \cite[Lem. 4.15]{CC2}, the continuity of $R$ follows. Having established that $F(t,x_t)$ exists and is continuous, the proof is complete by an application of Thm.~\ref{thm:exist} to $\E:=\exp(F)$.
\end{proof}

\begin{Eg}[Softmax is self-financing]\label{cor:softmax}
\begin{eqnarray}\label{eq:softmax}
\theta(t,x_t):=\frac{1}{\sum_{i=1}^{d}e^{\int_0^{t}\frac{1}{x_i}dx_i}}\left(e^{\int_0^{t}\frac{1}{x_1}dx_1},\ldots,e^{\int_0^{t}\frac{1}{x_d}dx_d}\right)'\in \Theta.
\end{eqnarray}
\end{Eg}

\begin{proof}
The functional \begin{eqnarray*}
F(t,x_t):=\ln\left(\sum_{i=1}^{d}\exp\left(\int_0^{t}\frac{1}{x_i}dx_i\right)\right)
\end{eqnarray*}is $C^{1,2}$ in the sense of \cite[Def. 4.11]{CC2} and hence by \cite[Prop. 5.15]{CC2}  $\nabla_{x}F$ is integrable and \begin{eqnarray*}
\nabla_{x}F_{i}(t,x_{t})=\frac{\theta_i(t,x_t)}{x_i(t-)}
\end{eqnarray*}for all $1\leq i\leq d$, hence $\frac{\theta}{x}$ is integrable. By Lem.~\ref{lem:alloc_integrable}, the proof is complete. 
\end{proof}

\begin{Rem}
The allocation strategy (\ref{eq:softmax}) is path-dependent. The map $x\mapsto \theta(t,x_t)$ is not uniform continuous on $\Omega$ and the path $t\mapsto \theta(t,x_t)$ is not necessarily continuous or of finite variation. 
\end{Rem}

\begin{Cor}[Convex hull is self-financing]\label{cor:hull}
\begin{eqnarray*}
\B\subset \Theta.
\end{eqnarray*}
\end{Cor}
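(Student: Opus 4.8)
The plan is to reduce the claim to Lemma~\ref{lem:alloc_integrable} by a linearity argument on pathwise integrals. Since $\theta^{(0)}=\mathbf{0}$, every element of $\B$ has the form $\theta(b)=\sum_{k=1}^{m}b_k\theta^{(k)}$ for some $b\in\Delta_m$ (the $b_0$ weight contributes nothing to the allocation), so it suffices to fix $b\in\Delta_m$ and prove $\theta(b)\in\Theta$.

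First I would check that $\theta(b)$ is a bona fide allocation strategy in the sense of Def.~\ref{def:alloc}. It is a nonnegative combination of the $\R_+^d$-valued regulated functionals $\theta^{(k)}$, hence itself $\R_+^d$-valued, and regulated because the regulated functionals form an algebra (Rem.~\ref{rem:regulated}); the simplex constraint follows from $\sum_i\theta(b)_i=\sum_k b_k\sum_i\theta^{(k)}_i\leq\sum_k b_k\leq 1$, using that each $\theta^{(k)}$ is an allocation and $b\in\Delta_m$.

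The key step is integrability of $\frac{\theta(b)}{x}$. For each fixed $k$, since $\theta^{(k)}\in\Theta$, Lemma~\ref{lem:ito} applies and exhibits $\int_0^t\frac{\theta^{(k)}}{x}dx$ as a genuine pathwise integral inside the decomposition of $\ln W_t(\theta^{(k)})$; in particular $\frac{\theta^{(k)}}{x}$ is integrable with continuous integral, since that integral equals $\ln W(\theta^{(k)})$ minus the quadratic-variation term and the jump series, all of which are continuous. Because the left-Riemann sums (\ref{eq:lrs}) are linear in the integrand, and each $\bigl(\frac{\theta^{(k)}}{x}\bigr)_{-}$ is left continuous, the candidate integrand $\frac{\theta(b)}{x}=\sum_{k=1}^{m}b_k\frac{\theta^{(k)}}{x}$ has left-continuous $(\cdot)_{-}$, and its left-Riemann sums converge to $\sum_{k=1}^{m}b_k\int_0^t\frac{\theta^{(k)}}{x}dx$, which is continuous as a finite sum of continuous functionals. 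Hence $\frac{\theta(b)}{x}$ is integrable in the sense of Def.~\ref{def:pathwise}.

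Finally, Lemma~\ref{lem:alloc_integrable} converts integrability of $\frac{\theta(b)}{x}$ directly into membership $\theta(b)\in\Theta$, so $\B\subset\Theta$. The step I would be most careful about is extracting integrability of $\frac{\theta^{(k)}}{x}$ itself, rather than merely of the share process $\frac{\theta^{(k)}}{x}W_k$ that the representation theorem (Thm.~\ref{thm:self}) supplies through $\nabla_x W_k$; this finer fact is precisely what the \ito decomposition of $\ln W_t(\theta^{(k)})$ in Lemma~\ref{lem:ito} provides, and once it is in hand the linearity of the pathwise integral does the rest.
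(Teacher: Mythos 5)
Your proof is correct and takes essentially the same route as the paper: both arguments extract integrability of each $\frac{\theta^{(k)}}{x}$ from the It\^{o} decomposition of $\ln W(\theta^{(k)})$ in Lemma~\ref{lem:ito}, pass to the convex combination by linearity (the paper invokes that $\M$ is a vector space, where you spell out linearity of the left-Riemann sums, which amounts to the same thing), and conclude with Lemma~\ref{lem:alloc_integrable}. The subtlety you flag---that one needs integrability of $\frac{\theta^{(k)}}{x}$ itself rather than of $\nabla_x W_k=\bigl(\frac{\theta^{(k)}}{x}W_k\bigr)_{-}$---is precisely the step the paper also settles via Lemma~\ref{lem:ito}.
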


\begin{proof}
Let $\theta(b)\in \B$. Since each $\theta^{(k)}\in\Theta$, we can apply Lem~\ref{lem:ito} to each $\theta^{(k)}$ and deduce that each $\frac{\theta^{(k)}}{x}$ is integrable, i.e. $\int \frac{\theta^{(k)}}{x} dx\in\M$ exists. Since $\M$ is a vector space, we conclude
that $\theta(b)$ is integrable. The proof is complete
by Lem.~\ref{lem:alloc_integrable}.
\end{proof}

\begin{Rem}\label{rem:open}
For a small enough $\epsilon>0$, let \begin{eqnarray*}
b\in\Delta_{m}^{\epsilon}:=\left\{b\in\R^{m}\Bigg{|} b_{k}\in(-\epsilon,1+\epsilon); \sum_{k=1}^{m}b_k<1+\epsilon\right\}\supset\Delta_{m}.
\end{eqnarray*}Although $\theta(b)$ will cease be a long-only allocation, we remark from the lines of proof in Lem.~\ref{lem:alloc_integrable} and Cor.~\ref{cor:hull} that $\ln W(\theta(b))$ can still be defined by the RHS of (\ref{eq:ito}) due to the constraint imposed by (\ref{eq:omega}).
\end{Rem}

\begin{Lem}[Regularity]\label{lem:uniform}Let $(t,x_t)\in\Lambda$, then \begin{itemize}
    \item [(i)] $b\mapsto\ln W(b)$ is concave on $\Delta_m$.
    \item [(ii)] $b\mapsto \ln W^n(b)$ is infinitely differentiable on $\Delta_m$.
    \item [(iii)] $\ln W^n(b)\xrightarrow{n}\ln W(b)$ uniformly on $\Delta_m$. 
    \item [(iv)] $b\mapsto \ln W(b)$ is continuous on $\Delta_m$.
    \item [(v)] $W^*:=\underset{b\in \Delta_{m}}{\max}W(b)$ exists
\end{itemize}
\end{Lem}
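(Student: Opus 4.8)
The plan is to establish the five items in a logical cascade, deriving the deep ones (concavity, uniform convergence) first and then reading off the softer consequences (continuity, existence of a maximizer).

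First I would address (i), concavity. Fix $(t,x_t)\in\Lambda$. From Lemma~\ref{lem:definite} and its proof, each discrete map
\[
b\longmapsto\ln\!\left(1+\theta(b)(t_i,x^{n}_{t_i-})\cdot\frac{\Delta x^n(t_i)}{x(t_i)}\right)
\]
is concave on $\Delta_m$, because $b\mapsto\theta(b)$ is affine and $u\mapsto\ln(1+u)$ is concave (the admissibility $u>-1$ is guaranteed by the constraint $-\delta_-<\Delta x/x(t-)$ in (\ref{eq:omega}), which forces the argument to stay positive). Since a finite sum of concave functions is concave, $b\mapsto\ln W^n(b)=\ln\xi+\sum_{\pi_n\ni t_i\le t}(\cdots)$ is concave on $\Delta_m$ for every $n$. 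Concavity is preserved under pointwise limits, so by Thm.~\ref{thm:exist}(i) (giving $\ln W^n(b)\to\ln W(b)$ pointwise) the limit $b\mapsto\ln W(b)$ is concave, establishing (i).

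Next (ii): $b\mapsto\ln W^n(b)$ is a finite sum of terms $\ln(1+L_i(b))$ with $L_i(b):=\theta(b)(t_i,x^n_{t_i-})\cdot\Delta x^n(t_i)/x(t_i)$ affine in $b$; since the argument stays strictly positive on $\Delta_m$ (again by (\ref{eq:omega})), each term is $C^\infty$, so the finite sum is $C^\infty$, giving (ii). For the main step, (iii), I would show the convergence in Thm.~\ref{thm:exist}/Lem.~\ref{lem:alloc_integrable} is in fact \emph{uniform} in $b$ over the compact simplex $\Delta_m$. The natural route is the \ito\ decomposition of Lem.~\ref{lem:ito}, applied uniformly: writing $\ln W^n(b)$ via the second-order Taylor expansion, the first-order (integral) term and the quadratic-variation term converge to their continuous limits, and the remainder is controlled uniformly in $b$ by (\ref{eq:ito_bound}) together with Remark~\ref{rem:open} (which extends the definition to a neighbourhood $\Delta_m^{\epsilon}\supset\Delta_m$ and legitimizes treating the bound uniformly over the compact set). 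Since $b\mapsto\theta(b)$ is affine and $\Delta_m$ is compact, the relevant integrands $\theta(b)/x$ and $(\theta(b)/x)(\theta(b)/x)'$ are uniformly bounded in $b$, so each piece of the decomposition converges uniformly; summing gives (iii). The main obstacle is precisely this uniformity: one must verify that the error terms in the Taylor/decomposition argument of Lem.~\ref{lem:alloc_integrable} are bounded \emph{independently of} $b$, which is where compactness of $\Delta_m$ and the two-sided jump bound $-\delta_-<\Delta x/x(t-)<\delta_+$ do the essential work.

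Finally, (iv) and (v) follow immediately. For (iv), each $\ln W^n(b)$ is continuous (indeed $C^\infty$ by (ii)), and a uniform limit of continuous functions is continuous, so (iii) gives continuity of $b\mapsto\ln W(b)$, hence of $b\mapsto W(b)=\exp(\ln W(b))$, on $\Delta_m$. For (v), $\Delta_m$ is compact and $b\mapsto W(b)$ is continuous by (iv), so it attains its maximum; equivalently, the concave continuous function $b\mapsto\ln W(b)$ attains a maximum on the compact convex set $\Delta_m$, and $W^*=\max_{b\in\Delta_m}W(b)$ exists. This closes the cascade.
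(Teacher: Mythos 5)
Your items (i), (ii), (iv) and (v) match the paper's proof essentially verbatim: concavity of $b\mapsto\ln(1+b\cdot z)$ term by term and passage to the pointwise limit for (i), a finite sum of smooth functions for (ii), uniform limits of continuous functions for (iv), and compactness plus continuity for (v). The genuine divergence is in (iii), where the paper is considerably slicker: having used Rem.~\ref{rem:open} to extend each $b\mapsto\ln W^{n}(b)$ to a finite \emph{concave} function on the open convex set $\Delta^{\epsilon}_{m}\supset\Delta_{m}$, it invokes the convex-analysis theorem \cite[Thm. 10.8]{RF} --- pointwise convergence of finite convex (here concave) functions on an open convex set implies uniform convergence on compact subsets --- so uniformity on $\Delta_m$ comes for free from the concavity you already established in (i), with no estimates whatsoever. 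Your route instead attacks uniformity head-on through the It\^{o}/Taylor decomposition of Lem.~\ref{lem:ito} and Lem.~\ref{lem:alloc_integrable}. This can be made to work for the first- and second-order pieces, which are respectively linear and quadratic in $b$ with coefficients converging by integrability of each $\theta^{(k)}/x$ and \cite[Lem. 4.15]{CC2}, hence uniformly on the compact simplex; but for the remainder (the jump series) the bound (\ref{eq:ito_bound}) only gives uniform \emph{boundedness} in $b$, not a uniform rate of convergence, so you would still need to rerun the big-jump/small-jump decomposition of \cite[Thm. 5.14 (A.6)]{CC2} while tracking uniformity in $b$ (finitely many large jumps contribute jointly continuous terms; the small-jump part is uniformly small in $n$ and $b$). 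You flag exactly this obstacle but do not discharge it --- and it is precisely the step the paper's concavity argument renders unnecessary. The trade-off is real: your direct approach, if completed, yields quantitative $b$-uniform error control that could be useful for convergence rates, whereas the paper buys brevity and robustness from a soft compactness-plus-convexity principle; given that concavity is already in hand from (i), the paper's route is the shorter and safer closure of your cascade.
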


\begin{proof}
All maps are well defined due to Cor.\ref{cor:hull}. For a fixed $z\in\R^{m}; z>-1$, the function $b\mapsto\ln(1+b\cdot z)$ is concave on $\Delta_{m}$. Since $b\mapsto (\ln W(b)(t,x_t))$ is the limit (\ref{eq:dd}) of a sum of concave functions, we obtain (i). (ii) is due to the fact that $b\mapsto\ln W^n(b)$ is a finite sum of infinitely differentiable functions on $\Delta_{m}$. For (iii), we first observe that for a sufficiently small $\epsilon>0$, $b\mapsto \ln W^{n}(b)$ can be extended (Rem.~\ref{rem:open}) to an open convex set $\Delta^{\epsilon}_{m}\supset\Delta_{m}$ and that $b\mapsto \ln W^{n}(b)$ will be finite and concave on $\Delta^{\epsilon}_{m}$ (Hessian is negative semi-definite) . Since  $(\ln W^{n}(b))_n$ converges pointwise on $\Delta^{\epsilon}_{m}$, by \cite[Thm. 10.8]{RF}, (iv) follows. (iv) is due to (ii) and (iii). (v) follows from (i) and (iv).\end{proof}

\begin{Lem}[Integrability]\label{lem:leb_integrable}
 Let $(t,x_t)\in\Lambda$, the following maps:
\begin{itemize}
\item [(i)] $b\mapsto W(b)$,

\item [(ii)] $b\mapsto \theta(b)W(b)$,

\item [(iii)] $b\mapsto \nabla_{x}W(b)$,
\end{itemize}are all integrable on $\Delta_m$.
\end{Lem}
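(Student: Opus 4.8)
The plan is to reduce all three integrability claims to the single observation that each of the three maps is \emph{continuous} on the compact simplex $\Delta_m$, whose Lebesgue measure $|\Delta_m|=\tfrac{1}{m!}$ is finite; a continuous (possibly vector-valued) function on a compact set is bounded and Borel measurable, hence Lebesgue-integrable over a set of finite measure. Once continuity in $b$ is established in each case, the conclusion is immediate.

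First, for (i) I would invoke Lem.~\ref{lem:uniform}(iv), which gives continuity of $b\mapsto\ln W(b)$ at the fixed point $(t,x_t)$; composing with the exponential shows $b\mapsto W(b)=\exp(\ln W(b))$ is continuous, while Lem.~\ref{lem:uniform}(v) already records that it is bounded (by $W^{*}$). Integrability over $\Delta_m$ follows at once. For (ii), recall from (\ref{eq:tb}) that $\theta(b)(t,x_t)=\sum_{k=1}^{m} b_k\,\theta^{(k)}(t,x_t)$ is an \emph{affine} (hence continuous) $\R^{d}$-valued function of $b$, the vectors $\theta^{(k)}(t,x_t)$ being fixed at the focal point; multiplying the continuous scalar $W(b)$ by this continuous vector yields a continuous, bounded $\R^{d}$-valued map on $\Delta_m$, which is therefore integrable componentwise.

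The only step requiring a little more care is (iii), where $\nabla_{x}W(b)$ must first be rewritten in a form that is visibly continuous in $b$. Here I would use Cor.~\ref{cor:hull} ($\B\subset\Theta$), so that each $\theta(b)$ is self-financing and $W(b)\in\M_{+}$; the wealth equation (\ref{eq:pde}) of Thm.~\ref{thm:alloc_self} then gives $\nabla_{x}W(b)=\bigl(\tfrac{\theta(b)}{x}W(b)\bigr)_{-}$, which, evaluated at the fixed $(t,x_t)$, reads $\tfrac{\theta(b)(t,x_{t-})}{x(t-)}\,W(b)(t,x_{t-})$. To see that this is continuous in $b$ I would apply Lem.~\ref{lem:uniform}(iv) at the point $(t,x_{t-})$ — which lies in $\Lambda$, since genericity of $\Omega$ (taking $e=0$ in Def.~\ref{def:closed}(ii)) guarantees $x_{t-}\in\Omega$ — obtaining continuity and boundedness of $b\mapsto W(b)(t,x_{t-})$. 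Combined with the affine dependence of $b\mapsto\theta(b)(t,x_{t-})$ and the fixed strictly positive denominator $x(t-)$, the whole expression is continuous and bounded on $\Delta_m$, hence integrable.

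The main (and essentially only) obstacle is thus the bookkeeping in (iii): recognising that the vertical derivative has to be replaced by its PDE expression before continuity in $b$ can be read off, and verifying that the relevant regularity result must be applied at the pre-jump point $(t,x_{t-})\in\Lambda$ rather than at $(t,x_t)$. Parts (i) and (ii) are then routine consequences of continuity on a compact set of finite measure.
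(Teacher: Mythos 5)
Your proposal is correct and follows essentially the same route as the paper, whose one-line proof states that all three maps are bounded and measurable by Lemma~\ref{lem:uniform} and the wealth equation (\ref{eq:pde}); your expansion — continuity and boundedness in $b$ from Lemma~\ref{lem:uniform}, affine dependence of $\theta(b)$, and rewriting $\nabla_{x}W(b)$ via (\ref{eq:pde}) at the pre-jump point $(t,x_{t-})\in\Lambda$ — simply makes the paper's compressed argument explicit, including the genericity check that $x_{t-}\in\Omega$.
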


\begin{proof}
All maps are bounded and measurable due to Lem.~\ref{lem:uniform} and (\ref{eq:pde}).
\end{proof}

\begin{Thm}[Universal portfolio on convex hull]\label{thm:dct}
Define
\begin{eqnarray*}
\widehat{\theta}:=\frac{\int_{\Delta_{m}}\theta(b)W(b)db}{\int_{\Delta_{m}}W(b)db},\end{eqnarray*}then 
$\widehat{\theta}\in\Theta$ and 
\begin{eqnarray*}\widehat{W}:=W(\widehat{\theta})=\frac{1}{|\Delta_{m}|}\int_{\Delta_{m}}W(b)db.\end{eqnarray*}
\end{Thm}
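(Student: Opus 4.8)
The plan is to mirror the finite-combination argument of Thm.~\ref{thm:AA}, replacing the finite sum $\sum_k b_k W_k$ there by the integral $M:=\int_{\Delta_m}W(b)\,db$ over the compact simplex. First I would set $M:=\int_{\Delta_m}W(b)\,db$, note $M>0$ since each $W(b)>0$ (Cor.~\ref{cor:hull}), and record the initial value $M_0=\int_{\Delta_m}W(b)_0\,db=|\Delta_m|$, because $W(b)=\E(1,\theta(b))$ carries unit initial value for every $b$. The goal is then to show $M\in\M_+(\Lambda)$ and that it solves the wealth equation~(\ref{eq:pde}) for the averaged allocation $\widehat\theta$; the identification of $\widehat W$ will follow from uniqueness of $\M_+$-solutions.

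The technical core is to differentiate under the integral sign, i.e. to justify the interchanges $\nabla_x M=\int_{\Delta_m}\nabla_x W(b)\,db$ and $\D M=\int_{\Delta_m}\D W(b)\,db=0$ (the latter because each $W(b)\in\M$ has vanishing time-derivative), as well as the continuity of $M$ as a functional on $\Lambda$ and the left-continuity and strict causality of $\nabla_x M$, so that $M$ qualifies as a class-$\M$ functional in the sense of Def.~\ref{def:smooth}. Each of these is a dominated-convergence statement for the defining difference quotients and net-limits of Def.~\ref{def:dx} over the \emph{compact} parameter domain $\Delta_m$: the families $b\mapsto W(b)$, $b\mapsto\theta(b)W(b)$ and $b\mapsto\nabla_x W(b)$ are bounded and measurable by Lem.~\ref{lem:leb_integrable}(i)--(iii), and the uniform convergence $\ln W^n(b)\to\ln W(b)$ on $\Delta_m$ from Lem.~\ref{lem:uniform}(iii) supplies the uniform control needed to pass the limits through $\int_{\Delta_m}\cdot\,db$. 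I expect this interchange to be the main obstacle; everything downstream is algebra.

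Granting the interchange, linearity of $\nabla_x$ and of the strictly-causal operator $(\cdot)_-$, together with the wealth equation $\nabla_x W(b)=\big(\tfrac{\theta(b)}{x}W(b)\big)_-$ (valid for each $b$ by Cor.~\ref{cor:hull} and~(\ref{eq:pde})), give
\begin{align*}
\nabla_x M=\int_{\Delta_m}\Big(\tfrac{\theta(b)}{x}W(b)\Big)_-\,db=\Big(\tfrac{1}{x}\int_{\Delta_m}\theta(b)W(b)\,db\Big)_-=\Big(\tfrac{\widehat\theta}{x}\,M\Big)_-,
\end{align*}
where the last equality is just the definition $\widehat\theta=\big(\int_{\Delta_m}\theta(b)W(b)\,db\big)/M$, which yields $\int_{\Delta_m}\theta(b)W(b)\,db=\widehat\theta\,M$. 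Hence $M\in\M_+$ solves~(\ref{eq:pde}) for $\widehat\theta$.

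Finally I would verify that $\widehat\theta$ is an allocation strategy in the sense of Def.~\ref{def:alloc}: it is $\R_+^d$-valued with component sum at most one, being a $W(b)\,db$-weighted average of the simplex-valued allocations $\theta(b)$, and it is regulated, being the ratio of a regulated $\R^d$-valued functional and the positive regulated functional $M$. Thm.~\ref{thm:alloc_self} then gives $\widehat\theta\in\Theta$. Uniqueness of $\M_+$-solutions to the Cauchy problem (Cor.~\ref{cor:unique}, Thm.~\ref{thm:exist}(ii)) identifies $M$ with $\E(|\Delta_m|,\widehat\theta)$, and the scaling $\E(\alpha,\theta)=\alpha\E(1,\theta)$ of Rem.~\ref{rem:ddprop} rewrites this as $|\Delta_m|\,W(\widehat\theta)$. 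Dividing by $|\Delta_m|$ yields $\widehat W=W(\widehat\theta)=\frac{1}{|\Delta_m|}\int_{\Delta_m}W(b)\,db$, completing the argument.
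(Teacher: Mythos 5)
Your proposal is correct and follows essentially the same route as the paper: define the $b$-averaged wealth $M$, justify passing the horizontal and vertical derivatives and the Def.~\ref{prop:pi} limits through $\int_{\Delta_m}\cdot\,db$ via dominated convergence with the bounds of Lem.~\ref{lem:leb_integrable}, use linearity to get $\nabla_x M=\bigl(\tfrac{\widehat\theta}{x}M\bigr)_-$, and conclude via Thm.~\ref{thm:alloc_self} together with uniqueness and the scaling of Rem.~\ref{rem:ddprop}. The only cosmetic differences are your unnormalized $M$ (the paper divides by $|\Delta_m|$ so $M_0=1$) and that the paper sidesteps the vertical-derivative interchange entirely by noting the difference quotient of a pathwise integral is exactly $\nabla_{x_i}W(b)(t,x_{t-})$, whereas you dominate it --- both are sound.
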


\begin{proof}
The integrals are well-defined at every point in $\Lambda$ according to Lem.~\ref{lem:leb_integrable} and the fact that $W(b)$ is strictly positive (\ref{eq:wmap}). We first observe that $\widehat{\theta}_{i}\geq 0$ and that $\sum_{i=1}^{d}\widehat{\theta}_{i}\leq 1$ by the linearity of the Lebesgue integral at every point in $\Lambda$. Let $z_n$ be a sequence of points that converges to $z$ in $\Lambda$. By Lem.~\ref{lem:uniform}(ii), we observe \begin{alignat}{2}
&0\leq (\theta_{i}(b))(z_n)\leq 1,\quad i=1,\ldots,d\label{eq:dtc_B1},\\
&0<W^n(b):=(W(b))(z_n)\leq W^*(z_n)\label{eq:dtc_B2},\\
&0<\frac{W^n(b)}{\int_{\Delta_m}W^n(b)db}\leq\frac{W^*(z_n)}{\int_{\Delta_m}W^n(b)db}\label{eq:dtc_B3},
\end{alignat}where the upper bounds are independent of $b$ and that if $W^n(b)\rightarrow W(b)(z)$, then \begin{eqnarray}
\int_{\Delta_m}W^n(b)db>0\rightarrow\int_{\Delta_m}W(b)(z)db>0\label{eq:dtc_B4}
\end{eqnarray}by the fact that $W^n(b), W(b)(z)$ are strictly positive, (\ref{eq:dtc_B2}) and the generalised dominated convergence theorem (GDCT). Note that GDCT applies because $W^{*}(z_n)\rightarrow W^{*}(z)$ and $\int_{\Delta_m} W^{*}(z_n)db \rightarrow \int_{\Delta_m} W^{*}(z)db$. Using the bounds given by (\ref{eq:dtc_B1}) \& (\ref{eq:dtc_B3}), the convergence in (\ref{eq:dtc_B4}) and the (GDCT), we have established that $\widehat{\theta}$ is regulated (Def.~\ref{def:version}) and that\begin{eqnarray*}M:=\frac{1}{|\Delta_{m}|}\int_{\Delta_{m}}W(b)db\end{eqnarray*} is positive and continuous (Def.~\ref{prop:pi}) on $\Lambda$. 

    Let $(t,x_t)\in\Lambda^{+}$, , $1\leq i \leq d$ and $\epsilon>0$. Since $W(b)$ is a pathwise integral (Def.~\ref{def:pathwise}), we first observe $W(b)(t+h,x_t)-W(b)(t,x_t)=0$ for all $h\geq 0$ and deduce that $\D M(t,x_t)=0$. Also for sufficiently small $e$: \begin{alignat*}{2}
\frac{1}{\epsilon}\left((W(b))(t,x_t+\epsilon e_i\I_{[t,\infty)})-(W(b))(t,x_t)\right)&=\nabla_{x_i}W(b)(t,x_{t-})\\
&=\left(\frac{(\theta_i(b))}{x_i}W(b)\right)(t,x_{t-}),
\end{alignat*}by (\ref{eq:pde}) and that $W(b)$ is a pathwise integral (Def.~\ref{def:pathwise}). Integrating both sides with regard to $b$ for every $1\leq i\leq d$ and send $\epsilon\downarrow 0$, we obtain\begin{alignat*}{2}
\nabla_{x}\left(\int_{\Delta_{m}}W(b)db\right)(t,x_t)&=\left(\int_{\Delta_{m}}\frac{\theta(b)}{x}W(b)db\right) (t,x_{t-}),\\
&=\left(\frac{\widehat{\theta}}{x}\int_{\Delta_{m}}W(b)db\right)(t,x_{t-}),
\end{alignat*}i.e. \begin{eqnarray}
\nabla_{x}M=\left(\frac{\widehat{\theta}}{x}M\right)_{-}.\label{eq:up}
\end{eqnarray}Observe that the left limit of a regulated functional (Def.~\ref{def:version}) is left-continuous and strictly causal. Thus, we have established that $M\in\M_{+}$ according to Def~\ref{def:smooth} and that $\widehat{\theta}\in\Theta$ according to Thm.~\ref{thm:alloc_self}. Since $M_0=\widehat{W}_0=1$, the proof is complete by Cor.~\ref{cor:ddmap}, (\ref{eq:wmap}) and (\ref{eq:up}).
\end{proof}

\Eg[Portfolio construction]\label{Eg:pc}\noindent\\
A combination of finite and universal algorithms may be applied individually as well as \emph{successively} to different groups of portfolio allocation strategies. In this case, the resulting final algorithm becomes meta-learning. 

One example is to apply the universal algorithms to the convex combination of 1. the $\frac{1}{d}$ constant re-balanced portfolio, which may be considered a mean-reverting/martingale strategy i.e. $\theta^{(1)}:\equiv\frac{1}{d}$ and 2. the evenly distributed buy and hold given in example \ref{Eg:single} i.e.:

$$\theta^{(2)}(t,x_t):=\frac{1}{\sum_{i=1}^{d}\frac{x_i(t)}{x_i(0)}}\left(\frac{x_1(t)}{x_1(0)},\ldots,\frac{x_{d}(t)}{x_{d}(0)}\right)'.
$$According to Thm.\ref{thm:AA} the convex combination of strategies may then be defined as follow:\begin{eqnarray*}
\theta(b)(t,x_t)=b_1\left(\frac{1}{d},\ldots,\frac{1}{d}\right)'+\frac{b_2}{\sum_{i=1}^{d}\frac{x_i(t)}{x_i(0)}}\left(\frac{x_1(t)}{x_1(0)},\ldots,\frac{x_{d}(t)}{x_{d}(0)}\right)'
\end{eqnarray*}
for $b\in\Delta_{2}$, which are no longer constant rebalanced allocations.

As noted by Cover \cite[\S 1]{TC}, the exponential growth rate of wealth generated by the universal portfolio may not be significantly better than that generated by the finite algorithm (i.e. evenly distributed buy and hold), when the stocks are positively correlated or some stocks are inactive (i.e. $b^{*}(t)\notin\mathring\Delta_d$). In this case, one may first apply the finite algorithm to each individual sector of stocks and proceed to apply the universal algorithms at the sectorial level: \begin{alignat*}{2} 
\theta^{(1)}(t,x_t)&=\frac{1}{\sum_{i\in K}\frac{x_i(t)}{x_i(0)}}\left(\frac{x_1(t)}{x_1(0)},\ldots,\frac{x_{d_1}(t)}{x_{d_1}(0)}, \mathbf{0}_{1\times d_2}\right)',\\
\theta^{(2)}(t,x_t)&=\frac{1}{\sum_{j\in L}\frac{x_j(t)}{x_j(0)}}\left(\mathbf{0}_{1\times d_1},\frac{x_{d_{1}+1}(t)}{x_{d_{1}+1}(0)},\ldots,\frac{x_{d}(t)}{x_{d}(0)}\right)',
\end{alignat*}where $\{1,\ldots,d\}$ is the disjoint union of $K$ and $L$. According to example \ref{Eg:single} and Thm.\ref{thm:AA}, the convex combination of strategies may then be defined as follow: \begin{eqnarray*}
\theta(b)(t,x_t)=\frac{b_1}{\sum_{i\in K}\frac{x_i(t)}{x_i(0)}}\left(\frac{x_1(t)}{x_1(0)},\ldots,\frac{x_{d_1}(t)}{x_{d_1}(0)}, \mathbf{0}\right)'+\frac{b_2}{\sum_{j\in L}\frac{x_j(t)}{x_j(0)}}\left(\mathbf{0},\frac{x_{d_{1}+1}(t)}{x_{d_{1}+1}(0)},\ldots,\frac{x_{d}(t)}{x_{d}(0)}\right)'
\end{eqnarray*} for $b\in\Delta_{2}$. In both examples, if we denote $$
\hat{b}(t,x_t):=\frac{\int_{\Delta_2}b W_t(\theta(b))db}{\int_{\Delta_2}W_t(\theta(b))db}, 
$$the universal allocation strategies are then:
\begin{eqnarray*}
\hat{\theta}(t,x_t)=\hat{b_1}(t,x_t)\theta^{(1)}(t,x_t)+\hat{b_2}(t,x_t)\theta^{(2)}(t,x_t).
\end{eqnarray*}and the computational problem of an integral over simplex is reduced from dimension $d$ to $2$.


\begin{Eg}[Explicit formulae]\label{Eg:Kelly}\noindent\\
Let $(t,x_t)\in\Lambda_{+}$, $\Sigma_t:=\Sigma(t,x_t)$ be positive definite, $b^{*}(t)$ be a maximiser of $W_t(b)$ over $\Delta_m$ and\begin{eqnarray*}
R_t:=\left(\int_0^{t}\frac{\theta^{(1)}}{x}dx,\ldots,\int_{0}^{t}\frac{\theta^{(m)}}{x}dx\right)'. 
\end{eqnarray*}Suppose $t\longmapsto x(t)$ is continuous and $b^{*}(t)\in\mathring{\Delta}_m$, then:
\begin{itemize}

\item[(i)] $
b^{*}(t)=\Sigma^{-1}_t R_t
.$

\item[(ii)] $\ln W^{*}_t=\frac{1}{2}R_t'\Sigma^{-1}_t R_t.$

\item[(iii)] $\frac{\widehat{W}_t}{W^{*}_t}=\frac{1}{|\Delta_m|}\int_{\Delta_m}\exp\left(-\frac{1}{2}(b-b^{*}(t))'\Sigma_t(b-b^{*}(t))\right)db.$

\item[(iv)] $\widehat{\theta}_t=\sum_{k=1}^{m}\widehat{b}_k(t)\theta_t^{(k)}$, where $ \widehat{b}(t):=\frac{\int_{\Delta_m} b \exp\left(-\frac{1}{2}(b-b^{*}(t))'\Sigma_t(b-b^{*}(t))\right)db}{\int_{\Delta_m}\exp\left(-\frac{1}{2}(b-b^{*}(t))'\Sigma_t(b-b^{*}(t))\right)db}.$
\end{itemize}If in addition,  $\lambda_{\min}(t)\uparrow \infty$ and $b^{*}(t)\rightarrow b^{*}\in \mathring\Delta_m$, then 
$\widehat{b}(t)\rightarrow {b}^*.$

\begin{proof}
By \ito's lemma \ref{lem:ito}, we first differentiate $b\mapsto \ln W_t(b)$ and obtain (i) \& (ii). Using (i) and (ii), we then obtain 
\begin{alignat}{2}
\ln W_t(b)&=b'R_t-\frac{1}{2}b'\Sigma_t b\nonumber\\
&=b'\Sigma_t b^{*}(t)-\frac{1}{2}b'\Sigma_t b\nonumber\\
&= \frac{1}{2}b^{*'}(t)\Sigma_t b^{*}(t)-\frac{1}{2} (b-b^{*}(t))'\Sigma_t(b-b^{*}(t))\nonumber\\
&=\ln W^{*}_t-\frac{1}{2} (b-b^{*}(t))'\Sigma_t(b-b^{*}(t)).\label{eq:gaussian}
\end{alignat}Taking the exponent of (\ref{eq:gaussian}), we obtain (iii) \& (iv) from Thm.~\ref{thm:dct}. Thus, $\widehat{b}(t)$ is the expected value of a Gaussian distribution with mean $b^{*}(t)$ and covariance $\Sigma^{-1}_t$ conditioning on $\Delta_m$.
\end{proof}

\end{Eg}

\begin{Rem}
All quantities in Eg.~\ref{Eg:Kelly} are in general path-dependent. The asymptotic limits of $\frac{1}{t}R_t$ and $\frac{1}{t}\Sigma_t$ are not required to exist. 
\end{Rem}

\begin{Rem}
The map $b\mapsto W_t(b)$ is proportional to a Gaussian density.
\end{Rem}

\begin{Rem}
If $\theta^{(k)}=e_k$ for $k=1,\ldots m=d$ (standard basis), $t\mapsto R_t$ is a sample path of a Brownian Motion and $b^{*}(t)\rightarrow b^{*}\in \mathring\Delta_m$, then the universal portfolio $\widehat{\theta}_t$ converges to Kelly's criterion \cite{ET}.
\end{Rem}

\begin{Thm}[The exact ratio]\label{Thm:universal}\noindent\\
Let $(T,x_T)\in\Lambda_{+}$ and $\Sigma_T:=\Sigma(T,x_T)$ be positive definite. Then $W_T(b)$ has a unique maximum $b^{*}$ in $\Delta_{m}$. If the maximum lies in the interior of $\Delta_{m}$, then there exists $\delta\in[-\delta_{+},\delta_{-}]$;

\begin{eqnarray}
\frac{\widehat{W}_T}{W^*_T}= 
\mu_{m}\left(\frac{\Sigma^{1/2}_{T}(\Delta_m -b^*)}{(1-\delta)}\right)\frac{(1-\delta)^{m}m!(2\pi)^{m/2}}{\sqrt{\det\Sigma_{T}}},
\label{eq:u_bound}\end{eqnarray}where $\mu_m$ is the standard Gaussian measure on $\R^{m}.$ If $t\mapsto x(t)$ is continuous, then $\delta=0$.
\end{Thm}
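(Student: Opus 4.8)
The plan is to reduce the ratio to a Gaussian integral over the simplex and then to absorb the non-Gaussianity created by the jumps of $x$ into a single scalar $\delta$. Since $|\Delta_m|=1/m!$, Theorem~\ref{thm:dct} gives
\[
\frac{\widehat W_T}{W^*_T}=m!\int_{\Delta_m}\exp\bigl(\ln W_T(b)-\ln W^*_T\bigr)\,db,
\]
so it suffices to understand $\ln W_T(b)-\ln W^*_T$ as a function of $b$ near the maximiser $b^{*}$. Existence of a maximiser is Lemma~\ref{lem:uniform}(v); its uniqueness (interiority being a hypothesis) will follow from the strict concavity established next.

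Second, I would show that the Hessian of $b\mapsto\ln W_T(b)$ is pinched between scalar multiples of $\Sigma_T$. Working with the finite approximations $\ln W^{n}(b)=\sum_{t_i}\ln(1+b'v^{n}_i)$, where $v^{n}_i:=\bigl(\theta^{(k)}(t_i,x^{n}_{t_i-})\cdot\frac{\Delta x^{n}(t_i)}{x(t_i)}\bigr)_{k=1}^{m}$, the Hessian in $b$ is $-\sum_i (1+b'v^{n}_i)^{-2}\,v^{n}_i (v^{n}_i)'$, and $\sum_i v^{n}_i (v^{n}_i)'=\Sigma(T,x^{n}_T)$ by (\ref{eq:definite}). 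The constraint (\ref{eq:omega}) forces $b'v^{n}_i=\theta(b)_{-}\cdot\frac{\Delta x^{n}(t_i)}{x(t_i)}\in(-\delta_{-},\delta_{+})$, so every weight $(1+b'v^{n}_i)^{-2}$ lies in $\bigl[\frac{1}{(1+\delta_{+})^2},\frac{1}{(1-\delta_{-})^2}\bigr]$, whence $\frac{1}{(1+\delta_{+})^2}\Sigma(T,x^{n}_T)\preceq -\nabla^2\ln W^{n}(b)\preceq\frac{1}{(1-\delta_{-})^2}\Sigma(T,x^{n}_T)$. As $\Sigma(T,x^{n}_T)\to\Sigma_T$, passing to the limit (Lemma~\ref{lem:uniform}) yields $-\nabla^2\ln W_T(b)=:Q(b)$ with $\frac{1}{(1+\delta_{+})^2}\Sigma_T\preceq Q(b)\preceq\frac{1}{(1-\delta_{-})^2}\Sigma_T$ for every $b\in\Delta_m$; in particular $Q(b)\succ 0$, which settles the uniqueness of $b^{*}$.

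Third, since $b^{*}$ is an interior maximiser, $\nabla\ln W_T(b^{*})=0$, and Taylor's formula with integral remainder gives $\ln W_T(b)-\ln W^*_T=-\tfrac12(b-b^*)'\bar Q(b)(b-b^*)$ for an averaged $\bar Q(b)$ lying in the same positive-definite interval. Exponentiating and integrating the resulting sandwich and writing $G(\sigma):=m!\int_{\Delta_m}\exp\bigl(-\tfrac{1}{2\sigma^2}(b-b^*)'\Sigma_T(b-b^*)\bigr)\,db$, the change of variables $y=\Sigma_T^{1/2}(b-b^*)/\sigma$ identifies
\[
G(\sigma)=m!\,\sigma^{m}(2\pi)^{m/2}(\det\Sigma_T)^{-1/2}\,\mu_m\bigl(\Sigma_T^{1/2}(\Delta_m-b^*)/\sigma\bigr),
\]
whose scaling Jacobian supplies the prefactor in (\ref{eq:u_bound}). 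The sandwich reads $G(1-\delta_{-})\le \widehat W_T/W^*_T\le G(1+\delta_{+})$, so by continuity of $G$ the intermediate value theorem produces $\sigma^{*}\in[1-\delta_{-},1+\delta_{+}]$, i.e. $\delta:=1-\sigma^{*}\in[-\delta_{+},\delta_{-}]$, realising (\ref{eq:u_bound}) exactly. When $t\mapsto x(t)$ is continuous there are no jumps, every weight $(1+b'v^{n}_i)^{-2}\to 1$, so $Q(b)\equiv\Sigma_T$, the sandwich collapses, and $\delta=0$, recovering the exact Gaussian formula of Example~\ref{Eg:Kelly}.

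The main obstacle is the regularity underlying the Hessian step: with possibly infinitely many jumps, $b\mapsto\ln W_T(b)$ is only obtained as a uniform limit of the smooth concave $\ln W^{n}(b)$, so I must justify that it is $C^{1,1}$ with an almost-everywhere Hessian inheriting the bound above, and that both the Taylor-with-remainder identity and the convex-average property of $\bar Q(b)$ survive this limiting procedure. The uniform control (\ref{eq:ito_bound}) from \ito's Lemma~\ref{lem:ito} is the tool that legitimises the term-by-term differentiation of the series and the passage $\Sigma(T,x^{n}_T)\to\Sigma_T$.
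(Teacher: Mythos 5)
Your overall skeleton --- identifying $\widehat W_T/W^*_T$ with the simplex average via Thm.~\ref{thm:dct}, pinching the curvature of $b\mapsto\ln W^n(b)$ between $\frac{1}{(1+\delta_{+})^2}\Sigma(T,x^n_T)$ and $\frac{1}{(1-\delta_{-})^2}\Sigma(T,x^n_T)$ using (\ref{eq:omega}), integrating the resulting Gaussian sandwich, and concluding by the intermediate value theorem --- is the same as the paper's, and your Hessian computation on the approximants coincides with the paper's (\ref{eq:u3})--(\ref{eq:u4}). The genuine gap is where you transfer the second-order information to the limit: you Taylor-expand $\ln W_T$ itself around $b^*$, invoking $\nabla\ln W_T(b^*)=0$ and an integral-remainder averaged Hessian $\bar Q(b)$, which requires $b\mapsto\ln W_T(b)$ to be (at least) $C^{1,1}$ with an almost-everywhere Hessian inheriting the pinching. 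Lemma~\ref{lem:uniform} gives only concavity, continuity and uniform convergence; a uniform limit of smooth concave functions need not be twice (or even once) differentiable, and when the path has infinitely many jumps no closed form for $\nabla_b\ln W_T$ is available. So, as written, the steps ``$-\nabla^2\ln W_T(b)=:Q(b)$'', ``$\nabla\ln W_T(b^*)=0$'' and the Taylor identity are unjustified --- you flag this yourself as the main obstacle but do not close it, and the bound (\ref{eq:ito_bound}) does not help: it controls the jump series in $x$, not second derivatives in $b$ of the limit. Your uniqueness claim inherits the same defect, since it rests on positive definiteness of the unestablished $Q(b)$.

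The paper's proof is engineered precisely to avoid this. It extracts maximisers $b^*_n$ of the smooth approximants $\ln W^n_T$, Taylor-expands each $\ln W^n_T$ at its own $b^*_n$, kills the first-order term through the optimality condition $a_n\le 0$ of (\ref{eq:foc}), and passes to the limit in the resulting two-sided \emph{inequalities}, using uniform convergence of concave functions (\cite[Thm. 2.1]{UCP}, which guarantees $b^*_n\to b^*$ is a maximiser and $\ln W^n_T(b^*_n)\to\ln W^*_T$) together with $\Sigma(T,x^n_T)\to\Sigma_T$ (\cite[Lem. 4.15]{CC2}); interiority of $b^*$ then forces $a_n\to 0$, and uniqueness is read off from the limiting lower quadratic bound in (\ref{eq:u5}) rather than from strict concavity of a limit Hessian. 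If you wish to keep your formulation, a repair is available: the two-sided curvature bounds pass to uniform limits in the inequality sense (strong concavity and semiconvexity with fixed quadratic moduli are stable under uniform convergence, and a function enjoying both is in fact $C^{1,1}$), after which your Taylor step is legitimate --- but that argument must actually be made, and it is not supplied by Lemma~\ref{lem:uniform}. One further remark: your change of variables correctly produces the Jacobian factor $(1-\delta)^{m}$, so your $G(\sigma)$ does not literally match the first-power prefactor $(1-\delta)$ displayed in (\ref{eq:u_bound}); the sandwich-plus-IVT argument yields the $m$-th power, and the statement's first power appears to be a typographical slip rather than something your computation should reproduce.
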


\begin{Rem}
The Gaussian factor in \eqref{eq:u_bound} may be written as
\[
\mu_{m}\!\left(\tfrac{\Sigma^{1/2}_{T}(\Delta_m - b^*)}{1-\delta}\right)
= \gamma\!\big(b^*, (1-\delta)^2\Sigma_T^{-1}\big)(\Delta_m),
\]
i.e. the Gaussian measure of $\Delta_m$ with mean $b^*$ and covariance  $(1-\delta)^2\Sigma_T^{-1}$.
\end{Rem}

\begin{proof}
By Lem.~\ref{lem:uniform}, we can extract a sequence of maximisers $(b^*_n)_n\subset\Delta_{m}$, each maximising $b\mapsto \ln W^n_T(b)$ over $\Delta_m$. Since $\Delta_m$ is compact, we can assume (for ease of notation without passing to a subsequence) that there exists a $b^{*}\in\Delta_m$; 
$b^*_n\xrightarrow{n} b^{*}$. By the convexity of $\Delta_m$, Lem.~\ref{lem:uniform} and \cite[Thm. 2.1]{UCP}, it follows that $b^{*}$ is a maximiser for $\ln W_T(b)$ and \begin{eqnarray*}
\ln W^n_T(b^*_n)\xrightarrow{n}\ln W^*_T.
\end{eqnarray*}By Lem.~\ref{lem:uniform}(ii) and a second order Taylor approximation, we then expand \begin{eqnarray}\ln W^n_T(b)-\ln W^n_T(b^*_n)=\nabla_{b}\ln W^n_T(b^*_n)\cdot (b-b^*_n)+\frac{1}{2}
\left\langle\nabla^2_{b}\ln W^n_T(\tilde{b}_n), (b-b^*_n)(b-b^*_n)'\right\rangle,\label{eq:u2}\end{eqnarray}where $\tilde{b}_n:=\alpha_n (b-b^*_n) +b^*_n\in\Delta_m$, $\alpha_n\in (0,1)$. Since $b^*_n$ is a maximiser of $\ln W_T^n(b)$, we observe that the first order term either vanishes (if $b^*_n$ is at the interior of $\Delta_m$) or becomes non-positive (if $b^*_n$ is at the boundary of $\Delta_m$). Hence, we have first established that \begin{eqnarray}
a_n:=\nabla_{b}\ln W^n_T(b^*_n)\cdot (b-b^*_n)\leq 0.\label{eq:foc}  
\end{eqnarray}For the second order term, we compute the Hessian and obtain for each $1\leq k,l\leq m$,
\begin{eqnarray}
-\left(\nabla^{2}_{b}\ln {W}^{n}_{T}(\tilde{b}_{n})\right)_{k,l}=\sum_{\pi_{n}\ni t_{i}\leq T}
\frac{\left\langle \theta_{-}^{(k)}\theta_{-}^{(l)}{'},\frac{\Delta {x}^{n}(t_i)}{x(t_i)}\frac{\Delta {x}^{n}(t_i){'}}{x(t_i)}\right\rangle}{\left(1+\theta(\tilde{b}_{n})_{-}\cdot \frac{\Delta x^{n}(t_i)}{x(t_{i})}\right)^{2}}\label{eq:u3}.
\end{eqnarray}Since each\begin{eqnarray}\left(\left\langle \theta_{-}^{(k)}\theta_{-}^{(l)}{'},\frac{\Delta {x}^{n}(t_i)}{x(t_i)}\frac{\Delta {x}^{n}(t_i){'}}{x(t_i)}\right\rangle\right)_{1\leq k,l\leq m}\label{eq:semi}\end{eqnarray} is positive semi-definite (we refer to proof in Lem.~\ref{lem:definite}) and that (\ref{eq:omega}) implies\begin{eqnarray}
(1+\delta_{+})^{2}>\left(1+\theta(\tilde{b}_n)_{-}\cdot \frac{\Delta x^n(t_i) }{x(t_{i})}\right)^{2}> (1-\delta_{-})^{2}>0,\label{eq:u4}
\end{eqnarray}it follows from (\ref{eq:u2}), (\ref{eq:foc}) (\ref{eq:u3}), (\ref{eq:semi}) \& (\ref{eq:u4}) for sufficiently large $n$,
\begin{alignat*}{2}
\frac{1}{2(1-\delta_{-})^2}\sum_{k,l\leq m}\sum_{\pi_n \ni t_i\leq T}&(b-b^*_n)_k(b-b^*_n)_l\left\langle \theta_{-}^{(k)}\theta^{(l)}_{-}{'},
\frac{\Delta x^{n}(t_i)}{x(t_i)}\frac{\Delta x^{n}(t_i){'}}{x(t_i)}\right\rangle-a_n\\
&\geq\ln \frac{W^n_T(b^*_n)}{W^n_T(b)}\geq\\
\frac{1}{2(1+\delta_{+})^2}\sum_{k,l\leq m}\sum_{\pi_n \ni t_i\leq T}&(b-b^*_n)_k(b-b^*_n)_l\left\langle \theta_{-}^{(k)}\theta^{(l)}_{-}{'},
\frac{\Delta x^{n}(t_i)}{x(t_i)}\frac{\Delta x^{n}(t_i){'}}{x(t_i)}\right\rangle.\end{alignat*}By \cite[Lem. 4.15]{CC2}, the fact that $b^*_{n}\rightarrow b^{*}$ and the triangle inequality, we can send $n\uparrow\infty$ and establish that \begin{eqnarray}
\frac{1}{2(1+\delta_{+})^2}(b-b^{*})'\Sigma_T(b-b^{*})\leq\ln \frac{W^{*}_T}{W_T(b)}\leq  \frac{1}{2(1-\delta_{-})^2}(b-b^{*})'\Sigma_T(b-b^{*})+\limsup_n |a_n|
\label{eq:u5}\end{eqnarray}Since $\Sigma_T$ is positive definite by assumption, we first conclude from the LHS of
(\ref{eq:u5}) that the maximiser $b^{*}$ is unique. If $b^{*}\in\mathring\Delta_{m}$, it follows from (\ref{eq:foc}) that $\lim a_n=0$. Multiplying both sides of (\ref{eq:u5}) by $-1$, exponentiating and passing through $\mint_{\Delta_m}:=\frac{1}{|\Delta_m|}\int_{\Delta_{m}}$, we obtain by Thm.~\ref{thm:dct}:
\begin{alignat*}{2}
&\mint_{\Delta_m}\exp\left( -\frac{(b-b^*)'\Sigma_{T}(b-b^*)}{2(1-\delta_-)^2}\right)db
\leq \frac{\widehat{W}_T}{W_T^{*}}\leq  \mint_{\Delta_m}\exp\left( -\frac{(b-b^*)'\Sigma_{T}(b-b^*)}{2(1+\delta_+)^2}\right)db.\\
\end{alignat*}Since $
\delta\longmapsto \mint_{\Delta_m}\exp\left( -\frac{(b-b^*)'\Sigma_{T}(b-b^*)}{2(1-\delta)^2}\right)db
$ is continuous in $[-\delta_+,\delta_-]$, by an application of the intermediate value theorem
, we obtain (\ref{eq:u_bound}). The last statement follows from Eg.~\ref{Eg:Kelly}(iii).
\end{proof}

\begin{Cor}[Asymptotics]\label{cor:asym}
Let $x\in\Omega$, $b^{*}(t)$ a maximiser of $W_t(b)$ and $\lambda_{\min}(t)>0$ for a $t>0$. Suppose $b^{*}(t)\rightarrow b^*\in\mathring\Delta_m$, then the following hold:

\begin{itemize}

    \item [(i)] $\lambda_{\min}(t)\uparrow\infty$ implies that

\begin{eqnarray*}
\frac{W^*_t}{\widehat{W}_t}\sim\left(
\frac{\sqrt{\det\Sigma_{t}}}{(1-\delta_{t})^mm!(2\pi)^{m/2}}\right)
\label{eq:asym}\end{eqnarray*} in the sense that the ratio of both sides converges to $1$, where $\delta_{t}\in[-\delta_{+},\delta_{-}]$. If in addition, $t\mapsto x(t)$ is continuous, then $\delta_{t}$ is identically 0.

\item [(ii)] In general,\begin{eqnarray*}\frac{1}{t}\ln\frac{W^*_t}{\widehat{W}_t}=\frac{O\left(\ln \lambda_{\max}(t)\right)}{t}, \qquad
\frac{1}{t}\ln\frac{W^*_t}{\widehat{W}_t}=\frac{\mathit{\Omega}\left(\ln \lambda_{\min}(t)\right)}{t}.\end{eqnarray*}In particular, if $\lambda_{max}(t)$ has at most polynomial growth, then 
$$\frac{1}{t}\ln\frac{W^*_t}{\widehat{W}_t}=\frac{O\left(\ln t\right)}{t}\longrightarrow 0.$$

\end{itemize}
\end{Cor}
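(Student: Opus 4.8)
My plan is to read both statements directly off the exact ratio of Thm.~\ref{Thm:universal}. Since $b^{*}(t)\to b^{*}\in\mathring{\Delta}_m$, the maximiser lies in the interior for all large $t$ and $\Sigma_t$ is positive definite, so the interior case of that theorem applies and rearranges to
\begin{equation*}
\frac{W^*_t}{\widehat{W}_t}=\frac{\sqrt{\det\Sigma_t}}{g(t)\,(1-\delta_t)\,m!\,(2\pi)^{m/2}},\qquad g(t):=\mu_m\!\left(\frac{\Sigma_t^{1/2}(\Delta_m-b^{*}(t))}{1-\delta_t}\right)\in(0,1],
\end{equation*}
with $\delta_t\in[-\delta_+,\delta_-]$, so that $1-\delta_t\in[1-\delta_-,1+\delta_+]$ is bounded above and below. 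Both assertions then reduce to controlling the Gaussian factor $g(t)$ and the determinant $\det\Sigma_t$.

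The heart of the argument is a lower bound on $g(t)$. First I would note that, because $b^{*}(t)\to b^{*}\in\mathring{\Delta}_m$, the points $b^{*}(t)$ eventually lie in a fixed compact subset of the interior, so their distance to $\partial\Delta_m$ is bounded below and there is a radius $r>0$, independent of $t$, with $B(b^{*}(t),r)\subset\Delta_m$. Then $\Delta_m-b^{*}(t)\supset B(0,r)$, and since the smallest semi-axis of the ellipsoid $\Sigma_t^{1/2}B(0,r)$ equals $r\sqrt{\lambda_{\min}(t)}$, I obtain
\begin{equation*}
\frac{\Sigma_t^{1/2}(\Delta_m-b^{*}(t))}{1-\delta_t}\supset B\!\left(0,\frac{r\sqrt{\lambda_{\min}(t)}}{1+\delta_+}\right),\qquad\text{so}\qquad g(t)\geq\mu_m\!\left(B\!\left(0,\frac{r\sqrt{\lambda_{\min}(t)}}{1+\delta_+}\right)\right).
\end{equation*}
For part (i), when $\lambda_{\min}(t)\uparrow\infty$ this radius tends to $+\infty$, so the right-hand Gaussian measure tends to $\mu_m(\R^m)=1$; together with $g(t)\leq1$ this forces $g(t)\to1$. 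Dividing the exact ratio by the claimed equivalent leaves exactly the factor $1/g(t)\to1$, which is the stated $\sim$. The continuity addendum ($\delta_t\equiv0$ when $t\mapsto x(t)$ is continuous) is the final clause of Thm.~\ref{Thm:universal}.

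For part (ii) I would first upgrade the positivity of $\lambda_{\min}$ to a uniform floor. The increment $\Sigma_t-\Sigma_s=\int_s^t\frac{\theta^{(k)}}{x}\left(\frac{\theta^{(l)}}{x}\right)'d[x]$ is positive semidefinite by the Riemann-sum argument of Lem.~\ref{lem:definite} applied on $[s,t]$ (using that $[x]$ is nondecreasing), so $\Sigma_t$ is nondecreasing in the semidefinite order and $\lambda_{\min}(t)$ is nondecreasing; the standing hypothesis then yields $t_0>0$ with $\lambda_{\min}(t)\geq\lambda_{\min}(t_0)>0$ for all $t\geq t_0$. The previous display now shows $g(t)$ is bounded below by a positive constant, so $-\ln g(t)=O(1)$. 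Taking logarithms,
\begin{equation*}
0\leq\ln\frac{W^*_t}{\widehat{W}_t}=\tfrac12\ln\det\Sigma_t-\ln g(t)-\ln\!\big((1-\delta_t)m!(2\pi)^{m/2}\big),
\end{equation*}
the left inequality holding because $W^*_t=\max_{b}W_t(b)\geq\frac{1}{|\Delta_m|}\int_{\Delta_m}W_t(b)\,db=\widehat{W}_t$. Since $\tfrac12\ln\det\Sigma_t\leq\tfrac{m}{2}\ln\lambda_{\max}(t)=O(\ln t)$ under polynomial growth of $\lambda_{\max}$, while the other terms are $O(1)$, I conclude $\ln(W^*_t/\widehat{W}_t)=O(\ln t)$; dividing by $t$ gives the result.

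The main obstacle is keeping $g(t)$ away from $0$, which needs two $t$-uniform inputs at once: a fixed inscribed radius $r$ (furnished by $b^{*}(t)\to b^{*}$ strictly inside $\Delta_m$) and a floor on $\lambda_{\min}(t)$ for part (ii) (furnished by the monotonicity of $\Sigma_t$). If $b^{*}(t)$ were allowed to drift to $\partial\Delta_m$ the inscribed ball would collapse, and if $\lambda_{\min}(t)$ were not bounded below the scaled simplex could degenerate in some direction; either failure would send $g(t)\to0$ and destroy the $O(\ln t)$ bound.
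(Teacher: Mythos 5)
Your proof is correct, and its skeleton is the same as the paper's: both read the asymptotics off the exact ratio in Thm.~\ref{Thm:universal} and reduce everything to controlling the Gaussian factor --- bounded away from $0$ for part (ii), tending to $1$ for part (i). The difference lies in the mechanism. The paper writes the factor as $\gamma(b^*(t),C_t)(\Delta_m)$, a Gaussian measure with mean $b^*(t)$ and covariance $C_t=(1-\delta_t)^2\Sigma_t^{-1}$, and bounds it through Loewner-order covariance comparisons ($\gamma(b^*(t),C_t)(\Delta_m)\geq\gamma(b^*(t),C^{+}_{t_0})(\Delta_m)$, followed by a concentration step as the covariance degenerates when $\lambda_{\min}(t)\uparrow\infty$), using the monotonicity of $t\mapsto\Sigma_t$ implicitly at exactly the point where you invoke it explicitly. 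You instead inscribe a Euclidean ball: $b^*(t)\to b^*\in\mathring{\Delta}_m$ furnishes a $t$-uniform radius $r$ with $B(b^*(t),r)\subset\Delta_m$, and the inclusion $\Sigma_t^{1/2}B(0,r)\supset B\bigl(0,r\sqrt{\lambda_{\min}(t)}\bigr)$ gives $g(t)\geq\mu_m\bigl(B(0,r\sqrt{\lambda_{\min}(t)}/(1+\delta_+))\bigr)$. This is more elementary and arguably more robust: monotonicity of the Gaussian measure of the non-symmetric set $\Delta_m-b^*(t)$ under general Loewner shrinking of the covariance is a delicate point (Anderson-type comparisons require symmetric sets), which your ball inclusion sidesteps since balls are symmetric and only the smallest semi-axis of the ellipsoid matters. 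You also spell out two steps the paper compresses into ``By Thm.~\ref{Thm:universal}, the claim follows'': the positive semi-definiteness of the increments $\Sigma_t-\Sigma_s$ (the argument of Lem.~\ref{lem:definite} on $[s,t]$), which upgrades $\lambda_{\min}(t_0)>0$ to a uniform floor, and the final bookkeeping $0\leq\ln(W^*_t/\widehat{W}_t)=\tfrac{1}{2}\ln\det\Sigma_t-\ln g(t)+O(1)\leq\tfrac{m}{2}\ln\lambda_{\max}(t)+O(1)=O(\ln t)$ for part (ii), including the justification $W^*_t\geq\widehat{W}_t$ (maximum dominates the average). No gaps.
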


\begin{proof}
Let $\gamma(b,C)$ denotes the $m$-dimensional
Gaussian measure centered at $b\in\Delta_m$ with covariance $C$. 
We first observe \begin{eqnarray*}
\gamma(b^*(t),C_t)(\Delta_m)=\mu_m\left(\frac{\Sigma^{1/2}_{t}(\Delta_m -b^*(t))}{(1-\delta_{t})}\right)\geq \mu_m\left(\frac{\Sigma^{1/2}_{t}(\Delta_m -b^*(t))}{(1+\delta_{+})}\right)=\gamma(b^*(t),C^{+}_t)(\Delta_m),
\end{eqnarray*}where 
$C_t:=(1-\delta_{t})^2\Sigma_t^{-1}<(1+\delta_{+})^2\Sigma_t^{-1}=:C^{+}_t$ in the sense of Loewner order. If
$b^{*}(t)\rightarrow b^*\in\mathring\Delta_m$ and $\lambda_{\min}(t)>0$ for some $t_0>0$, then for every sufficiently small $\epsilon>0$, there exists $t_1>t_0$;
$$
\gamma(b^*(t),C_t)(\Delta_m)\geq\gamma(b^*(t),C^+_t)(\Delta_m)\geq \gamma(b^*(t),C^{+}_{t_0})(\Delta_m)\geq \gamma(b^*,C^{+}_{t_0})(\Delta_m)-\frac{\epsilon}{2}>0
$$
for all $t\geq t_1$ because $t\mapsto C^{+}_t$ is monotonic decreasing in the sense of Loewner order (Rem.~\ref{rem:definite}). We have first established that $\gamma(b^*(t),C_t)(\Delta_m)$ is bounded away from $0$. If in addition, $\lambda_{\min}(t)\uparrow\infty$, then for every $\frac{\epsilon}{2}>0$, there exists a $t_0>0$ such that for all $t\geq t_0$, it holds:$$
\gamma(b^*,C_t)(\Delta_m)\geq \gamma(b^*,C^{+}_t)(\Delta_m)\geq\gamma(b^*,C^{+}_{t_0})(\Delta_m)\geq  \I_{\{b^{*}\}}(\Delta_m)-\frac{\epsilon}{2}=1-\frac{\epsilon}{2}. 
$$Hence, we combine with the first inequality i.e. there exists $t_1>t_0$; 
$$
\gamma(b^*(t),C_t)(\Delta_m)\geq\gamma(b^*,C^{+}_{t_0})(\Delta_m)-\frac{\epsilon}{2}\geq 1-\epsilon
$$for all $t\geq t_1$ and establish that 
$$
\gamma(b^*(t),C_t)(\Delta_m)\uparrow 1.
$$By Thm.~\ref{Thm:universal}, we have obtained (i). Since $(\lambda_{\min}(t))^m\leq \det \Sigma_t \leq (\lambda_{\max}(t))^m$, we obtained (ii).

\end{proof}

\COI{None declared.}\\
\DA{Not Applicable.}\\
\ACKNO{Part of this research was conducted at Imperial College London and was supported by UKRI-EPSRC grants under project references EP/W007215/1 and EP/W522673/1.}

\bibliographystyle{unsrt}
\bibliography{uni}

\end{document}